\documentclass[11pt]{article}

\usepackage[round,authoryear]{natbib}
\usepackage[margin=2.5cm]{geometry}
\usepackage{amsmath,amssymb,amsthm}
\usepackage{array}
\usepackage{microtype}
\usepackage{xcolor}
\usepackage{tcolorbox}
\usepackage[colorlinks=true,linkcolor=blue!55!black,citecolor=blue!55!black,urlcolor=blue!55!black]{hyperref}

\hypersetup{
  pdftitle={Sharper Gaussian Covers in the Bansal--Huang--Lee Coloring Framework},
  pdfsubject={Approximation algorithms, Gaussian inequalities, and graph coloring}
}

\newtheorem{theorem}{Theorem}[section]
\newtheorem{lemma}[theorem]{Lemma}
\newtheorem{proposition}[theorem]{Proposition}

\theoremstyle{definition}
\newtheorem{definition}[theorem]{Definition}
\theoremstyle{remark}
\newtheorem{remark}[theorem]{Remark}

\newtcolorbox{mainresultbox}{
  colback=black!4,
  colframe=black!45,
  boxrule=0.6pt,
  arc=1mm,
  left=5pt,
  right=5pt,
  top=4pt,
  bottom=4pt
}

\renewcommand{\Pr}{\mathbb P}
\newcommand{\tail}{\overline\Phi}

\title{Sharper Gaussian Covers in the\\
Bansal--Huang--Lee Coloring Framework}
\author{Konstantin Makarychev\\Northwestern University \and Yury Makarychev\\TTIC}
\date{}

\begin{document}
\maketitle

\begin{abstract}
\citet*{BHL26} recently gave a polynomial-time algorithm that colors every
$3$-colorable graph on $n$ vertices with $O(n^{0.19539})$ colors.  We
improve their bound to $O(n^{0.17794})$ colors.

The improvement comes from a sharper rule for combining Gaussian covers.
We prove the rule from Ehrhard's inequality by comparing the probability
that a Gaussian vector misses a polyhedron at two thresholds.  At the
farther threshold a union bound records how many halfspaces define the
polyhedron, and this information reduces the loss in the combination.  We
then organize the successive neighborhood steps of the Bansal--Huang--Lee
argument into a single recursion.  The smaller loss lets the recursion run
one step farther than before, and at that step it would require more
distinct vertices than the graph contains.  This rules out the remaining
case.  Combining the resulting bounded-degree guarantee with the
dense-graph algorithm of \citet*{KTY24} gives a randomized polynomial-time
algorithm that colors every $3$-colorable graph on $n$ vertices with
$O(n^{0.17794})$ colors.
\end{abstract}

\paragraph{AI Use Acknowledgment}
The proof was developed through discussions with ChatGPT (GPT-5.6 Pro). It was subsequently revised and polished by the authors. The authors take full responsibility for the correctness of the proof.

\section{Introduction}

We design a polynomial-time approximation algorithm for graph $3$-coloring.
Given an $n$-vertex graph promised to be $3$-colorable, our algorithm finds a
proper coloring using $O(n^{0.17794})$ colors.

\citet*{Karp72} proved that the general chromatic-number decision problem is
NP-complete; in that problem, the allowed number of colors is part of the
instance.  A proper $k$-coloring of a graph assigns one of $k$ colors to each
vertex so that the endpoints of every edge receive different colors.
A graph is $k$-colorable if such an assignment exists.  The exact
$3$-coloring problem asks whether a given graph is $3$-colorable.  Here we
study the approximation variant: assuming that the graph is $3$-colorable,
find a proper coloring using as few colors as possible.

\citet*{Stockmeyer73} subsequently proved that the problem remains
NP-complete when the number of colors is fixed to three, even for planar
graphs.  \citet*{GJS74} strengthened this result to planar graphs of maximum
degree at most four.
\citet*{KLS00} proved that coloring a $3$-colorable graph with four colors
remains NP-hard, and \citet*{BBKO21} later proved the same for five colors.
Thus, unless $\mathbf P=\mathbf{NP}$, no polynomial-time algorithm can
properly color every $3$-colorable graph using only five colors.  In contrast,
the best known algorithms still require a polynomial number of colors.  This
large gap has motivated a long sequence of increasingly geometric ideas.

\citet*{Wig83} provided the first nontrivial approximation algorithm, giving
a combinatorial method that uses $O(n^{1/2})$ colors.  \citet*{BR90} gave a proof of an
$O(\sqrt{n/\log n})$-coloring algorithm; as they noted, the same
improvement was obtained independently, using different methods, by
Linial, Saks, and Wigderson and by Raghavan.  \citet*{Blum94} obtained algorithms using
$\widetilde O(n^{2/5})$ and $\widetilde O(n^{3/8})$ colors by examining
neighborhoods and common neighborhoods to find large independent sets or
directly color substantial parts of the graph.  \citet*{KMS98} introduced a
vector-coloring semidefinite programming (SDP) relaxation, closely related to
the Lov\'asz theta function of \citet*{Lovasz79}, and showed how to round this
relaxation to color $3$-colorable graphs.  Combined with the degree reduction
of \citet*{Wig83}, their
approach gives $O(n^{1/4}\sqrt{\log n})$ colors.  \citet*{BK97} showed how to
combine the combinatorial and semidefinite approaches to achieve
$\widetilde O(n^{3/14})$ colors.  Here $\widetilde O$ suppresses
polylogarithmic factors.

A major conceptual advance came from \citet*{ACC06}.  In addition to improving
the bound to $\widetilde O(n^{0.2111})$, they introduced a new way to exploit
the case in which KMS rounding alone makes little progress.  \citet*{Chlamtac07}
developed these ideas further and improved the bound to
$\widetilde O(n^{0.2072})$.  This line of work showed that the geometry of
second-level neighborhoods contains additional structure that can still be
used algorithmically.  This insight is a central precursor to the present
paper.  \citet*{KT17} then
broke the $n^{1/5}$ barrier, showing how to color a $3$-colorable graph with
$\widetilde O(n^{0.19996})$ colors.  In subsequent work, \citet*{KTY24}
achieved $\widetilde O(n^{0.19747})$.  Very recently, \citet*{BHL26} obtained
$O(n^{0.19539})$ colors.  Our main result improves this bound as follows.

\begin{mainresultbox}
\begin{theorem}
\label{thm:final-coloring}
There is a randomized polynomial-time algorithm that properly colors every
$3$-colorable graph on $n$ vertices using $O(n^{0.17794})$ colors.
\end{theorem}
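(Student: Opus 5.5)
The plan follows the standard two-regime structure used by \citet*{KT17}, \citet*{KTY24} and \citet*{BHL26}: a degree threshold $\Delta=n^{\delta}$ separates a sparse-graph algorithm from a dense-graph algorithm, and the exponent $0.17794$ comes from balancing the two. \emph{Dense regime.} If some vertex neighborhood or second neighborhood is large relative to $\Delta$, we invoke the dense-graph machinery of \citet*{KTY24} as a black box. Applied with minimum degree about $n^{\delta}$, it either colors a constant fraction of the vertices with $n^{\gamma(\delta)}$ colors or finds a large independent set. Repeating this and removing colored vertices costs only a logarithmic factor, which the $O(\cdot)$ absorbs once the exponents are chosen with slack. \emph{Sparse regime.} When the maximum degree is at most $\Delta$, we will prove a new bounded-degree guarantee: the SDP vector $3$-coloring combined with Gaussian (KMS-type) rounding yields an independent set of size $n/\Delta^{c}$ for an improved constant $c$, hence a coloring with roughly $\Delta^{c}$ colors. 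The final theorem then follows by choosing $\delta$ so that both regimes give at most $n^{0.17794}$ colors.

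The heart of the argument is the bounded-degree guarantee. We first prove the sharper Gaussian cover combination rule. Let $K=\bigcap_{i\le m}\{x:\langle a_i,x\rangle\le t\}$ be a polyhedron cut out by $m$ halfspaces. Ehrhard's inequality makes $\Phi^{-1}(\Pr[g\in sK])$ concave in the scaling parameter $s$. We compare the miss probability $1-\Pr[g\in sK]$ at two thresholds. At the farther threshold the union bound gives $1-\Pr[g\in sK]\le m\,\tail(st)$. Concavity then transfers this bound back to the nearer threshold, and the resulting loss depends on $\log m$ rather than being the worst case. Next we follow BHL's successive neighborhood steps $N(v),N^2(v),\dots$, defining at step $j$ a Gaussian cover $K_j$ whose defining halfspace count is controlled by degrees. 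Applying the combination rule repeatedly gives a single recursion of the form $\theta_{j+1}\ge f(\theta_j,\log m_j)$ for the effective threshold. We will show two facts about this recursion. First, if it stalls before step $k$, some step yields the desired independent set. Second, if it survives to step $k$, which is one step beyond BHL thanks to the smaller loss, then it certifies more than $n$ distinct vertices. The latter is a contradiction, so the stalling case must occur.

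The main obstacle is the last step: making the recursion quantitatively run one extra step. This requires tracking constants through Ehrhard's inequality precisely enough that the $\log m$ savings beats the accumulated losses. It also requires verifying that the vertices counted at successive levels are genuinely distinct, so that the cardinality contradiction holds. My first attempt will be to parametrize everything by exponents of $\Delta$, reduce the recursion to an explicit one-variable inequality, and check numerically that the extra step is feasible at $\delta$ matching $0.17794$. Finally, we balance this against the \citet*{KTY24} dense bound and combine the two regimes by the standard iterative coloring argument. This yields a randomized polynomial-time algorithm using $O(n^{0.17794})$ colors.
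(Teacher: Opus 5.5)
\textbf{There is a genuine gap.} Your outline has the right overall shape: an Ehrhard secant bound anchored by a union bound, a recursion along walks, and a cardinality contradiction one level beyond \citet*{BHL26}. But the step ``if the recursion stalls before step $k$, some step yields the desired independent set'' has no mechanism behind it, and the ingredients you list cannot supply one.

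The composition step needs a bound on the number of \emph{outer} vectors, i.e.\ the current walk endpoints. The union bound over outer indices produces the factor $|S^\circ|\,\tail(\Lambda t)\leq\exp(\Sigma t^2/2+o(t^2))$, with surplus $\Sigma=\mu-\Lambda^2$, and the inner miss probability must beat it. Degrees control the inner families and the level-one endpoints. The degree bound $D^2$ on level-two endpoints, however, gives a surplus far larger than the argument can afford.

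The idea you are missing is a round-or-continue split. For every root and every choice of correlation buckets, list all endpoints of walks of length two or three. If the list exceeds a cutoff $\exp((\mu+\omega)t^2/2)$, run KMS rounding on it with a \emph{conditional} vector coloring: nearly a $2$-coloring at level two, where the root correlation is at least $1/4-o(1)$, and a $(3-6h)/(1-4h)$-coloring at level three when $h\leq0$. If the list is short, its size supplies the $\mu$ the next step needs. These conditional colorings are built from the three-round Sum-of-Squares relaxation \citep[Lemmas~3.2 and~3.4]{BHL26}, not from the basic vector-coloring SDP you propose. With the strict vector $3$-coloring alone, rounding a level-two list loses $D^{1/3}$, so the cutoff must rise by that factor. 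The enlarged surplus then wipes out the fourth-level count, whose certified margin is only about $10^{-6}$.

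You also never say where the covers come from. They exist only when the modified KMS rounding at the \emph{lowered} threshold $\tail(t)=D^{-1/[3(1+\widehat c)]}$ fails. In that case matched edges certify $\langle g,v_{ij}\rangle\geq\sqrt3t$, and the packing theorems of \citet*{BHL26} prune these into packings along walks whose correlations lie in the ranges the recursion needs. This is also why your sparse guarantee ``an independent set of size $n/\Delta^c$'' is misstated. A successful rounding gives $n\tail(t)$, while a failed one gives $D^{A_0-o(1)}$. The latter reaches $n^{1-\widehat\alpha}$, and the contradiction $|M^\circ|>n$ holds, only because $D$ is tied to $n$ by $n=D^{\Xi(\widehat c)}$.

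\textbf{Further points to repair.}

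\emph{Dense side.} The theorem of \citet*{KTY24} requires minimum degree above $n^{1/2}$, so splitting on whether \emph{some} neighborhood is large does not produce its hypothesis. You need the dense--sparse combination theorem \citep{KT17,BHL26}. It asks for progress separately when $\Delta(G)\leq n^\beta$ and when the minimum degree is at least $n^\beta$, where progress means a large independent set, an independent set $S$ with $|N(S)|$ small, or a pair of vertices forced to share a color.

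\emph{The recursion is not one-dimensional.} Each step tracks the root correlation $h$, the threshold coefficient $\Lambda$, and the cardinality exponent $\mu$. The room $\theta=c-(1+c)\eta^2/r^2$ comes from projecting next-edge directions orthogonally to $\operatorname{span}\{v_i,v_p\}$, and this projection is also what makes the inner miss event independent of the outer crossing. Bucketing fixes the correlations $(\beta,\gamma,\xi)$ only up to $o(1)$, so the inequalities must hold uniformly on a three-dimensional box; the paper checks this by interval arithmetic.

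\emph{The secant.} In the two-threshold rule, the secant passes through the base threshold $S$, where the cover hits with probability at least $\delta$, and through the union-bound threshold $(1+\theta)S$. It then bounds the miss probability at the \emph{lowered} threshold $S-a$, not at a threshold between the two anchors.

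\emph{Distinctness.} This is handled by design. The cover vectors $z_x=(v_x-h_xv_i)/\sqrt{1-h_x^2}$ depend only on the endpoint $x$, so the fourth-level cover is automatically a family of distinct vertices. A cover at threshold $\Lambda_4t$ with subexponential probability needs at least $\exp(\Lambda_4^2t^2/2-o(t^2))$ of them.
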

\end{mainresultbox}

\paragraph{Where the improvement comes from.}
We first recall the vector-coloring approach of \citet*{KMS98}.  The KMS
algorithm solves an SDP that assigns a unit vector to every vertex so that
the vectors of adjacent vertices have inner product at most $-1/2$.  The
intended solution assigns each vertex one of three unit vectors forming a
regular triangle, one vector for each color.  A solution returned by an SDP
solver, however, can look very different from this intended solution.

The KMS algorithm chooses a random Gaussian vector $g$ and a threshold $t$
that depends on the maximum degree of the graph.  It selects the vertices
whose vectors have inner product at least $t$ with $g$, and then discards
every selected vertex that has a selected neighbor.  The surviving vertices
form an independent set.  The algorithm assigns a new color to this set,
removes it from the graph, and repeats.  Sufficiently large independent sets
therefore lead to a valid coloring.

Later work asked what can be learned when this rounding does not produce a
large independent set.  \citet*{ACC06} and, subsequently,
\citet*{Chlamtac07,Chlamtac09} showed that the discarded vertices are not
merely a loss.  Their presence forces a useful geometric pattern among the
vertices reached by two-edge walks.  This pattern is described using
Gaussian covers: a collection of vectors is a Gaussian cover if a random
Gaussian vector has a noticeable chance of having a large inner product with
at least one vector in the collection.  When the rounding discards many
vertices, the directions from many vertices toward their neighbors form
such covers, and these covers can be combined along walks to find a large
independent set farther away.  \citet*{BHL26} sharpened this machinery,
carried it from two-edge walks to three-edge walks, and used a stronger SDP
relaxation that retains information about small groups of vertices.
Combining the result with the dense-graph routine of \citet*{KTY24},
through the framework of \citet*{BK97} and \citet*{KT17}, gave their
$O(n^{0.19539})$ bound.

Our contribution has two parts.  The first is a sharper rule for combining
two Gaussian covers, Theorem~\ref{thm:two-threshold}.  It is a variant of
Theorem~3.4.4 of \citet*{Chlamtac09}, restated as Lemma~4.14 of
\citet*{BHL26}.  Like those results, it combines an outer cover with an
inner cover attached to each outer vector.  The earlier rule estimated the
chance that an inner cover fails using only its probability at the original
threshold.  Our rule also looks at a second, farther-out threshold, where a
union bound remembers how many vectors the inner cover contains and how long
they are.  Matching this estimate to the number of outer vectors gives a
smaller loss in the combined threshold.  We prove the rule in full from
Ehrhard's inequality in Section~\ref{sec:two-threshold}.

The second part is organizational.  We write the successive neighborhood
steps as a single recursion that tracks three numbers: the geometry of the
current endpoints, the threshold of their cover, and an upper bound on how
many of them there are.  The smaller loss lets this recursion run to a
fourth neighborhood level.  At levels two and three, the algorithm compares
the number of reachable endpoints with a cutoff fixed in advance.  Above the
cutoff, KMS rounding yields the desired independent set.  Below it, the
size bound feeds the next step of the recursion.  If both levels are below
their cutoffs, the fourth-level cover would need more than $n$ distinct
endpoints, which is impossible.  This contradiction closes the last case
without another rounding routine.

\paragraph{Related work.}
Ehrhard's inequality is a Gaussian analogue of the Brunn--Minkowski inequality
and is closely tied to Gaussian isoperimetry.  It has also found several uses
in the study of Gaussian random variables.  \citet*{Bobkov08} observed that
Ehrhard concavity characterizes the possible distributions of suprema of
Gaussian processes.  \citet*{PV18} used a
tangent-line argument to obtain variance-sensitive lower-tail bounds for
convex Gaussian functionals, with applications to small-ball probabilities
for Gaussian processes.  Building on this viewpoint, \citet*{Valettas19}
used Ehrhard-based estimates to study when the usual Gaussian concentration
bound for a convex function is tight.  \citet*{Chen26} used a chord argument
based on Ehrhard concavity to obtain a sharp upper moderate-deviation bound
for Gaussian maxima.  This argument compares probabilities at two thresholds
to control how quickly the probability can deteriorate as the threshold
changes.

\paragraph{Organization.}
The paper has two parts.  Section~\ref{sec:overview} is an extended
overview.  It explains every ingredient informally, with the computations
that show where the numbers come from, and it can be read on its own.
Sections~\ref{sec:preliminaries}--\ref{sec:coloring-proof} contain the
formal proof.  Section~\ref{sec:preliminaries} fixes notation and states
the results quoted from earlier work.  Section~\ref{sec:two-threshold}
proves the two-threshold inequality.  Section~\ref{sec:recursive-step}
proves the one-step recursion.  Section~\ref{sec:structural-input} shows
what a failed rounding gives us to start the recursion.
Section~\ref{sec:three-levels} runs the recursion through the three
transitions and proves the bounded-degree guarantee, and
Section~\ref{sec:coloring-proof} combines it with the dense-graph algorithm.
Appendix~\ref{sec:numerics} describes the computer-assisted verification of
the numerical inequalities, and Appendix~\ref{sec:kms-proof} proves the
rounding theorem in the uniform form we need.

\section{Extended overview}
\label{sec:overview}

This section explains the whole argument informally.  We keep the notation
of the formal part, but we allow ourselves to say ``about'' and to ignore
lower-order terms.  Throughout, $G=(V,E)$ is the input graph, $n=|V|$, and
$t\to\infty$ is a Gaussian threshold that will be about $\sqrt{\log n}$.
The reader who wants the precise statements can jump to the sections named
in each subsection.

At a glance, the proof has the following shape.  Gaussian rounding either
finds the required independent set directly or fails.  When it fails, it
exposes, around many vertices, weighted families of directions toward
neighbors that behave like Gaussian covers.  Our two-threshold inequality
combines such families along walks of increasing length, and a single
recursion records how the threshold of the combined cover evolves.  At the
two intermediate stages, the algorithm lists the possible walk endpoints and
compares the size of the list with a cutoff fixed in advance.  A large list
can be rounded directly; a small list gives the size bound needed to
continue.  If both lists are small, the next cover would require more than
$n$ distinct endpoints.  The resulting bounded-degree guarantee is then
balanced against the known dense-graph routine.

\subsection{The semidefinite relaxation}
\label{subsec:overview-sdp}

\citet*{KMS98} introduced the following semidefinite relaxation, which
assigns a unit vector $v_i$ to each vertex $i\in V$:
\begin{mainresultbox}
\begin{align*}
    \|v_i\| &= 1
        && \text{for all } i\in V,\\
    \langle v_i,v_j\rangle &\leq -\frac{1}{2}
        && \text{for all } \{i,j\}\in E.
\end{align*}
\end{mainresultbox}
The intended integral solution uses three unit vectors forming the
vertices of an equilateral triangle centered at the origin, and assigns
each vertex the vector corresponding to its color. These three vectors
have pairwise inner product $-1/2$, so every proper $3$-coloring yields
such a solution.  In a general solution to the relaxation, however, the
vectors $v_i$ may point in many different directions.  When equality holds
on every edge, the solution is called a \emph{strict vector $3$-coloring}.

The basic relaxation is enough for the original KMS rounding, but the later
steps need joint information about a few vertices.  For this purpose,
following \citet*{BHL26}, we use the three-round Sum-of-Squares (SoS), or
Lasserre, relaxation; see also \citet*{Las01,Par00}.  Let
$\mathcal C=\{\mathsf R,\mathsf G,\mathsf B\}$ be the set of colors, and
let $x_{i,C}$ be a binary variable indicating that vertex $i$ has color
$C$.  A proper $3$-coloring satisfies
\begin{mainresultbox}
\begin{align*}
    \sum_{C\in\mathcal C} x_{i,C} &= 1
        && \text{for every } i\in V,\\
    x_{i,C}x_{j,C} &= 0
        && \text{for every } \{i,j\}\in E,\ C\in\mathcal C,\\
    x_{i,C} &\in \{0,1\}
        && \text{for every } i\in V,\ C\in\mathcal C.
\end{align*}
\end{mainresultbox}
The relaxation replaces products of a few such variables by inner products
of vectors.  For every set $S$ of at most $k=3$ vertex--color pairs, it has
a vector $y_S$ representing the corresponding assignment monomial.  Writing
$y_{(i,C)}$ for $y_{\{(i,C)\}}$, the vectors are required to satisfy the
following constraints; inner products not fixed by the displayed identities
are constrained only by positive semidefiniteness of the common Gram matrix.
\begin{mainresultbox}
\begin{align*}
    \|y_\varnothing\| &= 1,\\
    \sum_{C\in\mathcal C} y_{(i,C)} &= y_\varnothing
        && \text{for every } i\in V,\\
    \langle y_{(i,C)},y_{(i,C')}\rangle &= 0
        && \text{for every } i\in V,\ C,C'\in\mathcal C,\ C\ne C',\\
    \langle y_{(i,C)},y_{(j,C)}\rangle &= 0
        && \text{for every } \{i,j\}\in E,\ C\in\mathcal C,\\
    \langle y_{S_1},y_{S_2}\rangle
        &= \langle y_{S_3},y_{S_4}\rangle
        && \text{if } S_1\cup S_2=S_3\cup S_4,\
           |S_1\cup S_2|\leq k.
\end{align*}
\end{mainresultbox}
An inner product such as $\langle y_{(i,C)},y_{(j,C')}\rangle$ can be read
as the \emph{pseudo-probability} that vertices $i$ and $j$ receive colors
$C$ and $C'$ in a locally consistent random coloring.  For instance, the
last constraint forces every vector indexed by an inconsistent partial
coloring to vanish.  These pseudo-probabilities need not come from any
global distribution over colorings of the whole graph.

We may also impose color symmetry for free.  Permute the color labels in the
Gram matrix in all six possible ways and average the resulting matrices.
The constraints are linear, so the averaged matrix is still feasible, and it
is unchanged by every color permutation.  In particular, the three
single-color vectors at each vertex have equal squared norm $1/3$.  Following
\citet*{BHL26}, write
\[
    y_{(i,\mathsf R)}
    = \frac{1}{3}y_\varnothing + \frac{\sqrt{2}}{3}v_i,
\]
where $v_i\perp y_\varnothing$ and $\|v_i\|=1$.  For every edge
$\{i,j\}\in E$, the same-color constraint gives
\[
    \langle y_{(i,\mathsf R)},y_{(j,\mathsf R)}\rangle
    =\frac{1}{9}+\frac{2}{9}\langle v_i,v_j\rangle=0,
\]
and hence $\langle v_i,v_j\rangle=-1/2$.  Thus the vectors $v_i$ form a
strict vector $3$-coloring.

Beyond this, the only consequences of the stronger relaxation that we use
are two results of \citet*{BHL26}, stated as
Lemmas~\ref{lem:bhl-negative-coloring} and~\ref{lem:bhl-positive-coloring}
below.  After a root vertex $i$ is fixed, they build better vector colorings
for sets of vertices whose vectors have a prescribed correlation with
$v_i$.  Both constructions are explicit linear combinations of the vectors
$y_S$.

\subsection{Gaussian rounding, and what we learn when it fails}
\label{subsec:overview-rounding}

Sample a standard Gaussian vector $g$ and select
$S=\{i:\langle g,v_i\rangle\geq t\}$.  The original KMS procedure keeps the
isolated vertices of $G[S]$.  The variant used here finds a maximal matching
in $G[S]$ with a fixed tie-breaking rule, deletes the matched vertices, and
keeps the rest.  The remaining vertices are independent: if two of them
were adjacent, the matching would not have been maximal.

For a standard normal variable $Z$, write $\Phi(t)=\Pr[Z\leq t]$ and
$\tail(t)=\Pr[Z\geq t]$.  In the classical KMS analysis, $\tail(t)$ is
about $\Delta^{-1/3}$, where $\Delta$ bounds the maximum degree.  About
$n\Delta^{-1/3}$ vertices are selected, and because adjacent vectors have
inner product $-1/2$, two adjacent vertices are rarely selected together.
This leaves an independent set of size $\widetilde\Omega(n\Delta^{-1/3})$.
\citet*{BHL26} instead use a slightly lower threshold, chosen so that
$\tail(t)=\Delta^{-1/[3(1+c)]}$ for a small constant $c>0$.  More vertices
are selected.  If few of them are deleted, the algorithm immediately gets a
larger independent set.  The interesting case is the opposite one.

We say that the rounding \emph{fails} if, for at least half of the
vertices, the conditional probability of being deleted given being selected
is at least $1/2$; this is Definition~4.1 of \citet*{BHL26}.  Failure is a
property of the rounding distribution, not of one unlucky draw.  If the
rounding does not fail, the expected number of surviving vertices is at
least $n\tail(t)/4$, and polynomially many independent repetitions find an
independent set of size $\Omega(n\tail(t))$ with high probability.  We then
say that the rounding \emph{succeeds}.  Success and failure are the two
cases of the analysis.

Failure gives geometric information about neighboring vertices.  For an
edge $\{i,j\}$ in a strict vector $3$-coloring, define the unit direction
$v_{ij}\perp v_i$ by
\[
 v_j=-\frac12v_i+\frac{\sqrt3}{2}v_{ij}.
\]
If both endpoints of the edge are selected, then
\[
 \langle g,v_{ij}\rangle
 =\frac{2}{\sqrt3}\left(
   \langle g,v_j\rangle+\frac12\langle g,v_i\rangle\right)
 \geq\sqrt3t.
\]
So every matched edge $\{i,j\}$ certifies that $g$ has inner product at
least $\sqrt3t$ with the direction $v_{ij}$.  Failure says that matched
edges are common.  Hence, around a typical vertex $i$, the directions
$v_{ij}$ toward its neighbors form a Gaussian cover at threshold
$\sqrt3t$: a random $g$ has a noticeable chance of crossing this threshold
against at least one of them.

We will need more than the covering probability.  The proof restricts the
family of directions several times, keeping only directions with a
prescribed geometry, and it must know how much covering probability
survives each restriction.  For this reason we use a \emph{packing
measure}, a term from \citet*{BHL26}: a weight on the directions such that
the weight of any subfamily is at most the probability that some direction
in the subfamily crosses the threshold, while the total weight is not too
small.  The probability that a given edge is matched, divided by
$\tail(t)$, is such a weight.  Restricting a packing to a subfamily of
weight $p$ leaves a cover of probability at least $p$.

The probabilities in this argument are allowed to be small, but not too
small.  All Gaussian tail probabilities have the form $\exp[-\Theta(t^2)]$,
and a factor such as a power of $t$ or of $\log t$ changes only the
lower-order term in the exponent.  We call a probability
\emph{subexponential} if it is $\exp[-o(t^2)]$.  All covers and packings
in the proof have subexponential mass, and losing a factor
$(\log t)^{O(1)}$ keeps them subexponential.

The formal output of a failed rounding is
Proposition~\ref{prop:structural-input}: there is a root vertex $i$ whose
neighbor directions form a subexponential packing, each retained neighbor
$j$ carries its own packing of next-edge directions $v_{jk}$, each retained
endpoint $k$ carries a packing of the directions $v_{k\ell}$ toward its own
neighbors, and the correlations along these walks lie in prescribed ranges.
The recursion uses only this output.

\subsection{Combining two covers: the two-threshold inequality}
\label{subsec:overview-two-threshold}

The analytic heart of the paper is a rule for combining two successive
moves along a walk.  We have a cover describing the possible first moves,
and, attached to each first move, another cover describing the possible
next moves.  We need both to succeed for the same first move.

We begin with a single cover.  Let $g$ be a standard Gaussian vector, let
$X=\{x_1,\ldots,x_m\}$ be unit vectors, and write
$M_X(g)=\max_j\langle g,x_j\rangle$.  Saying that $X$ covers at threshold
$s$ with probability $\delta$ means $\Pr[M_X(g)\geq s]\geq\delta$.
Borell's inequality~\citep{Borell75} then bounds the probability that the
cover misses a lower threshold: for every $\rho\geq0$,
\begin{equation}
 \Pr[M_X(g)\leq s-\rho]
 \leq\Phi\!\left(\Phi^{-1}(1-\delta)-\rho\right)
 \approx e^{-\rho^2/2}.
 \label{eq:overview-borell}
\end{equation}
To see the geometry, let $A=\{g:M_X(g)\geq s\}$.  Moving a point by
Euclidean distance at most $\rho$ changes every inner product
$\langle g,x_j\rangle$ by at most $\rho$, so the $\rho$-neighborhood of $A$
is contained in $\{g:M_X(g)\geq s-\rho\}$, and Borell's inequality
lower-bounds the Gaussian measure of that neighborhood.  This argument uses
only the measure of $A$.  It does not use how many vectors $X$ has or how
they are arranged.

Now consider two covers.  An index $p$ represents a possible first move,
and the vectors $X=\{x_p\}$ form the \emph{outer cover}.  Once $p$ is
fixed, a family $Y_p=\{u_{pm}\}\subseteq x_p^\perp$ describes the possible
next moves and is the \emph{inner cover} attached to $x_p$.  Combining the
covers means proving that, with noticeable probability, some single index
$p$ works twice: $x_p$ crosses the outer threshold, and some vector in
$Y_p$ crosses the inner threshold.  In the graph, a successful pair
$(p,m)$ identifies a vertex two moves farther along the walk.

Here is the basic computation.  Let the outer cover have threshold
$\Lambda t$, and suppose that it has at most $\exp(\mu t^2/2)$ members.  A
union bound shows that a cover at threshold $\Lambda t$ with
subexponential probability needs at least $\exp(\Lambda^2t^2/2)$ members,
since each member crosses with probability about $\exp(-\Lambda^2t^2/2)$.
So $\mu\geq\Lambda^2$, and we call
\[
 \Sigma=\mu-\Lambda^2
\]
the \emph{surplus} of the outer cover: the outer family may be larger than
the minimum forced by its covering probability by a factor
$\exp(\Sigma t^2/2)$.  Now lower the inner threshold by an amount $a$, and
let $B_p$ be the event that the inner cover attached to $p$ misses the
lowered threshold.  Because $Y_p\subseteq x_p^\perp$, the event $B_p$ is
independent of the event $A_p$ that $x_p$ crosses.  Hence
\[
 \Pr[\exists p:\ A_p\cap B_p]
 \leq\sum_p\Pr[A_p]\Pr[B_p]
 \leq\exp\!\left(\frac{\mu t^2}{2}\right)
     \exp\!\left(-\frac{\Lambda^2t^2}{2}\right)
     \max_p\Pr[B_p]
 =\exp\!\left(\frac{\Sigma t^2}{2}\right)\max_p\Pr[B_p].
\]
If this is much smaller than the outer covering probability, then with
noticeable probability some $A_p$ occurs without $B_p$, which is exactly a
successful pair.  So we need
$\Pr[B_p]\leq\exp(-\Sigma t^2/2)$ for every $p$.

Borell's inequality~\eqref{eq:overview-borell} gives
$\Pr[B_p]\approx e^{-a^2/2}$, so it suffices to lower the inner threshold by
$a=\sqrt\Sigma\,t$.  This is the rule of \citet*{Chlamtac09} and
\citet*[Lemma~4.14]{BHL26}.  In their setting the outer cover has threshold
$\sqrt3t$ and at most $\tail(\sqrt3t)^{-(1+c)}\approx\exp(3(1+c)t^2/2)$
members, so $\Sigma=3c$, and the inner threshold drops by
$\sqrt{3c}\,t$ times the length of the inner vector.

Our rule uses the fact that the inner cover has few members.  Suppose the
inner family has at most $\exp((1+\theta)S^2/2)$ members, where $S$ is its
threshold.  A union bound over the members shows that the inner maximum
exceeds a farther threshold $S+T$ with probability at most
\[
 \exp\!\left(\frac{(1+\theta)S^2}{2}\right)\tail(S+T)
 \approx\exp\!\left(\frac{(1+\theta)S^2-(S+T)^2}{2}\right).
\]
So we know two things about the distribution function
$F(s)=\Pr[M_{Y_p}(g)\leq s]$ of the inner maximum: $F(S)\leq1-\delta$ at
the base threshold, and $F(S+T)$ is extremely close to one at the farther
threshold.  Ehrhard's inequality says that $s\mapsto\Phi^{-1}(F(s))$ is
concave, because $\{M\leq s\}$ is a convex set that moves linearly in $s$.
A concave function lies below its secant lines.  The secant through the
points at $S$ and at $S+T$ therefore bounds the function at $S-a$ from
above, and this bound on $\Phi^{-1}(F(S-a))$ gives a bound on
$F(S-a)=\Pr[B_p]$.  Working out the numbers,
\[
 \Pr[M_{Y_p}(g)\leq S-a]
 \leq\exp\!\left(-\frac{1+\theta}{2\theta}a^2\right),
\]
once $T$ is chosen optimally as $T=\theta S$.  This is
Theorem~\ref{thm:two-threshold} together with
Lemma~\ref{lem:asymptotic-two-threshold}.  Remark~\ref{rem:calibration}
explains why $T=\theta S$ is the best choice.

Compare the two bounds.  Borell gives decay $e^{-a^2/2}$.  The two-threshold
inequality gives decay $e^{-(1+\theta)a^2/(2\theta)}$, which is faster
by the factor $(1+\theta)/\theta$ in the exponent.  To make
$\Pr[B_p]\leq\exp(-\Sigma t^2/2)$ it now suffices to take
\[
 a=\sqrt{\frac{\theta\Sigma}{1+\theta}}\,t
 \qquad\text{instead of}\qquad
 a=\sqrt\Sigma\,t.
\]
In our application $\theta\leq c\approx0.207$, so the new loss is at most
$\sqrt{c/(1+c)}\approx0.41$ times the old one.  This is the analytic source
of the improved exponent.

Two details matter in the application.  First, the inner vectors are not
unit vectors: they are projections of unit vectors orthogonal to a plane,
and have length $q<1$.  Normalizing them raises their threshold from
$\sqrt3t$ to $\sqrt3t/q$, which is where the quantity $\theta$ comes from.
Second, the inner family has at most $D\approx\exp(3(1+c)t^2/2)$ members
because the graph has maximum degree at most $D$.  Combining these,
\[
 \exp\!\left(\frac{(1+\theta)}{2}\cdot\frac{3t^2}{q^2}\right)
 =\exp\!\left(\frac{3(1+c)t^2}{2}\right)
 \quad\Longleftrightarrow\quad
 \theta=(1+c)q^2-1.
\]
So $\theta$ measures how much room is left after the projection: the
shorter the projected inner vectors, the smaller $\theta$, and the smaller
the loss.

\subsection{The recursion along walks}
\label{subsec:overview-recursion}

We now describe the geometry that the recursion tracks.  Fix a vertex $i$,
called the \emph{root}.  All walks start at $i$.  The final vertex of a walk
is its \emph{endpoint}, and the walk's \emph{level} is its number of edges.
For an endpoint $x$, its \emph{root correlation} and its normalized
projected direction are
\[
 h_x=\langle v_i,v_x\rangle,
 \qquad
 z_x=\frac{v_x-h_xv_i}{\sqrt{1-h_x^2}}
 \quad\text{when }|h_x|<1.
\]
The vector $z_x$ is the direction of $v_x$ orthogonal to the root.  Both
$h_x$ and $z_x$ depend only on the endpoint $x$, not on the walk that
reached it.  This will let us merge walks with the same endpoint.

Consider a two-step walk $i\to j\to k$.  Applying the edge identity twice,
\[
 v_k=\frac14v_i-\frac{\sqrt3}{4}v_{ij}+\frac{\sqrt3}{2}v_{jk},
 \qquad
 \langle v_i,v_k\rangle
 =\frac14+\frac34\langle v_{ji},v_{jk}\rangle.
\]
Among the walks retained after a failed rounding,
$\langle v_{ji},v_{jk}\rangle\geq-o(1)$, so second-level endpoints have
root correlation at least $1/4-o(1)$.  The output of the rounding says
that each retained first-edge direction $v_{ij}$ comes with a packing of
next-edge directions $v_{jk}$.  Combining the cover by the $v_{ij}$ with
these attached packings, and projecting orthogonally to the root, gives a
cover by the directions $z_k$ of the second-level endpoints.  Longer walks
repeat the same operation.  One small point: the coefficient of $v_{ij}$
above is negative, so the recursion uses the outer directions $-v_{ij}$;
this costs nothing because $g$ and $-g$ have the same distribution.

Lemma~\ref{lem:recursive-cover-step} packages one such step.  It remembers
only three summaries of the current endpoint set:
\begin{itemize}
\item $h$, the common root correlation of the endpoints, up to $o(1)$;
\item $\Lambda$, the threshold coefficient of the cover by the projected
      endpoint directions, so the threshold is $\Lambda t-o(t)$; and
\item $\mu$, the exponent in the upper bound $\exp(\mu t^2/2+o(t^2))$ on
      the number of distinct endpoints.
\end{itemize}
Each current endpoint $p$ also carries the packing of its next-edge
directions $v_{pm}$ at threshold $\sqrt3t$, and we write
$\langle v_i,v_{pm}\rangle=\eta+o(1)$ for the retained pairs.  Put
$r=\sqrt{1-h^2}$ and $\tau=\sqrt3/2$.  The three quantities from the
previous subsection become
\[
 \Sigma=\mu-\Lambda^2,
 \qquad
 \theta=c-(1+c)\frac{\eta^2}{r^2},
 \qquad
 \chi=\sqrt{\frac{\theta\Sigma}{1+c}}.
\]
Here $\Sigma$ is the surplus of the outer cover.  Projecting $v_{pm}$
orthogonally to the plane spanned by $v_i$ and $v_p$ leaves a vector of
squared length $q^2=1-\eta^2/r^2$, and $\theta=(1+c)q^2-1$ is the room
left after this projection, as computed above.  Finally $\chi t$ is the
loss $a$ from the previous subsection, converted back to the unnormalized
inner vectors by the factor $q$:
\[
 q\sqrt{\frac{\theta\Sigma}{1+\theta}}
 =\sqrt{\frac{1+\theta}{1+c}\cdot\frac{\theta\Sigma}{1+\theta}}
 =\chi.
\]
The geometry of the next endpoint $m$ is given by
\[
 C=\frac r2+\tau\frac{h\eta}{r},
 \qquad
 h'=-\frac h2+\tau\eta,
\]
where $h'$ is the root correlation of $m$, and $C$ is the coefficient with
which the old projected direction $z_p$ survives in the new one.  The new
endpoints then form a cover at threshold $\Lambda't-o(t)$, where
\begin{equation}
 \Lambda'
 =\frac{\Lambda C+\frac32-\tau\chi}
        {\sqrt{1-(h')^2}}.
 \label{eq:overview-recursion}
\end{equation}
Here is how to read this formula.  The term $\Lambda C$ is the part of the
old threshold that survives the change of direction, and
$3/2=\tau\sqrt3$ comes from the attached packing at threshold $\sqrt3t$.
Requiring both crossings to use the same walk costs $\tau\chi$.  The
denominator rescales the new projected direction to unit length.

The baseline $\Lambda^2$ in $\Sigma=\mu-\Lambda^2$ is the union-bound
count from the previous subsection: a cover at threshold $\Lambda t$ with
subexponential probability has at least $\exp(\Lambda^2t^2/2-o(t^2))$
members.  In our application the members are distinct endpoints.  The same
observation will eventually rule out the fourth-level cover.

\subsection{Buckets}
\label{subsec:overview-buckets}

One nuisance remains: correlations such as $h$ and $\eta$ are not exactly
the same for every walk.  We therefore divide $[-1,1]$ into
$Q=\lceil\log t\rceil$ intervals of equal length, called \emph{buckets},
and group walks whose correlations fall in the same buckets.  Within one
group, a single representative value describes every walk up to $o(1)$.
There are only $Q^{O(1)}$ groups, so keeping the heaviest group loses a
factor $Q^{O(1)}$ in packing mass, which keeps the mass subexponential.
Lemma~\ref{lem:common-bucket} formalizes this selection.  In the algorithm,
the buckets are what is enumerated: for every root and every choice of one
bucket per relevant correlation, the algorithm lists all endpoints of short
walks with correlations in those buckets.

\subsection{Round or continue}
\label{subsec:overview-round-or-continue}

Every intermediate endpoint family gives one of two useful outcomes: either
it can be rounded to a large independent set, or its size is small enough to
continue the recursion.  The only subtlety is that the proof works with a
hidden family of endpoints, whereas the algorithm can construct only a
larger one.

The \emph{good endpoints} are the distinct endpoints whose projected
directions carry the covering probability established by the recursion.
They are hidden because the proof selects them through packing weights and
bucket choices that the algorithm does not know.  Instead, for every root
and every choice of buckets, the algorithm lists all distinct endpoints of
walks of the required length whose correlations lie in the chosen buckets.
We call this list the \emph{candidate set}.  For the buckets selected in
the proof, the candidate set contains the good endpoints.

At levels two and three, the proof fixes a cutoff on the size of the
candidate set, of the form $\exp(\mu t^2/2)$ for a coefficient $\mu$ that
depends on the level and on the buckets.  The two cases are:
\begin{itemize}
\item If the candidate set is at least as large as the cutoff, then the
      root correlations in its buckets give a good vector coloring of the
      set, and KMS rounding finds the desired independent set.
\item If the candidate set is smaller than the cutoff, then so is the set
      of good endpoints.  This supplies the coefficient $\mu$ needed to
      extend the cover by one more edge.
\end{itemize}
This is Lemma~\ref{lem:cardinality-split}.  The case split happens only in
the proof.  The algorithm runs every candidate rounding, so it never needs
to know which case occurred.

For the rounding step, recall that a vector $\kappa$-coloring has edge
inner products at most $-1/(\kappa-1)$, and that KMS rounding on a set of
maximum degree $D$ loses a factor $D^{1-2/\kappa}$ up to subpolynomial
terms.  At level two, the endpoints have root correlation about $1/4$ or
more, and Lemma~\ref{lem:bhl-positive-coloring} gives a vector coloring
with $\kappa\approx2$, so almost nothing is lost.  At level three, the
available coloring depends on the root correlation $h$ of the endpoints in
the selected buckets: for $h\leq0$, Lemma~\ref{lem:bhl-negative-coloring}
gives $\kappa=(3-6h)/(1-4h)$, and for $h\geq0$ we fall back on the strict
vector $3$-coloring.

The BHL analysis uses the same two-case step through the third level.  Our
smaller loss leaves enough room to extend once more when both intermediate
sets are below their cutoffs.  The fourth-level cover is used only for
counting: by the union-bound count above, it needs at least
$\exp(\Lambda_4^2t^2/2-o(t^2))$ distinct endpoints, and the numbers work
out so that this exceeds $n$.  The whole bounded-degree argument is
summarized below; ``large'' and ``small'' mean relative to the cutoff.
\begin{center}
\small
\renewcommand{\arraystretch}{1.08}
\begin{tabular}{c|>{\raggedright\arraybackslash}p{0.36\linewidth}|>{\raggedright\arraybackslash}p{0.46\linewidth}}
Level & What is available & What the argument does \\
\hline
1 & The weighted first-step directions supplied by the failed rounding
  & The maximum-degree bound limits the number of neighbors; extend the
    cover to endpoints of two-edge walks \\
2 & A cover on two-edge endpoints $x$ with
    $\langle v_i,v_x\rangle\geq1/4-o(1)$
  & If the candidate set is large, use the coloring for positive root
    correlation and round; if it is small, use the size bound and extend
    to level three \\
3 & A cover on three-edge endpoints $x$ whose root correlation lies in one
    bucket
  & If the candidate set is large, use the coloring for nonpositive root
    correlation or the strict vector $3$-coloring and round; if it is
    small, extend once more \\
4 & A cover at the final recursively computed threshold
  & A union bound forces more than $n$ distinct endpoints, a contradiction
\end{tabular}
\end{center}

\paragraph{A fixed margin.}
Every numerical inequality in this argument holds with room to spare.  We
use this room in a simple way.  Fix once and for all a small constant
$\omega_0>0$.  Each cutoff is padded to $\exp((\mu+\omega_0)t^2/2)$, and in
the recursion step the surplus $\Sigma$ and the room $\theta$ are both
replaced by $\Sigma+\omega_0$ and $\theta+\omega_0$.  The padding makes
every quantity that must be positive at least $\omega_0$, so all the
lower-order errors, which tend to zero, are eventually smaller than the
padding.  Because the coefficients depend continuously on $\omega_0$ and
the numerical inequalities are strict at $\omega_0=0$, a sufficiently small
$\omega_0$ preserves them.  This is Lemma~\ref{lem:margin}.  The padding
also handles the boundary cases $\Sigma=0$ and $\theta=0$ without separate
treatment.

\subsection{The algorithm and the exponent}
\label{subsec:overview-algorithm}

Although the good endpoints are hidden, every candidate set is explicit.
The algorithm solves the three-round relaxation, repeats the modified KMS
rounding on the whole graph, and then tries every root, every choice of one
bucket at level two, and every ordered pair of buckets at level three.  For
each choice, it enumerates the relevant walks of length at most three,
merges repeated endpoints, constructs the appropriate vector coloring, and
runs KMS rounding on the candidate set.  There are
$1+nQ+nQ^2$ candidate families in all, which is polynomial in $n$.  The
fourth level is never enumerated; it is only the contradiction proving that
one of the earlier calls succeeds.

The numerical verification is also narrow.  The side conditions for the
first transition are verified by hand.  For the two later,
parameter-dependent transitions, the certificate in
Appendix~\ref{sec:numerics} checks by interval arithmetic that every
square root and normalization in the recursion is well defined, that the
level-two rounding gives enough, and that the fourth-level cover is too
large.  It is an offline verification of fixed constants, not a step of the
algorithm.

Finally, we balance the dense and sparse regimes.  Take
$\widehat c=0.2067$ and set
\[
 \widehat\alpha=\frac1{5+3\widehat c},
 \qquad
 \widehat\beta=\frac{3+3\widehat c}{5+3\widehat c},
 \qquad
 D=n^{\widehat\beta}.
\]
Graphs of minimum degree at least $D$ are handled by the dense-graph
routine of \citet*{KTY24}, which makes progress toward an
$n^{(1-\widehat\beta)/2+o(1)}=n^{\widehat\alpha+o(1)}$-coloring.  On
graphs of maximum degree at most $D$, we choose the Gaussian threshold with
$\tail(t)=D^{-1/[3(1+\widehat c)]}=n^{-\widehat\alpha}$.  A successful
rounding gives an independent set of size about $n\tail(t)=n^{1-\widehat\alpha}$,
which is what progress toward an $n^{\widehat\alpha}$-coloring requires.  A
failed rounding gives, by Proposition~\ref{prop:failed-rounding-progress},
an independent set of size $D^{A_0-o(1)}$ with $A_0=3191/2500$.  Writing
$R(x)=(4+3x)/(3+3x)$, the parameters satisfy $D^{R(\widehat c)}=n^{1-\widehat\alpha}$,
and the certificate verifies $A_0>R(\widehat c)$.  So the failed-rounding
branch also reaches the required size.  Since
$\widehat\alpha=0.1779327\ldots<0.17794$, the dense--sparse combination
theorem yields Theorem~\ref{thm:final-coloring}.  The recursion itself is
run with a slightly larger constant $c=0.207$; the gap between $c$ and
$\widehat c$ is what makes the degree bound $D$ strictly smaller than the
cardinality bound $\exp(3(1+c)t^2/2)$ that the recursion assumes.

\paragraph{Division of labor.}
Earlier work supplies the strict vector coloring, KMS rounding, the packing
theorems that prune a failed rounding, the conditional vector colorings,
and the dense--sparse framework.  The new work in this paper has four
parts:
\begin{itemize}
\item a Gaussian comparison that remembers both the number and the lengths
      of the inner vectors;
\item a maximum-degree version of the pruning step, suitable for repeated
      composition;
\item the single recurrence~\eqref{eq:overview-recursion} organizing the
      successive walk levels; and
\item the final round-or-continue step whose fourth-level cover contradicts
      the number of graph vertices.
\end{itemize}
The rest of the paper states each step formally and verifies every
geometric and numerical condition.

\section{Preliminaries}
\label{sec:preliminaries}

This section begins the formal part of the paper.  It fixes notation,
records the standard Gaussian estimates we use, defines covers, packings,
and buckets, and states the results quoted from earlier work.

\subsection{Notation}

All graphs are finite, simple, and undirected, and all logarithms are
natural.  Let $G=(V,E)$, and write $n=|V|$.  We use $N_G(v)$ and
$\deg_G(v)$ for the neighborhood and degree of $v$, and $\Delta(G)$ for the
maximum degree; we omit the subscript when the graph is clear.  For a
vertex set $S$, let $N_G(S)$ be the set of vertices outside $S$ with a
neighbor in $S$, and let $G[S]$ be the induced subgraph.  A set of vertices
is independent if it contains no edge.  We write $\langle x,y\rangle$ and
$\|x\|$ for the Euclidean inner product and norm, and $x_+=\max\{x,0\}$.

Let $g\sim N(0,I_d)$ be a standard Gaussian vector in $\mathbb R^d$, and
let $\gamma_d$ denote the standard Gaussian measure.  We write
$\phi(x)=(2\pi)^{-1/2}e^{-x^2/2}$ for the standard normal density,
$\Phi$ for its distribution function, $\tail=1-\Phi$ for the tail, and
$\tail^{-1}$ for the inverse of the tail, so that
$\tail^{-1}(p)=\Phi^{-1}(1-p)$.

Throughout the sparse-graph analysis, a Gaussian threshold $t$ tends to
infinity, and every asymptotic statement refers to $t\to\infty$ unless
stated otherwise.

\subsection{Gaussian estimates}

\begin{lemma}[Gaussian tail and quantile estimates]
\label{lem:gaussian-estimates}
For every $x>0$,
\begin{equation}
 \frac{x}{1+x^2}\phi(x)
 \leq\tail(x)
 \leq\frac{\phi(x)}x,
 \label{eq:gaussian-mills}
\end{equation}
and consequently
\begin{equation}
 \log\tail(x)=-\frac{x^2}{2}+O(\log(x+2)).
 \label{eq:tail-log-uniform}
\end{equation}
Moreover,
\begin{align}
 \tail(x)&\leq e^{-x^2/2}
 \qquad(x\geq0),
 \label{eq:gaussian-chernoff}\\
 \bigl(\tail^{-1}(p)\bigr)_+&\leq\sqrt{2\log(1/p)}
 \qquad(0<p<1),
 \label{eq:quantile-upper}
\end{align}
and there is a universal constant $C_0$ such that, for every sufficiently
large $H$,
\begin{equation}
 \tail^{-1}(e^{-H})
 \geq\sqrt{2H}-C_0\frac{\log(H+2)}{\sqrt H}.
 \label{eq:quantile-lower}
\end{equation}
\end{lemma}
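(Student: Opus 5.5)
The plan is to derive everything from the two Mills-ratio bounds \eqref{eq:gaussian-mills}, so we prove those first. For the upper bound, on $u\geq x$ we have $\phi(u)\leq (u/x)\phi(u)$; integrating from $x$ to $\infty$ and using $\phi'(u)=-u\phi(u)$ gives $\tail(x)\leq\phi(x)/x$. For the lower bound, set $\psi(x)=\tail(x)-\frac{x}{1+x^2}\phi(x)$. A direct differentiation gives $\psi'(x)=-\frac{2}{(1+x^2)^2}\phi(x)<0$. Since $\psi(x)\to0$ as $x\to\infty$ and $\psi$ is decreasing, $\psi\geq0$. Taking logarithms of both bounds then gives
\[
 \log\tail(x)=-\frac{x^2}{2}-\log\sqrt{2\pi}+O(\log(x+2))
\]
uniformly in $x>0$. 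The only care needed is near $x=0$: there $\log\frac{x}{1+x^2}\to-\infty$, so the lower Mills bound is useless. For $x\leq1$ we instead use $\tail(x)\geq\tail(1)$ directly, which still yields \eqref{eq:tail-log-uniform}.

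The Chernoff bound \eqref{eq:gaussian-chernoff} comes from the standard exponential-moment argument. For $x\geq0$, $\Pr[Z\geq x]\leq e^{-\lambda x}\mathbb E e^{\lambda Z}=e^{-\lambda x+\lambda^2/2}$, and choosing $\lambda=x$ gives $e^{-x^2/2}$. The quantile bound \eqref{eq:quantile-upper} follows from it. If $\tail^{-1}(p)\leq0$ there is nothing to prove. Otherwise set $x=\tail^{-1}(p)>0$, so that $p=\tail(x)\leq e^{-x^2/2}$, which gives $x\leq\sqrt{2\log(1/p)}$.

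The lower quantile bound \eqref{eq:quantile-lower} is the only step that needs care, and we expect it to be the main obstacle. Let $x=\tail^{-1}(e^{-H})$; by \eqref{eq:quantile-upper}, $x\leq\sqrt{2H}$. Write $x_0=\sqrt{2H}-C_0\log(H+2)/\sqrt H$ (positive for large $H$). Since $\tail$ is decreasing, it suffices to show $\tail(x_0)\geq e^{-H}$. By the lower Mills bound with $x_0\geq1$, $\log\tail(x_0)\geq -x_0^2/2-\log x_0-\log 2-\log\sqrt{2\pi}$. Expanding the square,
\[
 x_0^2/2\leq H-\sqrt2\,C_0\log(H+2)+O\bigl(\log^2(H+2)/H\bigr),
\]
and $\log x_0\leq\tfrac12\log(2H)$. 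Hence
\[
 \log\tail(x_0)\geq -H+\sqrt2\,C_0\log(H+2)-\tfrac12\log H-O(1),
\]
and this is at least $-H$ once $C_0\geq1$ (say) and $H$ is large. The point requiring attention is that the gain $\sqrt2\,C_0\log(H+2)$ must dominate the polynomial prefactor loss $\tfrac12\log H$ plus the constants. That comparison is exactly what fixes the choice of the universal constant $C_0$ and the meaning of ``sufficiently large $H$.''
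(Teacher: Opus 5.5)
Your proof is correct. For the Chernoff bound, the upper quantile bound, and the log-tail estimate it is the paper's argument. It differs in two places. First, you derive the Mills-ratio bounds yourself (your computation $\psi'(x)=-2\phi(x)/(1+x^2)^2$ is right), whereas the paper cites them. Second, you prove \eqref{eq:quantile-lower} by a forward check rather than by inversion.

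The paper sets $x=\tail^{-1}(e^{-H})$ and reads off $H=x^2/2+O(\log(x+2))$ from \eqref{eq:tail-log-uniform}. It then uses \eqref{eq:quantile-upper} to replace $\log(x+2)$ by $\log(H+2)$, and takes square roots in $x^2\geq 2H-C\log(H+2)$. You instead evaluate the lower Mills bound at the candidate point $x_0$, show $\tail(x_0)\geq e^{-H}$, and conclude by monotonicity of $\tail$.

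The paper's route is shorter and uses \eqref{eq:tail-log-uniform} as a black box, with an implicit constant. Yours is self-contained and makes the admissible constant explicit: any $C_0>1/(2\sqrt2)$ works for large $H$, which is essentially sharp since the lower Mills bound is asymptotically tight. It also never needs \eqref{eq:quantile-upper}. The bound $\log x_0\leq\frac12\log(2H)$ that you use follows from the definition of $x_0$ alone, so your appeal to $x\leq\sqrt{2H}$ is superfluous.

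One small slip concerns the uniform estimate on $(0,1]$. There the upper Mills bound $\phi(x)/x$ also blows up, so you need the trivial bound $\tail(x)\leq 1$ as well as $\tail(x)\geq\tail(1)$. This is what the paper's ``boundedness of $\log\tail(x)+x^2/2$ on $(0,1]$'' covers.
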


\begin{proof}
The bounds~\eqref{eq:gaussian-mills} are the classical Mills-ratio bounds;
see, for example, Lemma~A.1 in the full version of \citet*{CMM06}.  Taking
logarithms in~\eqref{eq:gaussian-mills} for $x\geq1$, and using the
boundedness of $\log\tail(x)+x^2/2$ on $(0,1]$, gives
\eqref{eq:tail-log-uniform}.

For~\eqref{eq:gaussian-chernoff}, let $Z\sim N(0,1)$.  For $x>0$, Markov's
inequality applied to $e^{xZ}$ and the identity
$\mathbb Ee^{xZ}=e^{x^2/2}$ give
$\tail(x)=\Pr[e^{xZ}\geq e^{x^2}]\leq e^{-x^2}\mathbb Ee^{xZ}=e^{-x^2/2}$.
The case $x=0$ is trivial.  For~\eqref{eq:quantile-upper}, if
$x=\tail^{-1}(p)>0$, then $p=\tail(x)\leq e^{-x^2/2}$ by
\eqref{eq:gaussian-chernoff}, which is the claim.

Finally, let $H$ be large and put $x=\tail^{-1}(e^{-H})>0$.
Equation~\eqref{eq:tail-log-uniform} gives $H=x^2/2+O(\log(x+2))$, and
\eqref{eq:quantile-upper} gives $x\leq\sqrt{2H}$.  Hence
$x^2\geq2H-C\log(H+2)$ for a universal constant $C$, and taking square
roots proves~\eqref{eq:quantile-lower}.
\end{proof}

By symmetry, \eqref{eq:gaussian-chernoff} also reads
$\Phi(-y)\leq e^{-y^2/2}$ for $y\geq0$, and hence
\begin{equation}
 \log\Phi(x)\leq-\frac12\bigl(x_-\bigr)^2
 \qquad(x\in\mathbb R),
 \qquad\text{where }x_-=\max\{-x,0\}.
 \label{eq:log-phi-bound}
\end{equation}

\subsection{Vector colorings}

For $\kappa\geq2$, a \emph{vector $\kappa$-coloring} of a graph assigns a
unit vector $v_i$ to each vertex $i$ and requires
\[
 \langle v_i,v_j\rangle\leq-\frac1{\kappa-1}
 \quad\text{for every edge }\{i,j\}.
\]
It is \emph{strict} when equality holds on every edge.  Following
\citet*{BHL26}, for any pair of nonparallel unit vectors $v_i,v_j$, define
the unit vector
\[
 v_{ij}
 =
 \frac{v_j-\langle v_i,v_j\rangle v_i}
      {\sqrt{1-\langle v_i,v_j\rangle^2}}
 \perp v_i .
\]
In a strict vector $3$-coloring, every edge $\{i,j\}$ satisfies
$\langle v_i,v_j\rangle=-1/2$, and hence
\begin{equation}
 v_j=-\frac12v_i+\tau v_{ij},
 \qquad
 \tau=\frac{\sqrt3}{2}.
 \label{eq:edge-identity}
\end{equation}
We use the constant $\tau=\sqrt3/2$ throughout.

\subsection{Covers, packings, and subexponential probabilities}

\begin{definition}[Gaussian cover]
Let $X=\{x_j:j\in J\}$ be a finite family of unit vectors.  For a threshold
$s\in\mathbb R$ and a probability $\delta\in(0,1]$, we call $X$ an
$(s,\delta)$-\emph{cover} if
\[
 \Pr\!\left[\max_{j\in J}\langle g,x_j\rangle\geq s\right]\geq\delta.
\]
\end{definition}

An ordinary cover specifies the probability that at least one vector
crosses the threshold, but it does not say how much of this probability is
carried by a chosen subfamily.  The following weighted version, taken from
\citet*{BHL26}, assigns mass to individual vectors in a way that survives
restriction.

\begin{definition}[Packing measure]
Let $X=\{x_j:j\in J\}$ be a finite family of unit vectors, and let $\nu$ be
a nonnegative measure on $J$.  We call $\nu$ an $(s,\delta)$-\emph{packing
measure} for $X$ if
\begin{equation}
\begin{aligned}
 \nu(A)&\leq
 \Pr[\exists j\in A:\langle g,x_j\rangle\geq s]
 &&\text{for every }A\subseteq J,\\
 \nu(J)&\geq\delta.
\end{aligned}
\label{eq:packing-measure}
\end{equation}
We also call the weighted family $(X,\nu)$ an $(s,\delta)$-\emph{packing},
and $\nu(J)$ its \emph{mass}.
\end{definition}

Every $(s,\delta)$-cover admits a packing measure of mass equal to its
covering probability: fix an ordering of $J$ and assign each sample $g$ in
the cover event to the first index whose vector crosses the threshold.  This
is also Lemma~4.6 of \citet*{BHL26}.  Conversely, the first condition in
\eqref{eq:packing-measure} shows that any restriction of a packing to a
subfamily of mass $p$ is an $(s,p)$-cover.

Two conventions will be used without further comment.  First, results
stated for sets of vectors are also applied to indexed families.  If
several indices give the same vector, we combine their weights, apply the
result to the distinct vectors, and, when a vector is retained, retain all
of its indices.  This preserves the total mass and every inequality of the
form~\eqref{eq:packing-measure}.  Second, since $g$ and $-g$ have the same
distribution, replacing every vector $x_j$ by $-x_j$ changes neither the
covering probability nor the packing property.

\begin{definition}[Subexponential probabilities]
A sequence $\delta_t\in(0,1]$ is \emph{subexponential} if
$\log(1/\delta_t)=o(t^2)$, or equivalently $\delta_t=\exp[-o(t^2)]$.  A
cover or packing at threshold $s_t$ is \emph{subexponential} if its
probability or mass is a subexponential sequence.
\end{definition}

Thus a subexponential probability may tend to zero, but not exponentially
fast on the $t^2$ scale.  This is the relevant scale because the Gaussian
tail probabilities in the proof have the form $\exp[-\Theta(t^2)]$.
Multiplying a subexponential sequence by $(\log t)^{-O(1)}$ or by a fixed
positive constant preserves subexponentiality.

\subsection{Buckets}
\label{subsec:buckets}

Put $Q=\lceil\log t\rceil$ and fix one partition $\mathcal P_t$ of $[-1,1]$
into $Q$ intervals of equal length, called \emph{buckets}.  The buckets
are half-open, except that the last one contains $1$.  Every bucket has
width $2/Q=O(1/\log t)$.  The same partition is used in the proof and in
the algorithm, so the algorithm tries every bucket that the proof selects.

At several points, we have an outer packing together with an inner packing
attached to every outer index, and we need all the correlations describing
these families to lie in common buckets.  The following selection lemma
shows that this costs only a polylogarithmic factor in mass.

\begin{lemma}[Common-bucket selection]
\label{lem:common-bucket}
Let an outer packing have mass $\delta_0$, and attach to every outer index
an inner packing of mass at least $\delta_1$.  Suppose that each outer
index is labeled by one of $Q_{\rm out}$ outer labels, and each inner index
by one of $Q_{\rm in}$ inner labels.  Then there are an outer label and an
inner label such that the outer indices carrying the outer label have
outer mass at least $\delta_0/(Q_{\rm out}Q_{\rm in})$, and every one of
these outer indices has inner mass at least $\delta_1/Q_{\rm in}$ on the
inner indices carrying the inner label.
\end{lemma}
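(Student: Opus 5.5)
The plan is a double pigeonhole argument: first an inner label is fixed for each outer index, and then the outer indices are pigeonholed by the pair (outer label, inner label) they carry.

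\emph{Step 1: an inner label for each outer index.} Fix an outer index $p$. Its inner packing has mass at least $\delta_1$. This mass is split among the $Q_{\rm in}$ inner labels, so some inner label carries inner mass at least $\delta_1/Q_{\rm in}$. Choose one such label and call it $\ell(p)$; ties are broken arbitrarily. The only property used is that a measure is additive over the partition of the inner index set by label.

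\emph{Step 2: pigeonhole the outer indices.} Each outer index $p$ now has a pair (its outer label, $\ell(p)$). These pairs partition the outer index set into at most $Q_{\rm out}Q_{\rm in}$ classes. The outer measure is additive and has total mass $\delta_0$, so some class has outer mass at least $\delta_0/(Q_{\rm out}Q_{\rm in})$. Let $(a,b)$ be the pair of that class.

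\emph{Step 3: read off the conclusion.} Take outer label $a$ and inner label $b$. Every index in the chosen class carries outer label $a$. By Step 1, each such index has inner mass at least $\delta_1/Q_{\rm in}$ on the inner indices labeled $b$.

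There is one subtlety in how this matches the statement. The statement's phrase ``the outer indices carrying the outer label'' has to be read as the selected subfamily, namely those outer indices carrying label $a$ with $\ell(p)=b$, rather than all indices with label $a$. Some indices with label $a$ may have chosen a different inner label, so for them the inner-mass guarantee on label $b$ can fail. In the application one simply restricts the outer packing to this class. Its mass bound is exactly the one claimed, and restricting a packing preserves the packing inequality in \eqref{eq:packing-measure}.

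There is no real obstacle here: the only point needing care is this interpretation of which outer indices are retained. Everything else is additivity of measures over finite partitions.
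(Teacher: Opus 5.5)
Your proof is correct and is the same argument as the paper's. For each outer index you fix an inner label of mass at least $\delta_1/Q_{\rm in}$, tag the index by the pair (outer label, chosen inner label), and pigeonhole over the at most $Q_{\rm out}Q_{\rm in}$ tags. Your interpretive remark is also right: the retained outer family must be the tag class, not all indices carrying the outer label, and this is exactly the class the paper's proof selects and the form its applications use (e.g., restricting $J$ to $J^B$ with a trivial outer label).
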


\begin{proof}
For each outer index, choose an inner label of maximum inner mass; it has
mass at least $\delta_1/Q_{\rm in}$.  Tag the outer index by the pair
consisting of its own label and this chosen inner label.  There are at most
$Q_{\rm out}Q_{\rm in}$ tags, so some tag carries outer mass at least
$\delta_0/(Q_{\rm out}Q_{\rm in})$.  Equation~\eqref{eq:packing-measure}
turns both restrictions into covers.
\end{proof}

In every application, a label is a tuple of at most three buckets, so
$Q_{\rm out},Q_{\rm in}=Q^{O(1)}$, and the loss is $Q^{-O(1)}$, which
preserves subexponential mass.  All bucket choices and tie-breaking rules
in the proof are fixed before the Gaussian vector of the next step is
sampled, so the families produced are deterministic.

\subsection{Ehrhard's inequality}

The following theorem is due to \citet*{Ehrhard83}; see also
Section~2.2 of \citet*{AW09} for a textbook treatment.

\begin{mainresultbox}
\begin{theorem}[Ehrhard's inequality]
\label{thm:ehrhard}
Let $A$ and $B$ be convex Borel subsets of $\mathbb R^d$, and let
$0\leq\lambda\leq1$.  Then
\[
 \Phi^{-1}\!\left(\gamma_d((1-\lambda)A+\lambda B)\right)
 \geq (1-\lambda)\Phi^{-1}(\gamma_d(A))
      +\lambda\Phi^{-1}(\gamma_d(B)).
\]
Here $(1-\lambda)A+\lambda B$ denotes the Minkowski combination.  We set
$\Phi^{-1}(0)=-\infty$ and $\Phi^{-1}(1)=+\infty$; if one of these values
occurs, the inequality is understood by taking limits from Gaussian
measures strictly between zero and one.
\end{theorem}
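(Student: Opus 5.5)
Since Theorem~\ref{thm:ehrhard} is quoted from \citet*{Ehrhard83}, we only need a proof outline. I would follow Borell's semigroup and maximum-principle route rather than Ehrhard's original Gaussian symmetrization. That route proves the statement for convex sets directly, and in fact for arbitrary Borel sets.

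\emph{Step 1: functional reformulation.} I would reduce the set version to a functional version. Let $f,g,h:\mathbb R^d\to(0,1)$ be smooth with bounded derivatives. Suppose that for all $x,y$,
\[
 \Phi^{-1}\bigl(h((1-\lambda)x+\lambda y)\bigr)\geq(1-\lambda)\Phi^{-1}(f(x))+\lambda\Phi^{-1}(g(y)).
\]
Then
\[
 \Phi^{-1}\Bigl(\int h\,d\gamma_d\Bigr)\geq(1-\lambda)\Phi^{-1}\Bigl(\int f\,d\gamma_d\Bigr)+\lambda\Phi^{-1}\Bigl(\int g\,d\gamma_d\Bigr).
\]
The set version follows by approximation. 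Take $f$ and $g$ to be smooth approximations of $\varepsilon+(1-2\varepsilon)\mathbf 1_A$ and $\varepsilon+(1-2\varepsilon)\mathbf 1_B$. Take $h$ to be a suitable approximation of the indicator of a slight enlargement of $(1-\lambda)A+\lambda B$. Then let the smoothing and $\varepsilon$ tend to zero, using monotone convergence and continuity of $\Phi^{-1}$ on $(0,1)$. The conventions $\Phi^{-1}(0)=-\infty$ and $\Phi^{-1}(1)=+\infty$ are handled by the limiting clause in the statement.

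\emph{Step 2: heat flow.} I would interpolate through the Gaussian heat semigroup. Let $u_f(s,x)=\Phi^{-1}(\mathbb E f(x+\sqrt s\,Z))$ with $Z\sim N(0,I_d)$, and define $u_g,u_h$ likewise. A direct computation shows that $u=\Phi^{-1}(P_s\phi)$ satisfies the nonlinear equation
\[
 \partial_s u=\tfrac12\Delta u-\tfrac12\,u\,|\nabla u|^2.
\]
Now set
\[
 C(s,x,y)=u_h\bigl(s,(1-\lambda)x+\lambda y\bigr)-(1-\lambda)u_f(s,x)-\lambda u_g(s,y).
\]
The hypothesis gives $C(0,\cdot,\cdot)\geq0$, and the conclusion is $C(1,0,0)\geq0$.

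\emph{Step 3: maximum principle.} I would show that $C$ stays nonnegative by a parabolic maximum principle. Suppose $C$ attains a negative minimum at an interior point. There the first-order conditions give $\nabla_x C=0$ and $\nabla_y C=0$, which forces $\nabla u_h=\nabla u_f=\nabla u_g=:w$ at the corresponding points. Applying a suitable degenerate elliptic operator in $(x,y)$ to $C$ gives a nonnegative second-order term. The first-order nonlinear terms then combine into $-\tfrac12|w|^2C$. This produces an inequality of the form $\partial_sC\geq L C-\tfrac12|w|^2C$, where $L$ is that operator. A negative minimum is then impossible, after the usual perturbation $C+\epsilon e^{Ks}$ to make the inequality strict. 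To control behaviour at infinity I would use the boundedness of $f,g,h$ away from $0$ and $1$, so that the $u$'s have bounded gradients.

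The main obstacle is this maximum-principle step. One must choose the operator so that the second-order contribution is nonnegative. The natural choice is
\[
 L=\tfrac12\sum_k\bigl(\partial_{x_k}+\partial_{y_k}\bigr)^2\quad\text{acting on }C,
\]
which turns the Hessian terms exactly into $\tfrac12\Delta u_h-(1-\lambda)\tfrac12\Delta u_f-\lambda\tfrac12\Delta u_g$. One must also handle the noncompact domain rigorously. If that becomes too delicate, the fallback is van Handel's stochastic-calculus proof. It runs the same interpolation along a Brownian path and replaces the maximum principle by an optional-stopping argument.
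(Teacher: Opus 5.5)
The paper does not prove this statement. It quotes it from Ehrhard, whose proof uses Gaussian symmetrization, and points to a textbook. Your Borell-style heat-flow route is therefore necessarily different, and its core is sound. The equation $\partial_s u=\frac12\Delta u-\frac12u|\nabla u|^2$ for $u=\Phi^{-1}(P_sf)$ is correct. The operator $L=\frac12\sum_k(\partial_{x_k}+\partial_{y_k})^2$ does turn the Hessian terms of $C$ into exactly $\frac12\bigl(\Delta u_h-(1-\lambda)\Delta u_f-\lambda\Delta u_g\bigr)$.

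\textbf{Step~1 fails as written.} Put $M=\Phi^{-1}(1-\varepsilon)$, so your $f,g$ take the values $\Phi(\pm M)$. Let $z\notin(1-\lambda)A+\lambda B$.
\begin{itemize}
\item Choosing $y\in B$ and $x=(z-\lambda y)/(1-\lambda)$ forces $x\notin A$. So the hypothesis requires $\Phi^{-1}(h(z))\geq(2\lambda-1)M$.
\item Choosing $x\in A$ instead requires $\Phi^{-1}(h(z))\geq(1-2\lambda)M$.
\end{itemize}
Hence every admissible $h$ satisfies $h\geq\Phi(|1-2\lambda|M)\geq\frac12$ off the Minkowski combination. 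The functional inequality then says nothing about $\gamma_d((1-\lambda)A+\lambda B)$. This is forced by $f$ and $g$, not by the choice of $h$.

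The fix is to put the floor and the ceiling on different scales. Take $f=\varepsilon_1+(1-\varepsilon-\varepsilon_1)\mathbf 1_A$, smoothed so that $f=1-\varepsilon$ on $A$ and $f=\varepsilon_1$ off the $\eta$-neighborhood of $A$, and similarly for $g$. Choose $\Phi^{-1}(\varepsilon_1)=-KM$ with $K>\max\{\lambda/(1-\lambda),(1-\lambda)/\lambda\}$. Off the $\eta$-neighborhood of $(1-\lambda)A+\lambda B$, the required lower bound on $\Phi^{-1}(h)$ is then at most $\max\{\lambda-(1-\lambda)K,\ 1-\lambda-\lambda K\}M\to-\infty$. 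So a smoothed upper regularization of the sup-convolution $\Phi\bigl(\sup_{(1-\lambda)x+\lambda y=z}[(1-\lambda)\Phi^{-1}f(x)+\lambda\Phi^{-1}g(y)]\bigr)$ is small there. Convex sets have $\gamma_d$-null boundary, so letting $\varepsilon,\eta\to0$ gives the set inequality.

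\textbf{The noncompact domain, which you flagged.} The perturbation $C+\epsilon e^{Ks}$ makes the inequality strict, but it does not make the spatial minimum over $\mathbb R^{2d}$ attained. Without attainment, the first-order conditions cannot be invoked. Your observation that the $u$'s and their gradients are bounded is the right tool, used as follows.
\begin{itemize}
\item Since $(1-\lambda)(|\nabla u_h|^2-|\nabla u_f|^2)=(\nabla u_h+\nabla u_f)\cdot\nabla_xC$, and similarly in $y$, the equation becomes linear: $\partial_sC=LC+b\cdot\nabla_{x,y}C-\frac12|\nabla u_h|^2C$ with $b$ bounded. This holds everywhere, not only at critical points.
\item Then $C+\epsilon e^{Ks}(1+|x|^2+|y|^2)$, with $K$ large, is positive outside a compact set. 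Its first zero is therefore attained, which gives the contradiction.
\end{itemize}
With these two repairs your outline is Borell's proof and establishes the theorem.
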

\end{mainresultbox}

\subsection{Results quoted from earlier work}
\label{subsec:quoted}

The vertex vectors $v_i$ obtained from the three-round relaxation of
Section~\ref{subsec:overview-sdp} form a strict vector $3$-coloring.  As
usual in SDP analyses, we treat the solution as exact.  We use the following
two conditional vector-coloring results of \citet*[Lemmas~3.2 and~3.4]{BHL26}.
We state their restrictions to an arbitrary vertex set $U$, obtained by
applying the published statements to the subgraph induced by
$U\cup\{i\}$.

\begin{lemma}[Coloring for nonpositive root correlation \citep{BHL26}]
\label{lem:bhl-negative-coloring}
For the symmetrized three-round solution, fix a vertex $i$ and a set
$U\subseteq V\setminus\{i\}$.  Suppose that
\[
 -\frac12\leq h\leq0,
 \qquad
 \langle v_i,v_j\rangle\leq h
 \quad\text{for every }j\in U.
\]
Then the vectors
\[
 b_j=
 y_{\{(i,\mathsf R),(j,\mathsf G)\}}
 -y_{\{(i,\mathsf R),(j,\mathsf B)\}},
 \qquad j\in U,
\]
are nonzero, and the normalized vectors $b_j/\|b_j\|$ form a vector
$\dfrac{3-6h}{1-4h}$-coloring of $G[U]$.  This vector coloring can be
constructed in polynomial time from the three-round relaxation.
\end{lemma}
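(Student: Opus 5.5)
The plan is a direct computation with the three-round pseudo-probabilities. Write $P(S)=\langle y_{S_1},y_{S_2}\rangle$ for any split $S=S_1\cup S_2$ with $|S|\le3$; this is well defined by the last relaxation constraint. First I record what color symmetry gives. We have $P(\{(i,\mathsf R)\})=1/3$, and for every $j$, with $h_j=\langle v_i,v_j\rangle$,
\[
 P(\{(i,\mathsf R),(j,\mathsf R)\})=\tfrac19+\tfrac29h_j .
\]
This follows from $y_{(i,\mathsf R)}=\frac13y_\varnothing+\frac{\sqrt2}{3}v_i$ and $v_i\perp y_\varnothing$. Any set containing two different colors at one vertex, or the same color at both ends of an edge, has $P=0$. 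The marginalization identity $\sum_C y_{(j,C)}=y_\varnothing$ lets me expand any $P$ by summing over the colors of one additional vertex, as long as the total size stays at most $3$.

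For the norm, the cross term $\langle y_{\{(i,\mathsf R),(j,\mathsf G)\}},y_{\{(i,\mathsf R),(j,\mathsf B)\}}\rangle$ equals $P$ of an inconsistent set, so it vanishes. Hence
\[
 \|b_j\|^2=P(iR,jG)+P(iR,jB)=P(iR)-P(iR,jR)=\tfrac29(1-h_j).
\]
This is positive because $h_j\le h\le0$, so $b_j\neq0$.

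For an edge $\{j,k\}\subseteq U$, expand $\langle b_j,b_k\rangle$ into four size-$3$ terms. The terms $P(iR,jG,kG)$ and $P(iR,jB,kB)$ vanish by the edge constraint. The remaining two terms sum to $P(iR,\,j\ne\mathsf R,\,k\ne\mathsf R)$, because the edge forces $j$ and $k$ to take different non-red colors. By inclusion–exclusion this equals $P(iR)-P(iR,jR)-P(iR,kR)+P(iR,jR,kR)$, and the last term is $0$. Therefore
\[
 \langle b_j,b_k\rangle=-\tfrac19\bigl(1-2h_j-2h_k\bigr),
 \qquad
 \Bigl\langle \tfrac{b_j}{\|b_j\|},\tfrac{b_k}{\|b_k\|}\Bigr\rangle
 =-\frac{1-2h_j-2h_k}{2\sqrt{(1-h_j)(1-h_k)}} .
\]

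It remains to compare this with $-1/(\kappa-1)$. For $\kappa=(3-6h)/(1-4h)$ we have $\kappa-1=(2-2h)/(1-4h)$, so $-1/(\kappa-1)=-(1-4h)/(2(1-h))$. Put $s=(h_j+h_k)/2\le h$. AM–GM gives $\sqrt{(1-h_j)(1-h_k)}\le1-s$, and the numerator $1-4s$ is positive, so the ratio is at least $(1-4s)/(1-s)$. The function $f(s)=(1-4s)/(1-s)$ has derivative $-3/(1-s)^2<0$, so it is decreasing, and $s\le h$ yields $f(s)\ge f(h)$. This proves the edge bound. The hypothesis $h\ge-1/2$ guarantees $\kappa\ge2$, so the result is a genuine vector coloring.

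Polynomial-time constructibility is immediate, since each $b_j$ is an explicit difference of two relaxation vectors. The only delicate step is the bookkeeping that every set used has size at most $3$, so that the consistency constraints apply; the final inequality is elementary once AM–GM is applied.
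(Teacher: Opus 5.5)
Your proof is correct. I checked each step:
\begin{itemize}
\item the norm identity $\|b_j\|^2=\tfrac29(1-h_j)$;
\item the edge identity $\langle b_j,b_k\rangle=-\tfrac19(1-2h_j-2h_k)$;
\item the comparison with $-1/(\kappa-1)=-(1-4h)/(2(1-h))$ via AM--GM and the monotonicity of $(1-4s)/(1-s)$;
\item the equivalence $\kappa\ge2\iff h\ge-1/2$, which holds because $1-4h>0$ for $h\le0$.
\end{itemize}

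The paper itself gives no proof of this lemma. It imports Lemma~3.2 of \citet*{BHL26} as a black box, and gets the version for an arbitrary $U$ by applying the published statement to $G[U\cup\{i\}]$. Your argument therefore replaces the citation with a self-contained derivation from the three-round constraints. It has two advantages over the citation. It shows directly why restricting to $U$ costs nothing, since only the root, the marginalization identities, and the edges inside $U$ are used. It also yields the exact normalized edge inner product $-(1-2h_j-2h_k)/\bigl(2\sqrt{(1-h_j)(1-h_k)}\bigr)$, which is stronger than the lemma when the $h_j$ differ.

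Two small points should be tightened.
\begin{enumerate}
\item The two surviving cross terms enter $\langle b_j,b_k\rangle$ with a minus sign. Your final formula has this sign, but the sentence says they ``sum to'' $P(i\mathsf R,\,j\ne\mathsf R,\,k\ne\mathsf R)$; it should say their negatives do.
\item The vanishing of three-element inconsistent pseudo-probabilities, such as $P(i\mathsf R,j\mathsf G,j\mathsf B)$ or $P(i\mathsf R,j\mathsf R,k\mathsf R)$, should be derived rather than asserted. For the inconsistent two-element set $T$ inside it, $\|y_T\|^2=P(T)=0$, so $y_T=0$. The consistency constraint then writes the three-element pseudo-probability as $\langle y_T,y_{(x,C)}\rangle=0$.
\end{enumerate}
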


\begin{lemma}[Coloring for positive root correlation \citep{BHL26}]
\label{lem:bhl-positive-coloring}
For the symmetrized three-round solution, fix a vertex $i$ and a set
$U\subseteq V\setminus\{i\}$.  Suppose that
\[
 \frac1{16}\leq\vartheta\leq\frac14,
 \qquad
 \langle v_i,v_j\rangle\geq\vartheta
 \quad\text{for every }j\in U.
\]
Then the vectors
\[
 a_j=
 y_{\{(i,\mathsf R),(j,\mathsf R)\}}
 -y_{\{(i,\mathsf R),(j,\mathsf G)\}}
 -y_{\{(i,\mathsf R),(j,\mathsf B)\}},
 \qquad j\in U,
\]
are nonzero, and the normalized vectors $a_j/\|a_j\|$ form a vector
$\dfrac{4+8\vartheta}{1+8\vartheta}$-coloring of $G[U]$.  This vector
coloring can be constructed in polynomial time from the three-round
relaxation.
\end{lemma}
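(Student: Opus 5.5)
The plan is a direct computation of the Gram matrix of the vectors $a_j$ in terms of pseudo-probabilities, using only the constraints of the three-round relaxation and the color symmetry. Write $P(\cdot)$ for the inner product that represents a partial coloring. For example, $P(i\mathsf R,j\mathsf C)=\langle y_{(i,\mathsf R)},y_{(j,\mathsf C)}\rangle=\|y_{\{(i,\mathsf R),(j,\mathsf C)\}}\|^2$, and similarly for three vertex–color pairs. Since $|S|\le 3$ throughout, every quantity used is fixed by the relaxation. We need two consequences of the constraints:
\begin{itemize}
\item \emph{Orthogonality.} Vectors indexed by an inconsistent union vanish, so $\langle y_S,y_{S'}\rangle=0$ whenever $S\cup S'$ assigns two colors to one vertex or one color to both ends of an edge.
\item \emph{Marginalization.} Taking the inner product of $\sum_C y_{(k,C)}=y_\varnothing$ with $y_S$ and using the consistency rule $\langle y_{S_1},y_{S_2}\rangle=\langle y_{S_3},y_{S_4}\rangle$ gives $\sum_C P(S\cup\{(k,C)\})=P(S)$ for $|S|\le 2$.
\end{itemize}

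\emph{Step 1: norms and two-vertex marginals.} The three summands of $a_j$ are pairwise orthogonal, because their unions assign two colors to $j$. Hence
\[
\|a_j\|^2=\sum_C P(i\mathsf R,j C)=P(i\mathsf R)=\tfrac13 .
\]
In particular $a_j\neq 0$, and $\|a_j\|^2$ is the same for every $j\in U$. Next, from $y_{(i,\mathsf R)}=\frac13y_\varnothing+\frac{\sqrt2}{3}v_i$ we get
\[
\alpha_j:=P(i\mathsf R,j\mathsf R)=\frac{1+2\langle v_i,v_j\rangle}{9}\ \ge\ \frac{1+2\vartheta}{9}.
\]

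\emph{Step 2: edges inside $U$.} Take an edge $\{j,k\}$ with $j,k\in U$. Expanding $\langle a_j,a_k\rangle$ gives a signed sum of the three-vertex quantities $P(i\mathsf R,jC,kC')$. Terms with $C=C'$ vanish by the edge constraint. The four terms in which exactly one of $j,k$ has color $\mathsf R$ carry sign $-1$. The terms $(\mathsf G,\mathsf B)$ and $(\mathsf B,\mathsf G)$ carry sign $+1$.
\begin{itemize}
\item Marginalizing over $k$'s color with $j$ colored $\mathsf R$ gives $P(i\mathsf R,j\mathsf R,k\mathsf G)+P(i\mathsf R,j\mathsf R,k\mathsf B)=\alpha_j$, since the $k\mathsf R$ term vanishes. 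Symmetrically, the two terms with $k$ colored $\mathsf R$ sum to $\alpha_k$.
\item All nine terms together sum to $P(i\mathsf R)=\tfrac13$, so the two $(+1)$ terms sum to $\tfrac13-\alpha_j-\alpha_k$.
\end{itemize}
Therefore
\[
\langle a_j,a_k\rangle=\tfrac13-2(\alpha_j+\alpha_k),\qquad
\Bigl\langle \tfrac{a_j}{\|a_j\|},\tfrac{a_k}{\|a_k\|}\Bigr\rangle=1-6(\alpha_j+\alpha_k)\le 1-\frac{4(1+2\vartheta)}{3}=-\frac{1+8\vartheta}{3}.
\]
This equals $-1/(\kappa-1)$ with $\kappa=(4+8\vartheta)/(1+8\vartheta)$, which is the claim. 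Each $a_j$ is an explicit linear combination of SDP vectors, so the coloring is obtained in polynomial time. The hypothesis $\vartheta\le\tfrac14$ only keeps $\kappa\ge 2$; the range $\vartheta\ge\tfrac1{16}$ is not needed for the computation itself.

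The main point to check carefully is the marginalization identity for three-element sets. We must verify that $\sum_{C'}\langle y_{\{(i,\mathsf R),(j,C)\}},y_{(k,C')}\rangle=\langle y_{\{(i,\mathsf R)\}},y_{(j,C)}\rangle$ follows from the displayed constraints. This uses $\langle y_{S_1},y_{S_2}\rangle=\langle y_{S_1\cup S_2},y_\varnothing\rangle$ for $|S_1\cup S_2|\le 3$ together with $\sum_{C'}y_{(k,C')}=y_\varnothing$. With that in place, the rest is bookkeeping.
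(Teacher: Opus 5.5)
Your proof is correct. It takes a genuinely different route from the paper only because the paper gives no proof of this lemma: it quotes Lemma~3.4 of \citet*{BHL26} and gets the version for an arbitrary set $U$ by applying the published statement to the subgraph induced by $U\cup\{i\}$. That step implicitly uses that the restricted symmetrized solution is still feasible there.

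Your pseudo-probability computation replaces the citation with a self-contained check, and it gains three things:
\begin{itemize}
\item It is local. Each inner product involves only $i$ and the two endpoints of one edge, so the restriction to $U$ needs no separate reduction.
\item It is exact. You get $\|a_j\|^2=\tfrac13$ and $\langle a_j,a_k\rangle/(\|a_j\|\|a_k\|)=1-6(\alpha_j+\alpha_k)$ with equality.
\item It explains the hypotheses on $\vartheta$. The condition $\vartheta\le\tfrac14$ is exactly $\kappa\ge2$. The condition $\vartheta\ge\tfrac1{16}$ is exactly $\kappa\le3$, so it only marks where the lemma stops improving on the strict vector $3$-coloring.
\end{itemize}

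You should write out one step that the paper only asserts in its overview: why inconsistent three-element vectors vanish. Let $T\subseteq S$ be the inconsistent pair. Then $\|y_T\|^2$ equals one of the displayed zero inner products, so $y_T=0$. Since $T\cup S=S$, the consistency constraint gives $\|y_S\|^2=\langle y_T,y_S\rangle=0$.

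With that line added, every identity you use involves unions of at most three vertex--color pairs. So your argument runs entirely within the $k=3$ constraints as the paper states them.
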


The following rounding theorem is due to \citet*{KMS98};
\citet*[Theorem~2.3]{BHL26} state the real-parameter form.  We need the
bound to be uniform in $\kappa$ near $2$, because at level two we round a
vector $\kappa$-coloring with $\kappa=2+o(1)$.  For completeness, the
short proof is given in Appendix~\ref{sec:kms-proof}.

\begin{theorem}[KMS rounding \citep{KMS98,BHL26}]
\label{thm:kms-extraction}
Fix $K<\infty$.  Uniformly for $2\leq\kappa\leq K$, given a vector
$\kappa$-coloring of a graph $H$ with $m$ vertices and maximum degree at
most $D\geq1$, randomized polynomial-time KMS rounding finds an independent
set of size
\[
 \Omega_K\!\left(
   \frac{mD^{-(1-2/\kappa)}}{\sqrt{\log(2D)}}
 \right).
\]
\end{theorem}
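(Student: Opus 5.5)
We would prove Theorem~\ref{thm:kms-extraction} by the standard threshold rounding with alteration, tracking every constant explicitly so that the bound holds uniformly for $\kappa\in[2,K]$. Let $u_j$ be the unit vectors of the vector $\kappa$-coloring of $H$, and write $\varepsilon=1-2/\kappa\in[0,1-2/K]$. The procedure has three steps. First, sample $g\sim N(0,I)$ and select $S=\{j:\langle g,u_j\rangle\ge s\}$ for a threshold $s\ge0$ fixed below. Second, delete one endpoint of every edge of $H[S]$ (or, equivalently, delete the vertices of a maximal matching, as in the paper's variant). Third, output the remaining vertices, which form an independent set of size at least $|S|-|E(H[S])|$. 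Taking expectations and repeating polynomially many times (or applying Markov's inequality to the deficit) reduces the theorem to the bound
\[
 m\tail(s)-\tfrac12 mD\,\Pr[\langle g,u_i\rangle\ge s,\ \langle g,u_j\rangle\ge s]\ \ge\ \tfrac12 m\tail(s)
\]
for every edge $\{i,j\}$, together with $\tail(s)=\Omega_K\bigl(D^{-\varepsilon}/\sqrt{\log(2D)}\bigr)$.

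For the edge estimate we use the identity $\|u_i+u_j\|^2=2+2\langle u_i,u_j\rangle\le 2(\kappa-2)/(\kappa-1)$. If both endpoints are selected, then $\langle g,u_i+u_j\rangle\ge 2s$, and $\langle g,u_i+u_j\rangle$ is a centered Gaussian with standard deviation $\|u_i+u_j\|$. Hence both endpoints are selected with probability at most $\tail(as)$, where $a^2=2(\kappa-1)/(\kappa-2)$. When $\kappa=2$ we have $u_j=-u_i$, so the event is empty for every $s>0$; we set $a=\infty$ in that case, and the bound is monotone in $a$. The key quantity is $a^2-1=\kappa/(\kappa-2)=1/\varepsilon$. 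The goal is therefore to choose $s$ with $D\,\tail(as)\le\tail(s)$, and we take
\[
 s^2=2\varepsilon\log(2D)+b
\]
for a suitable constant $b=b_K$.

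The ratio $\tail(as)/\tail(s)$ is bounded separately in two regimes. If $s\ge1$, the bounds~\eqref{eq:gaussian-mills} give
\[
 \frac{\tail(as)}{\tail(s)}\le\frac{1+s^2}{a s^2}\,e^{-(a^2-1)s^2/2}\le 2e^{-s^2/(2\varepsilon)}\le \frac{2e^{-b/(2\varepsilon)}}{2D},
\]
which is at most $1/D$ once $b\ge0$, since $a\ge1$ and $1/\varepsilon\ge 1$. If $s<1$, then $\tail(s)\ge\tail(1)$ is a constant, and we bound $\tail(as)\le e^{-a^2s^2/2}$ by~\eqref{eq:gaussian-chernoff}, with $a^2s^2\ge s^2/\varepsilon\ge 2\log(2D)$; hence $D\,\tail(as)\le 1/2$. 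In both regimes the deletion loses at most half of the selection after adjusting constants. Finally, \eqref{eq:gaussian-mills} gives $\tail(s)\ge c\,e^{-s^2/2}/\max(1,s)$. With $s^2/2=\varepsilon\log(2D)+O(1)$ and $s=O_K(\sqrt{\log(2D)})$, this yields $\tail(s)=\Omega_K\bigl(D^{-\varepsilon}/\sqrt{\log(2D)}\bigr)$, which is the claimed bound.

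The main obstacle is uniformity as $\kappa\downarrow2$. There $\varepsilon\to0$ and $a\to\infty$, so $s\to0$ and the Mills-ratio asymptotics stop applying. This is exactly why we use the case split at $s=1$ and the Chernoff bound~\eqref{eq:gaussian-chernoff} rather than the Mills lower bound for the edge term. We must also check that no hidden constant depends on $\varepsilon$: the factor $(1+s^2)/(as^2)$ is at most $2$ for $s\ge1$, and $K$ enters only through the upper bound $s=O(\sqrt{(1-2/K)\log(2D)})$.
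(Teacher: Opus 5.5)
Your argument is the paper's argument. Both use threshold rounding with alteration, the same $a=\sqrt{2(\kappa-1)/(\kappa-2)}$ with $a^2-1=1/\varepsilon$, a threshold of order $\sqrt{2\varepsilon\log(2D)}$, and the Mills lower bound on the selection probability.

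The only real difference is how you compare $\tail(as)$ with $\tail(s)$. The paper integrates the hazard-rate bound $\phi(x)/\tail(x)\geq x$ to get $\tail(y)/\tail(x)\leq e^{-(y^2-x^2)/2}$ for all $0\leq x\leq y$. This gives $\tail(au)\leq\tail(u)/(2D)$ at $u=\sqrt{2\varepsilon\log(2D)}$, with no additive shift and no case split. The paper splits at $u=1$ only when lower-bounding $\tail(u)$, and it treats $\kappa=2$ separately by thresholding at zero. Your pointwise Mills-ratio comparison needs $s\geq1$.

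Your $s<1$ branch does not close as written. The facts $D\tail(as)\leq 1/2$ and $\tail(s)\geq\tail(1)\approx0.159$ do not imply $D\tail(as)\leq\tail(s)$. The resulting lower bound $m\tail(s)-\tfrac12 mD\tail(as)\geq m\tail(s)-m/4$ can even be negative. So ``after adjusting constants'' is doing unstated work there.

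The repair is immediate: take $b\geq1$.
\begin{itemize}
\item Then $s\geq1$ for every $\kappa\in[2,K]$.
\item Your first-regime computation, which is valid for every $b\geq0$, covers all cases, so the small-$s$ branch is vacuous.
\item As $\kappa\downarrow2$, $s$ tends to $\sqrt b$ rather than to $0$. The obstacle described in your last paragraph therefore never arises.
\end{itemize}
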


The other results quoted from earlier work are the two packing theorems of
\citet*{BHL26}, stated in Section~\ref{sec:structural-input} where they are
used, and the two dense--sparse results stated in
Section~\ref{sec:coloring-proof}.

\section{The two-threshold inequality}
\label{sec:two-threshold}

This section proves the analytic ingredient used in every step of the
recursion.  Section~\ref{subsec:overview-two-threshold} explains the idea:
Ehrhard's inequality makes the distribution function of a Gaussian maximum
concave after the change of variable $\Phi^{-1}$, and a concave function
lies below its secant lines.  Knowing the distribution function at a base
threshold and at a farther threshold therefore bounds it at a lower
threshold.  The estimate at the farther threshold comes from a union bound,
so it remembers how many vectors the family contains.

Throughout this section, $X=\{x_1,\ldots,x_m\}$ is a finite family of unit
vectors in $\mathbb R^d$, and
\[
 M(x)=\max_{1\leq j\leq m}\langle x,x_j\rangle,
 \qquad
 K_s=\{x\in\mathbb R^d:M(x)\leq s\},
 \qquad
 F(s)=\Pr[M(g)\leq s]=\gamma_d(K_s).
\]
The set $K_s$ is an intersection of $m$ halfspaces, one for each vector,
and increasing $s$ moves every halfspace outward.  Since each
$\langle g,x_j\rangle$ is a standard normal variable,
$\Pr[M(g)=s]\leq\sum_j\Pr[\langle g,x_j\rangle=s]=0$ for every $s$, so
$F(s)=\Pr[M(g)<s]$ as well.

\begin{lemma}[Concavity]
\label{lem:concavity}
The function $s\mapsto\Phi^{-1}(F(s))$ is concave on $\mathbb R$, with the
conventions $\Phi^{-1}(0)=-\infty$ and $\Phi^{-1}(1)=+\infty$ of
Theorem~\ref{thm:ehrhard}.
\end{lemma}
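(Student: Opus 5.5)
The plan is to obtain concavity directly from Ehrhard's inequality (Theorem~\ref{thm:ehrhard}) applied to the convex sets $K_s$. The key observation is the Minkowski-combination identity
\[
 (1-\lambda)K_{s_0}+\lambda K_{s_1}\subseteq K_{(1-\lambda)s_0+\lambda s_1}
 \qquad(s_0,s_1\in\mathbb R,\ 0\leq\lambda\leq1).
\]
It holds because, if $M(x)\leq s_0$ and $M(y)\leq s_1$, then for every $j$ we have $\langle (1-\lambda)x+\lambda y,x_j\rangle\leq(1-\lambda)s_0+\lambda s_1$. In other words, $M$ is a maximum of linear functions and hence convex, so its sublevel sets combine in this way. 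Each $K_s$ is an intersection of finitely many closed halfspaces, so it is a convex Borel set, and the hypotheses of Theorem~\ref{thm:ehrhard} are satisfied.

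Given the inclusion, we use monotonicity of $\gamma_d$ and then of $\Phi^{-1}$, followed by Ehrhard's inequality:
\[
 \Phi^{-1}\bigl(F((1-\lambda)s_0+\lambda s_1)\bigr)
 \geq\Phi^{-1}\bigl(\gamma_d((1-\lambda)K_{s_0}+\lambda K_{s_1})\bigr)
 \geq(1-\lambda)\Phi^{-1}(F(s_0))+\lambda\Phi^{-1}(F(s_1)).
\]
This is exactly midpoint-type (in fact full) concavity of $s\mapsto\Phi^{-1}(F(s))$ in the extended-real sense.

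The step needing care is the handling of the infinite values. First, $F(s)<1$ for every $s$. Indeed, pick any $x_1$; then $\Pr[M(g)>s]\geq\Pr[\langle g,x_1\rangle>s]>0$, so $\Phi^{-1}(F)$ never equals $+\infty$. Second, $F(s)>0$ for every $s$, since $K_s$ contains a neighborhood of $-Rx$ for large $R$ along a suitable direction, or more simply $K_s\supseteq\{x:\|x\|\leq s\}$ when $s>0$. For $s\leq0$ positivity can fail only in degenerate configurations. For example, if $X$ contains both $u$ and $-u$, then $K_s=\emptyset$ for $s<0$.

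So $\Phi^{-1}(F)$ takes values in $[-\infty,\infty)$. It equals $-\infty$ exactly on a down-set $(-\infty,s^*)$ or $(-\infty,s^*]$, because $F$ is nondecreasing. Concavity in the extended sense with value $-\infty$ allowed then means two things. The function must be finite and concave on the interval where $F>0$, and the displayed inequality must hold trivially whenever a right-hand term is $-\infty$. The latter is immediate, since the right-hand side is then $-\infty$. This is consistent with the convention in Theorem~\ref{thm:ehrhard}, and I expect no genuine obstacle beyond stating these conventions cleanly.
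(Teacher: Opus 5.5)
Your proof is correct and is essentially the paper's argument: convexity of $M$ gives $(1-\lambda)K_{s_0}+\lambda K_{s_1}\subseteq K_{(1-\lambda)s_0+\lambda s_1}$, and Ehrhard's inequality together with the monotonicity of $\gamma_d$ and $\Phi^{-1}$ finishes. One small cleanup: drop the opening claim that $F(s)>0$ for every $s$. Positivity at some $s\leq0$ fails exactly when $0$ lies in the convex hull of the $x_j$, which is not really a degenerate case. This does not affect the argument, because $F<1$ everywhere, so no $\infty-\infty$ can arise and your handling of the $-\infty$ values is fine.
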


\begin{proof}
The function $M$ is convex, so for $0\leq\lambda\leq1$ and
$x\in K_s$, $y\in K_{s'}$ we have
$M(\lambda x+(1-\lambda)y)\leq\lambda s+(1-\lambda)s'$.  Thus
\[
 \lambda K_s+(1-\lambda)K_{s'}
 \subseteq K_{\lambda s+(1-\lambda)s'}.
\]
Each $K_s$ is closed and convex.  Theorem~\ref{thm:ehrhard}, followed by
the monotonicity of $\Phi^{-1}$, gives
$\Phi^{-1}(F(\lambda s+(1-\lambda)s'))
 \geq\lambda\Phi^{-1}(F(s))+(1-\lambda)\Phi^{-1}(F(s'))$.
\end{proof}

This concavity of Gaussian suprema is classical; see \citet*{Bobkov08} for
a discussion of what it says about the possible distributions of a Gaussian
maximum.

\begin{mainresultbox}
\begin{theorem}[Two-threshold inequality]
\label{thm:two-threshold}
Let $X=\{x_1,\ldots,x_m\}$ be unit vectors, let $S\in\mathbb R$ be a base
threshold, and let $T>0$ be a remote displacement.  Suppose that, for some
$\delta,\varepsilon\in(0,1)$,
\[
 \Pr[M(g)\geq S]\geq\delta,
 \qquad
 \Pr[M(g)>S+T]\leq\varepsilon.
\]
Then, for every $a\geq0$,
\begin{equation}
 \Pr[M(g)\leq S-a]
 \leq
 \Phi\!\left[
  \left(1+\frac aT\right)\tail^{-1}(\delta)
  -\frac aT\tail^{-1}(\varepsilon)
 \right].
 \label{eq:exact-two-threshold}
\end{equation}
Moreover, the union bound gives
\begin{equation}
 \Pr[M(g)>S+T]\leq m\,\tail(S+T),
 \label{eq:union-anchor}
\end{equation}
so \eqref{eq:exact-two-threshold} holds with
$\varepsilon=m\tail(S+T)$ whenever this number is less than one.
\end{theorem}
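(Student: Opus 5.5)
The plan is to apply the concavity of $\psi(s)=\Phi^{-1}(F(s))$ from Lemma~\ref{lem:concavity} at the three points $S-a<S<S+T$, using the secant through $S-a$ and $S+T$. First I translate the hypotheses into bounds on $\psi$. Since $F(S)=\Pr[M(g)<S]=1-\Pr[M(g)\geq S]\leq 1-\delta$ (ties have probability zero, as noted before the lemma), monotonicity of $\Phi^{-1}$ gives $\psi(S)\leq\Phi^{-1}(1-\delta)=\tail^{-1}(\delta)$. Similarly $F(S+T)=1-\Pr[M(g)>S+T]\geq1-\varepsilon$, so $\psi(S+T)\geq\tail^{-1}(\varepsilon)$. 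Both quantities are finite because $\delta,\varepsilon\in(0,1)$.

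Next I write $S$ as the convex combination $S=(1-\lambda)(S-a)+\lambda(S+T)$ with $\lambda=a/(a+T)\in[0,1)$. Concavity gives
\[
 \psi(S)\geq\frac{T}{a+T}\psi(S-a)+\frac{a}{a+T}\psi(S+T).
\]
Rearranging,
\[
 \psi(S-a)\leq\Bigl(1+\frac aT\Bigr)\psi(S)-\frac aT\psi(S+T),
\]
and substituting the upper bound on $\psi(S)$ and the lower bound on $\psi(S+T)$ (whose coefficient $-a/T$ is nonpositive) yields the argument of $\Phi$ in \eqref{eq:exact-two-threshold}. Applying the increasing map $\Phi$ then gives the bound on $F(S-a)=\Pr[M(g)\leq S-a]$. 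The case $a=0$ is just $F(S)\leq1-\delta$.

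The one delicate point is the infinite values allowed by the convention $\Phi^{-1}(0)=-\infty$, $\Phi^{-1}(1)=+\infty$. We have $\psi(S)<\infty$ and $\psi(S+T)>-\infty$. If $\psi(S-a)=-\infty$, the claim is trivial because $F(S-a)=0$. If $\psi(S+T)=+\infty$, then concavity forces $\psi(S-a)=-\infty$ unless $a=0$, since otherwise the concavity inequality would give $\psi(S)=+\infty$. So in all nontrivial cases the rearrangement above is legitimate among finite reals, or we can invoke Lemma~\ref{lem:concavity} in its limiting form.

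Finally, \eqref{eq:union-anchor} follows from the union bound, since each $\langle g,x_j\rangle$ is standard normal and so $\Pr[\langle g,x_j\rangle>S+T]=\tail(S+T)$. Taking $\varepsilon=m\tail(S+T)$ when this is less than one then satisfies the hypothesis. I expect no real obstacle beyond the bookkeeping with the infinite conventions.
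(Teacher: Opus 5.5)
Your proposal is correct and follows the paper's proof: concavity of $\Phi^{-1}\circ F$ applied at $S-a<S<S+T$, rearranged and combined with the two quantile bounds, then the union bound for the remote estimate. The only difference is the infinite-value bookkeeping. The paper observes directly that $F(S+T)<1$, because $K_{S+T}$ lies in the halfspace $\{x:\langle x,x_1\rangle\le S+T\}$, so the case $\psi(S+T)=+\infty$ never occurs; your concavity-based dismissal of that case is also valid.
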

\end{mainresultbox}

\begin{proof}
Write $q(s)=\Phi^{-1}(F(s))$.  Fix $a\geq0$.  If $F(S-a)=0$, the claim is
trivial, so assume $F(S-a)>0$.  The base hypothesis gives
$F(S)=1-\Pr[M(g)\geq S]\leq1-\delta<1$, and hence
$0<F(S-a)\leq F(S)<1$.  The remote hypothesis gives
$F(S+T)\geq1-\varepsilon>0$, and $F(S+T)<1$ because $K_{S+T}$ is contained
in the halfspace $\{x:\langle x,x_1\rangle\leq S+T\}$, whose Gaussian
measure is less than one.  Thus $q(S-a)$, $q(S)$, and $q(S+T)$ are finite.

The base threshold is a convex combination of the other two thresholds:
\[
 S=\frac{T}{T+a}(S-a)+\frac{a}{T+a}(S+T).
\]
Lemma~\ref{lem:concavity} gives
$q(S)\geq\frac{T}{T+a}q(S-a)+\frac{a}{T+a}q(S+T)$, and solving for
$q(S-a)$,
\[
 q(S-a)\leq\left(1+\frac aT\right)q(S)-\frac aTq(S+T).
\]
The two hypotheses say
$q(S)\leq\Phi^{-1}(1-\delta)=\tail^{-1}(\delta)$ and
$q(S+T)\geq\Phi^{-1}(1-\varepsilon)=\tail^{-1}(\varepsilon)$.  Substituting
and applying the increasing function $\Phi$ to both sides proves
\eqref{eq:exact-two-threshold}.  Finally, $\{M(g)>S+T\}$ is the union of
the events $\{\langle g,x_j\rangle>S+T\}$, each of probability
$\tail(S+T)$, which gives~\eqref{eq:union-anchor}.
\end{proof}

The next lemma is the asymptotic form used in the recursion.  It applies
to a family whose threshold is of order $t$, whose cardinality is at most
$\exp((1+\theta)S^2/2)$ up to lower-order terms, and whose covering
probability is subexponential.  The remote displacement is chosen as
$T=\theta S$; Remark~\ref{rem:calibration} explains why.  The lemma allows
$\theta$ to vary with $t$ within a fixed compact interval, and the
remainder is uniform over all families satisfying the hypotheses with the
same bounds.  This uniformity is what lets us apply the lemma
simultaneously to the many inner families of one recursion step.

\begin{lemma}[Asymptotic two-threshold comparison]
\label{lem:asymptotic-two-threshold}
Fix constants $0<\theta_0\leq M<\infty$, $A>0$, and
$0<s_-\leq s_+<\infty$, and let $t\to\infty$.  For each $t$, let $X_t$ be
a family of $m_t$ unit vectors, in any dimension, that is an
$(S_t,\delta_t)$-cover with $s_-t\leq S_t\leq s_+t$, and let
$\theta_t\in[\theta_0,M]$.  Suppose that
\begin{equation}
 \log m_t\leq\frac{1+\theta_t}{2}S_t^2+R_t,
 \qquad
 \log\frac1{\delta_t}\leq L_t,
 \qquad
 R_t,L_t=o(t^2).
 \label{eq:asymptotic-hypotheses}
\end{equation}
Then, uniformly over $a\in[0,At]$,
\begin{equation}
 \log\Pr\!\left[\max_{x\in X_t}\langle g,x\rangle\leq S_t-a\right]
 \leq-\frac{1+\theta_t}{2\theta_t}a^2+o(t^2),
 \label{eq:asymptotic-conclusion}
\end{equation}
with the convention $\log0=-\infty$.  The $o(t^2)$ term depends only on
the constants $\theta_0,M,A,s_-,s_+$ and on the sequences $R_t$ and $L_t$.
In particular, if several families satisfy
\eqref{eq:asymptotic-hypotheses} with the same $R_t$ and $L_t$, then
\eqref{eq:asymptotic-conclusion} holds for all of them with a common
remainder.
\end{lemma}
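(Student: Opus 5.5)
We apply Theorem~\ref{thm:two-threshold} with the remote displacement $T=\theta_tS_t$ and with $\varepsilon_t=m_t\tail(S_t+T)$ from the union bound~\eqref{eq:union-anchor}, and then do the asymptotics. All the estimates below use Lemma~\ref{lem:gaussian-estimates}, which makes every error term depend only on the listed constants and on the sequences $R_t,L_t$. That is what gives the uniformity claim.

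\emph{Estimating the two quantiles.} Since $S_t+T=(1+\theta_t)S_t\geq s_-t$, estimate~\eqref{eq:tail-log-uniform} gives
\[
\log\varepsilon_t\leq\frac{1+\theta_t}{2}S_t^2+R_t-\frac{(1+\theta_t)^2}{2}S_t^2+O(\log t)
=-\frac{\theta_t(1+\theta_t)}{2}S_t^2+o(t^2).
\]
This is at most $-\frac{\theta_0}{2}s_-^2t^2+o(t^2)$, so $\varepsilon_t<1$ for large $t$ and the hypothesis of the theorem holds. Next, \eqref{eq:quantile-lower} applied to $H=\log(1/\varepsilon_t)$ gives
\[
\tail^{-1}(\varepsilon_t)\geq\sqrt{\theta_t(1+\theta_t)}\,S_t-o(t).
\]
We need $H\to\infty$ here, and the bound above supplies it. In the other direction, \eqref{eq:quantile-upper} gives $\tail^{-1}(\delta_t)\leq\sqrt{2L_t}=o(t)$.

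\emph{Plugging in.} With $a/T\leq A/(\theta_0s_-)=O(1)$, the argument of $\Phi$ in~\eqref{eq:exact-two-threshold} is at most
\[
-\frac{a}{\theta_tS_t}\sqrt{\theta_t(1+\theta_t)}\,S_t+o(t)
=-a\sqrt{\frac{1+\theta_t}{\theta_t}}+o(t).
\]
The coefficient $(1+a/T)$ is $O(1)$ and multiplies a quantity of size $o(t)$, and the factor $a/T$ is $O(1)$, so the error stays $o(t)$ uniformly in $a\in[0,At]$. Now apply~\eqref{eq:log-phi-bound}, using $(y-o(t))_-^2\geq y^2-2y\,o(t)$ for $y\geq0$ with $y=a\sqrt{(1+\theta_t)/\theta_t}=O(t)$. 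This gives
\[
\log\Pr[M\leq S_t-a]\leq-\frac{1+\theta_t}{2\theta_t}a^2+o(t^2),
\]
which is~\eqref{eq:asymptotic-conclusion}. If $F(S_t-a)=0$, the bound is trivial.

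\emph{Main obstacle.} The only delicate step is the lower bound on $\tail^{-1}(\varepsilon_t)$. We only have an upper bound on $\varepsilon_t$, but that is the right direction because $\tail^{-1}$ is decreasing. We still need $\log(1/\varepsilon_t)$ to be of order $t^2$ with a leading constant bounded below, so that the relative error in~\eqref{eq:quantile-lower} becomes $o(t)$ uniformly. That bound is supplied by $\theta_t\geq\theta_0$ and $S_t\geq s_-t$, while $\theta_t\leq M$ and $S_t\leq s_+t$ keep every coefficient bounded.
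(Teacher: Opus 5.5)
Your proposal is correct and follows essentially the same route as the paper. You apply Theorem~\ref{thm:two-threshold} with $T=\theta_tS_t$ and $\varepsilon_t=m_t\tail((1+\theta_t)S_t)$, lower-bound $\tail^{-1}(\varepsilon_t)$ by $\sqrt{\theta_t(1+\theta_t)}\,S_t-o(t)$ via \eqref{eq:quantile-lower}, bound $\tail^{-1}(\delta_t)\leq\sqrt{2L_t}$, and finish with \eqref{eq:log-phi-bound}; the differences are cosmetic. The paper bounds $\log\varepsilon_t$ with the Chernoff estimate \eqref{eq:gaussian-chernoff}, so no $O(\log t)$ term appears. Also, under the paper's convention $x_-=\max\{-x,0\}$, your ``$(y-o(t))_-$'' should read $(-y+o(t))_-=(y-o(t))_+$.
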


\begin{proof}
Apply Theorem~\ref{thm:two-threshold} to $X_t$ with base threshold $S_t$
and remote displacement $T_t=\theta_tS_t$.  The cover hypothesis gives
$\Pr[M(g)\geq S_t]\geq\delta_t$.  For the remote estimate, put
$\varepsilon_t=m_t\tail((1+\theta_t)S_t)$.  By~\eqref{eq:union-anchor},
$\Pr[M(g)>S_t+T_t]\leq\varepsilon_t$, and by
\eqref{eq:gaussian-chernoff} and the cardinality hypothesis,
\[
 \log\varepsilon_t
 \leq\frac{1+\theta_t}{2}S_t^2+R_t-\frac{(1+\theta_t)^2}{2}S_t^2
 =-H_t,
 \qquad
 H_t:=\frac{\theta_t(1+\theta_t)}{2}S_t^2-R_t.
\]
Since $H_t\geq\theta_0s_-^2t^2/2-R_t\to\infty$, we have $\varepsilon_t<1$
for all large $t$, and the theorem applies.

We translate the two probabilities into quantiles.  Put
$B_t=\sqrt{\theta_t(1+\theta_t)}\,S_t$, so that $2H_t=B_t^2-2R_t$ and
\[
 \theta_0s_-^2t^2\leq B_t^2\leq M(1+M)s_+^2t^2.
\]
For all large $t$, $2R_t\leq B_t^2/2$, and then $H_t\geq B_t^2/4$ and
$\sqrt{2H_t}=B_t\sqrt{1-2R_t/B_t^2}\geq B_t-2R_t/B_t$.  Since
$\tail^{-1}$ is decreasing, \eqref{eq:quantile-lower} gives
\[
 \tail^{-1}(\varepsilon_t)
 \geq\tail^{-1}(e^{-H_t})
 \geq\sqrt{2H_t}-C_0\frac{\log(H_t+2)}{\sqrt{H_t}}
 \geq B_t-\eta_t,
 \qquad
 \eta_t:=\frac{2R_t}{\sqrt{\theta_0}\,s_-t}
        +\frac{2C_0\log(B_t^2+2)}{B_t}.
\]
The displayed bounds on $B_t$ show that $\eta_t=o(t)$ and that $\eta_t$
depends only on the constants and on $R_t$.  At the base threshold,
\eqref{eq:quantile-upper} gives
$\tail^{-1}(\delta_t)\leq b_t:=\sqrt{2L_t}=o(t)$.

Now fix $a\in[0,At]$ and put
$p_a=\Pr[M(g)\leq S_t-a]$.  Inequality~\eqref{eq:exact-two-threshold}
gives
\[
 \Phi^{-1}(p_a)
 \leq\left(1+\frac{a}{\theta_tS_t}\right)b_t
     -\frac{a}{\theta_tS_t}(B_t-\eta_t)
 =-\sqrt{\frac{1+\theta_t}{\theta_t}}\,a
  +\left(1+\frac{a}{\theta_tS_t}\right)b_t
  +\frac{a}{\theta_tS_t}\eta_t,
\]
where we used $aB_t/(\theta_tS_t)=a\sqrt{(1+\theta_t)/\theta_t}$.  Since
$a/(\theta_tS_t)\leq A':=A/(\theta_0s_-)$, the last two terms are at most
\[
 r_t:=(1+A')b_t+A'\eta_t=o(t),
\]
which does not depend on $a$.  Put $\kappa_t=\sqrt{(1+\theta_t)/\theta_t}
\leq\sqrt{(1+\theta_0)/\theta_0}$.  The monotonicity of $\Phi$ and
\eqref{eq:log-phi-bound} give
\[
 \log p_a
 \leq\log\Phi(-\kappa_ta+r_t)
 \leq-\frac12(\kappa_ta-r_t)_+^2
 \leq-\frac{\kappa_t^2}{2}a^2+\kappa_tar_t
 \leq-\frac{1+\theta_t}{2\theta_t}a^2
     +\sqrt{\frac{1+\theta_0}{\theta_0}}\,Atr_t,
\]
where the third inequality uses $(x-y)_+^2\geq x^2-2xy$ for $x,y\geq0$.
The last term is $o(t^2)$, uniformly in $a$, and depends only on the
constants and on $R_t$ and $L_t$.
\end{proof}

\begin{remark}[Choice of the remote displacement]
\label{rem:calibration}
Suppose that the family has $\exp((1+\theta)S^2/2)$ members and consider a
remote threshold $xS$ with $x>\sqrt{1+\theta}$.  The union bound gives
\[
 -\log\Pr[M(g)>xS]
 \geq\frac{x^2-(1+\theta)}2S^2-o(S^2).
\]
In the secant argument, the decay exponent obtained for the lowered
threshold is proportional to this quantity divided by the squared
displacement $(x-1)^2S^2$, that is, to
\[
 f(x)=\frac{x^2-(1+\theta)}{(x-1)^2}.
\]
Since
$f'(x)=2(1+\theta-x)/(x-1)^3$, the function is maximized at $x=1+\theta$,
where it equals $(1+\theta)/\theta$.  This is the remote threshold
$S+T=(1+\theta)S$ used in Lemma~\ref{lem:asymptotic-two-threshold}, and
$(1+\theta)/\theta$ is the factor by which its decay exponent improves on
the exponent $a^2/2$ that follows from Borell's inequality alone.
\end{remark}

\section{A recursive cover step}
\label{sec:recursive-step}

Fix a root vertex $i$.  For any vertex $x$ with
$|\langle v_i,v_x\rangle|<1$, write
\begin{equation}
 h_x=\langle v_i,v_x\rangle,
 \qquad
 r_x=\sqrt{1-h_x^2},
 \qquad
 z_x=\frac{v_x-h_xv_i}{r_x}.
 \label{eq:endpoint-geometry}
\end{equation}
Here $h_x$ is the root correlation of $x$, and $z_x$ is the unit direction
of the component of $v_x$ orthogonal to the root.  A cover by the vectors
$z_x$ describes how the endpoints $x$ spread around the root.  All three
quantities depend only on the vertex $x$, not on the walk that reached it.
Consequently, a family of such vectors indexed by walks is the same
family as the one indexed by the distinct endpoints of those walks, and we
pass freely between the two descriptions; the packing conventions of
Section~\ref{sec:preliminaries} apply.

The following lemma contains all three neighborhood transitions used in
the proof.  It assumes a cover by the projected directions of the current
endpoints, a packing of next-edge directions attached to every current
endpoint, and a bound on the number of current endpoints.  It produces a
cover by the projected directions of the endpoints one edge farther away.
The lemma carries a fixed margin $\omega>0$, as explained in
Section~\ref{subsec:overview-round-or-continue}: the loss $\chi_\omega$ is
computed from $\theta+\omega$ and $\Sigma+\omega$ instead of $\theta$ and
$\Sigma$.  This makes every quantity that must be positive in the proof at
least $\omega$, and it includes the boundary cases $\theta=0$ and
$\Sigma=0$ without separate treatment.

\begin{lemma}[Recursive cover step]
\label{lem:recursive-cover-step}
Fix $c>0$, put $K_c=3(1+c)$, and fix a margin $\omega>0$.  Let
$S^\circ$ be a set of vertices such that
\[
 h_p=h+o(1)\qquad(p\in S^\circ),
 \qquad |h|<1,
 \qquad r=\sqrt{1-h^2}.
\]
Suppose that the vectors $\{z_p:p\in S^\circ\}$ form a
$(\Lambda t-o(t),\delta_0)$-cover with $\delta_0$ subexponential, and
that
\[
 \log|S^\circ|\leq\frac{\mu}{2}t^2+o(t^2).
\]
For each $p\in S^\circ$, let $\nu_p$ be a packing measure at threshold
$\sqrt3t$ for a family $\{v_{pm}:m\in J_p\}$ with $J_p\subseteq N_G(p)$
and
\[
 |J_p|\leq\exp\!\left(\frac{K_c}{2}t^2\right),
\]
and suppose that every $\nu_p$ has mass at least $\delta_1$, where
$\delta_1$ is subexponential.  Assume that, uniformly over all pairs
$p\in S^\circ$, $m\in J_p$,
\[
 \eta_{pm}:=\langle v_i,v_{pm}\rangle=\eta+o(1).
\]
Define
\begin{equation}
 \Sigma=\mu-\Lambda^2,
 \qquad
 \theta=c-(1+c)\frac{\eta^2}{r^2},
 \qquad
 \chi_\omega=\sqrt{\frac{(\theta+\omega)(\Sigma+\omega)}{1+c}},
 \label{eq:step-surplus-room-loss}
\end{equation}
and
\begin{equation}
 C=\frac r2+\tau\frac{h\eta}{r},
 \qquad
 h'=-\frac h2+\tau\eta.
 \label{eq:step-geometry}
\end{equation}
If $\Sigma\geq0$, $\theta\geq0$, $C\geq0$, $|h'|<1$, and
$\Lambda C-\tau\chi_\omega>0$, then the set
\[
 S'^\circ=\bigcup_{p\in S^\circ}J_p
\]
satisfies $h_m=h'+o(1)$ uniformly over $m\in S'^\circ$, and the vectors
$\{z_m:m\in S'^\circ\}$ form a
$(\Lambda'_\omega t-o(t),\ \delta_0/2)$-cover, where
\begin{equation}
 \Lambda'_\omega=
 \frac{\Lambda C+\frac32-\tau\chi_\omega}
      {\sqrt{1-(h')^2}}.
 \label{eq:step-recursion}
\end{equation}
The $o(\cdot)$ terms in the conclusion depend only on those in the
hypotheses, on the constants, and on $\omega$.
\end{lemma}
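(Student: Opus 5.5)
The plan is to reduce the step to one application of Lemma~\ref{lem:asymptotic-two-threshold} for each current endpoint $p$, followed by a union bound over $p$. That union bound is the one sketched in Section~\ref{subsec:overview-two-threshold}.

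\emph{Geometry.} Fix $p\in S^\circ$ and $m\in J_p$. Since $v_{pm}\perp v_p$ and $v_p=h_pv_i+r_pz_p$, the vector $v_{pm}$ decomposes as
\[
 v_{pm}=\eta_{pm}v_i-\frac{h_p\eta_{pm}}{r_p}z_p+q_{pm}u_{pm},
 \qquad u_{pm}\perp\operatorname{span}(v_i,v_p),\ \|u_{pm}\|=1,
\]
with $q_{pm}^2=1-\eta_{pm}^2/r_p^2=q^2+o(1)$, where $q^2:=1-\eta^2/r^2$. The hypothesis $\theta\geq0$ gives $q^2\geq1/(1+c)$, so $q$ is bounded away from zero. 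The edge identity \eqref{eq:edge-identity} then gives $h_m=-h_p/2+\tau\eta_{pm}=h'+o(1)$, uniformly in $p$ and $m$, and
\[
 v_m-h_mv_i=C_{pm}(-z_p)+\tau q_{pm}u_{pm},\qquad C_{pm}=C+o(1).
\]
Hence $\sqrt{1-h_m^2}\,\langle g,z_m\rangle=C_{pm}\langle g,-z_p\rangle+\tau q_{pm}\langle g,u_{pm}\rangle$. Since $g$ and $-g$ have the same law, $\{-z_p\}$ is again a $(\Lambda t-o(t),\delta_0)$-cover. A short side check: $C\geq0$ and $\Lambda C>\tau\chi_\omega\geq0$ force $C>0$ and $\Lambda>0$, so the outer threshold is positive.

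\emph{Inner cover.} I would next turn the packing $\nu_p$ of the $v_{pm}$ into a cover by the unit vectors $u_{pm}$. Write $\langle g,v_{pm}\rangle$ as $q_{pm}\langle g,u_{pm}\rangle$ plus a term $\langle g,w_{pm}\rangle$, where $w_{pm}$ lies in the two-dimensional space $\operatorname{span}(v_i,z_p)$ and $\|w_{pm}\|\leq1$. Choose $\varepsilon_t\to0$ with $\varepsilon_t^2t^2\gg\log(1/\delta_1)$. Then the event that the planar Gaussian $(\langle g,v_i\rangle,\langle g,z_p\rangle)$ has norm at least $\varepsilon_tt$ has probability $\exp(-\Omega(\varepsilon_t^2t^2))\leq\delta_1/2$. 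Off this event, a crossing $\langle g,v_{pm}\rangle\geq\sqrt3t$ forces $\langle g,u_{pm}\rangle\geq S_p:=\sqrt3t/q-o(t)$. So $\{u_{pm}\}_m$ is an $(S_p,\delta_1/2)$-cover with $|J_p|\leq\exp(K_ct^2/2)$. The identity $\theta=(1+c)q^2-1$ gives $K_c/2\cdot t^2=(1+\theta)S_p^2/2+o(t^2)$, so a fortiori $\log|J_p|\leq(1+\theta+\omega)S_p^2/2+o(t^2)$. This makes Lemma~\ref{lem:asymptotic-two-threshold} applicable with room parameter $\theta+\omega\in[\omega,c+\omega]$, uniformly in $p$. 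I would choose
\[
 a=\sqrt{\frac{(\theta+\omega)(\Sigma+\omega)}{1+\theta+\omega}}\;t .
\]
Let $B_p$ be the event that every $u_{pm}$ misses $S_p-a$. The lemma gives $\Pr[B_p]\leq\exp(-(\Sigma+\omega)t^2/2+o(t^2))$ with a common remainder. Also $qa\leq\chi_\omega t$, because $q^2=(1+\theta)/(1+c)$ and $(1+\theta)/(1+\theta+\omega)\leq1$.

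\emph{Combination.} Let $A_p=\{\langle g,-z_p\rangle\geq\Lambda t-o(t)\}$. Each $u_{pm}$ is orthogonal to both $v_i$ and $v_p$, hence to $z_p$, so $A_p$ and $B_p$ are independent. Therefore
\[
 \sum_p\Pr[A_p]\Pr[B_p]\leq\exp\Bigl(\tfrac{\mu-\Lambda^2}{2}t^2-\tfrac{\Sigma+\omega}{2}t^2+o(t^2)\Bigr)=e^{-\omega t^2/2+o(t^2)},
\]
which is $o(\delta_0)$ because $\delta_0$ is subexponential. Consequently, with probability at least $\delta_0/2$, some $p$ has $A_p$ but not $B_p$. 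For the corresponding $m$,
\[
 \sqrt{1-h_m^2}\,\langle g,z_m\rangle\geq C(\Lambda t)+\tau q\Bigl(\tfrac{\sqrt3t}{q}-a\Bigr)-o(t)\geq\Bigl(\Lambda C+\tfrac32-\tau\chi_\omega\Bigr)t-o(t).
\]
This step uses $C\geq0$. Dividing by $\sqrt{1-h_m^2}=\sqrt{1-(h')^2}+o(1)$, which is valid since $|h'|<1$ and the numerator is positive, yields \eqref{eq:step-recursion}.

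\emph{Main obstacle.} I expect the main difficulty to be the inner-cover transfer: the packings are stated for the $v_{pm}$ at $\sqrt3t$, but independence from $A_p$ requires vectors orthogonal to $z_p$ and $v_i$. The plan handles this by discarding the small-probability event that the planar component of $g$ is large, together with careful bookkeeping of the $o(1)$ variation in $\eta_{pm}$ and $h_p$. That bookkeeping must keep every remainder uniform in $p$, so that Lemma~\ref{lem:asymptotic-two-threshold} applies with a single $R_t$ and $L_t$ across all inner families.
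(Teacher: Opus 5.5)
Your proposal is correct and follows essentially the same route as the paper. You project $v_{pm}$ off $\operatorname{span}(v_i,v_p)$, truncate the two removed Gaussian coordinates to obtain an inner cover at $\sqrt3t/q-o(t)$, and apply Lemma~\ref{lem:asymptotic-two-threshold} uniformly in $p$ with loss calibrated to $\Sigma+\omega$. You then combine the independent events $A_p,B_p$ by a union bound over $p$ and recombine through the edge identity; the only differences are bookkeeping (you fix the room at $\theta+\omega$ and absorb finite-$t$ errors into $o(t^2)$ remainders, while the paper defines finite-$t$ quantities $\theta_t,b_t$ with $R_t=0$ and truncates coordinatewise at a $p$-dependent level $\rho_p$).
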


Here $\Sigma$ is the surplus of the outer cover, $\theta$ is the room left
for the projected inner vectors, and $\chi_\omega$ is the resulting loss;
see Section~\ref{subsec:overview-recursion}.  The numerator
in~\eqref{eq:step-recursion} has three contributions: $\Lambda C$ from the
outer cover, $3/2$ from the inner packing, and the loss $\tau\chi_\omega$
needed to make both crossings happen on the same walk.  The denominator
normalizes the new projected direction.

\begin{proof}
\textbf{Projecting the inner packings.}
Fix $p\in S^\circ$ and $m\in J_p$.  Since $v_p=h_pv_i+r_pz_p$ and
$\langle v_{pm},v_p\rangle=0$, we have
$\langle z_p,v_{pm}\rangle=-h_p\eta_{pm}/r_p$.  Hence the vector
\begin{equation}
 u_{pm}=v_{pm}-\eta_{pm}v_i+\frac{h_p\eta_{pm}}{r_p}z_p
 \label{eq:step-projection}
\end{equation}
is the projection of $v_{pm}$ orthogonal to
$\operatorname{span}\{v_i,z_p\}=\operatorname{span}\{v_i,v_p\}$, and a
direct expansion gives
\begin{equation}
 \|u_{pm}\|^2=1-\frac{\eta_{pm}^2}{r_p^2}.
 \label{eq:step-projected-norm}
\end{equation}
Put
\[
 q^2=1-\frac{\eta^2}{r^2}=\frac{1+\theta}{1+c}.
\]
The assumption $\theta\geq0$ gives $q>0$, and since $r_p=r+o(1)$ with
$r>0$, \eqref{eq:step-projected-norm} gives $\|u_{pm}\|=q+o(1)$ uniformly
over all pairs.

The projection removes two Gaussian coordinates, and we control them by
truncation.  Let $p_p=\nu_p(J_p)\geq\delta_1$ and put
\[
 \rho_p=\sqrt{2\log(16/p_p)}+1,
 \qquad
 d_{pm}=|\eta_{pm}|+\left|\frac{h_p\eta_{pm}}{r_p}\right|.
\]
By~\eqref{eq:step-projected-norm}, $|\eta_{pm}|\leq r_p$, and therefore
$d_{pm}\leq r_p+|h_p|\leq2$.  Define
\begin{equation}
 S_{p,t}=
 \min_{m\in J_p}
 \frac{\sqrt3t-\rho_pd_{pm}}{\|u_{pm}\|}.
 \label{eq:projected-threshold}
\end{equation}
The vectors $v_i$ and $z_p$ are orthogonal unit vectors, so
$\langle g,v_i\rangle$ and $\langle g,z_p\rangle$ are standard normal, and
by~\eqref{eq:gaussian-chernoff} the event
\[
 \mathcal T_p=\{|\langle g,v_i\rangle|\leq\rho_p,\
                 |\langle g,z_p\rangle|\leq\rho_p\}
\]
has probability at least $1-4\tail(\rho_p)\geq1-4e^{-\rho_p^2/2}
\geq1-p_p/4$.  On $\mathcal T_p$, a crossing
$\langle g,v_{pm}\rangle\geq\sqrt3t$ implies, by
\eqref{eq:step-projection},
\[
 \langle g,u_{pm}\rangle
 \geq\sqrt3t-\rho_pd_{pm}
 \geq\|u_{pm}\|S_{p,t}.
\]
The packing property~\eqref{eq:packing-measure} applied to $A=J_p$ gives
\[
 \Pr\!\left[
  \max_{m\in J_p}
  \left\langle g,\frac{u_{pm}}{\|u_{pm}\|}\right\rangle
  \geq S_{p,t}
 \right]
 \geq p_p-\Pr[\mathcal T_p^c]
 \geq\frac{3p_p}{4}
 \geq\frac{3\delta_1}{4}.
\]
Thus the normalized projected vectors form an
$(S_{p,t},3\delta_1/4)$-cover, and $3\delta_1/4$ is subexponential.
Since $\delta_1$ is subexponential, $\rho_p=o(t)$ uniformly in $p$, and
therefore
\begin{equation}
 S_{p,t}=\frac{\sqrt3}{q}t-o(t)
 \qquad\text{uniformly in }p.
 \label{eq:projected-calibration}
\end{equation}
In particular, $S_{p,t}/t$ eventually lies between two fixed positive
constants.

Define the room at finite $t$ by
\[
 \theta_t=
 \max_{p\in S^\circ}
 \left(\frac{2\log|J_p|}{S_{p,t}^2}-1\right)_+
 +\omega.
\]
Then $\theta_t\geq\omega$, and $\log|J_p|\leq(1+\theta_t)S_{p,t}^2/2$ for
every $p$.  The bound $|J_p|\leq\exp(K_ct^2/2)$ and
\eqref{eq:projected-calibration} give
\[
 \theta_t
 \leq\left(\frac{K_ct^2}{S_{p,t}^2}-1\right)_++\omega
 \leq\frac{K_cq^2}{3}-1+\omega+o(1)
 =\theta+\omega+o(1).
\]

\medskip
\noindent\textbf{Composing the outer and inner covers.}
Let $s_t=\Lambda t-o(t)$ be the actual outer threshold and define the
surplus at finite $t$ by
\[
 b_t=\frac2{t^2}\Bigl(\log\bigl(|S^\circ|\,\tail(s_t)\bigr)\Bigr)_+,
 \qquad
 \Sigma_t=b_t+\omega.
\]
The cardinality bound on $S^\circ$ and~\eqref{eq:tail-log-uniform} give
$\log(|S^\circ|\tail(s_t))\leq(\mu-\Lambda^2)t^2/2+o(t^2)$, and hence
$b_t\leq\Sigma+o(1)$; also $\Sigma_t\geq\omega$.  Put
\[
 a_t=\sqrt{\frac{\theta_t\Sigma_t}{1+\theta_t}}\,t
 \leq\sqrt{\Sigma+\omega+1}\;t
\]
for all large $t$, and define the inner-failure events
\[
 \mathcal F_p=
 \left\{
  \max_{m\in J_p}
  \left\langle g,\frac{u_{pm}}{\|u_{pm}\|}\right\rangle
  \leq S_{p,t}-a_t
 \right\},
 \qquad p\in S^\circ.
\]
Apply Lemma~\ref{lem:asymptotic-two-threshold} simultaneously to all the
normalized projected families, with $\theta_0=\omega$, with
$M=\theta+\omega+1$, with $A=\sqrt{\Sigma+\omega+1}$, with the common
value $\theta_t$, with $R_t=0$, and with $L_t=\log(4/(3\delta_1))$.  All
hypotheses hold uniformly by the preceding paragraph.  The conclusion, with
$a=a_t$, gives, uniformly in $p$,
\[
 \log\Pr(\mathcal F_p)
 \leq-\frac{1+\theta_t}{2\theta_t}\cdot\frac{\theta_t\Sigma_t}{1+\theta_t}t^2
     +o(t^2)
 =-\frac{b_t}{2}t^2-\frac\omega2t^2+o(t^2)
 \leq-\frac{b_t}{2}t^2-\frac\omega4t^2
\]
for all large $t$.

Let $\mathcal A_p=\{\langle g,-z_p\rangle\geq s_t\}$, so that
$\Pr(\mathcal A_p)=\tail(s_t)$.  The event $\mathcal F_p$ depends only on
the projection of $g$ onto $\operatorname{span}\{v_i,z_p\}^\perp$ and is
therefore independent of $\mathcal A_p$.  By the definition of $b_t$,
\[
 \Pr\!\left[\bigcup_{p\in S^\circ}(\mathcal A_p\cap\mathcal F_p)\right]
 \leq\sum_{p\in S^\circ}\Pr(\mathcal A_p)\Pr(\mathcal F_p)
 \leq|S^\circ|\,\tail(s_t)\,e^{-b_tt^2/2-\omega t^2/4}
 \leq e^{-\omega t^2/4}
 \leq\frac{\delta_0}{2}
\]
for all large $t$, because $\delta_0$ is subexponential.  On the other
hand, the vectors $-z_p$ have the same covering probability as the vectors
$z_p$, so $\Pr[\bigcup_p\mathcal A_p]\geq\delta_0$.  Hence, with
probability at least $\delta_0/2$, some $\mathcal A_p$ occurs while
$\mathcal F_p$ does not.

We convert the inner success back to the unnormalized vectors.  If
$\mathcal F_p$ does not occur, some $m\in J_p$ satisfies
$\langle g,u_{pm}\rangle\geq\|u_{pm}\|(S_{p,t}-a_t)$.  By
\eqref{eq:projected-calibration}, $\|u_{pm}\|S_{p,t}=\sqrt3t-o(t)$.  For
the loss, note that $x\mapsto x/(1+x)$ is increasing, so
$\theta_t\leq\theta+\omega+o(1)$ gives
\[
 \|u_{pm}\|^2\frac{\theta_t}{1+\theta_t}
 \leq\frac{1+\theta}{1+c}\cdot
     \frac{\theta+\omega+o(1)}{1+\theta+\omega+o(1)}
 \leq\frac{\theta+\omega+o(1)}{1+c},
\]
and $\Sigma_t\leq\Sigma+\omega+o(1)$.  Hence
$\|u_{pm}\|a_t\leq\chi_\omega t+o(t)$, and
\begin{equation}
 \langle g,u_{pm}\rangle\geq(\sqrt3-\chi_\omega)t-o(t).
 \label{eq:inner-success}
\end{equation}

\medskip
\noindent\textbf{Passing to the endpoints.}
Let $m\in J_p$.  The edge identity~\eqref{eq:edge-identity},
$v_m=-v_p/2+\tau v_{pm}$, and~\eqref{eq:step-projection} give
\begin{align}
 h_m=\langle v_i,v_m\rangle
 &= -\frac{h_p}{2}+\tau\eta_{pm}=h'+o(1),
 \label{eq:step-endpoint-correlation}\\
 v_m-h_mv_i
 &= -\left(\frac{r_p}{2}
       +\tau\frac{h_p\eta_{pm}}{r_p}\right)z_p+\tau u_{pm}
  = -C_pz_p+\tau u_{pm},
 \qquad C_p=C+o(1).
 \label{eq:step-endpoint-vector}
\end{align}
The first line holds uniformly over $m\in S'^\circ$, which is the first
assertion.  Now suppose that $\mathcal A_p$ occurs and $\mathcal F_p$ does
not, and let $m$ be as in~\eqref{eq:inner-success}.  Then
\[
 \langle g,v_m-h_mv_i\rangle
 =C_p\langle g,-z_p\rangle+\tau\langle g,u_{pm}\rangle
 \geq C_ps_t+\tau(\sqrt3-\chi_\omega)t-o(t)
 \geq\left(\Lambda C+\frac32-\tau\chi_\omega\right)t-o(t).
\]
In the last step we used $C_ps_t\geq\Lambda Ct-o(t)$, which holds because
$C\geq0$ and $C_p=C+o(1)$; here $3/2=\tau\sqrt3$.  Since $|h'|<1$ and
$h_m=h'+o(1)$, the norm $r_m=\sqrt{1-h_m^2}$ equals
$\sqrt{1-(h')^2}+o(1)$ and is bounded away from zero, so dividing by
$r_m$ gives $\langle g,z_m\rangle\geq\Lambda'_\omega t-o(t)$.  Thus, with
probability at least $\delta_0/2$, some $m\in S'^\circ$ satisfies this
inequality, which is the claimed cover.  All choices were made before $g$
was sampled, so $S'^\circ$ is deterministic.
\end{proof}

The hypothesis $\theta\geq0$ says that the next-edge directions are not
too correlated with the root: $|\eta|\leq\rho_cr$ with
$\rho_c=\sqrt{c/(1+c)}$.  At the last transition, some attached directions
may violate this bound.  The next lemma shows that they carry very little
packing mass and can be discarded.

\begin{lemma}[Discarding overcorrelated directions]
\label{lem:discard-overcorrelated}
Fix $c>0$, put $K_c=3(1+c)$ and $\rho_c=\sqrt{c/(1+c)}$, and fix
$r_{\min}>0$.  Let $p$ be a vertex with $r_p\geq r_{\min}$, and let
$\nu_p$ be a packing measure at threshold $\sqrt3t$ for a family
$\{v_{pm}:m\in J_p\}$ with $J_p\subseteq N_G(p)$,
$|J_p|\leq\exp(K_ct^2/2)$, and mass $\delta$, where $\delta=\delta_t$ is
subexponential.  There is a sequence $\epsilon_t\to0$, depending only on
$c$, $r_{\min}$, and $\delta_t$, such that for all large $t$
\[
 \nu_p\bigl\{m\in J_p:\ |\eta_{pm}|\geq\rho_cr_p+\epsilon_t\bigr\}
 \leq\frac\delta4.
\]
\end{lemma}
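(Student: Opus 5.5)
The plan is to bound the packing mass of the bad set by a covering probability, using the packing property \eqref{eq:packing-measure}. Let $B=\{m\in J_p:|\eta_{pm}|\geq\rho_cr_p+\epsilon_t\}$. By \eqref{eq:packing-measure},
\[
 \nu_p(B)\leq\Pr[\exists m\in B:\langle g,v_{pm}\rangle\geq\sqrt3t],
\]
so it suffices to show that this probability is at most $\delta/4$. I would reuse the decomposition from the proof of Lemma~\ref{lem:recursive-cover-step}:
\[
 v_{pm}=\eta_{pm}v_i-\frac{h_p\eta_{pm}}{r_p}z_p+u_{pm},
 \qquad
 \|u_{pm}\|^2=1-\frac{\eta_{pm}^2}{r_p^2},
\]
as in \eqref{eq:step-projection} and \eqref{eq:step-projected-norm}. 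The point is that an overcorrelated direction has a short projected part $u_{pm}$. With only $\exp(K_ct^2/2)$ such vectors, a union bound shows that none of them is likely to carry a crossing by itself. The two coordinates along $v_i$ and $z_p$ are controlled by truncation.

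\textbf{Truncation.} Put $\rho=\sqrt{2\log(32/\delta)}$ and let $\mathcal T=\{|\langle g,v_i\rangle|\leq\rho,\ |\langle g,z_p\rangle|\leq\rho\}$. By \eqref{eq:gaussian-chernoff}, $\Pr[\mathcal T^c]\leq4e^{-\rho^2/2}=\delta/8$. On $\mathcal T$, a crossing $\langle g,v_{pm}\rangle\geq\sqrt3t$ forces
\[
 \langle g,u_{pm}\rangle\geq\sqrt3t-2\rho,
\]
since, as before, $d_{pm}=|\eta_{pm}|+|h_p\eta_{pm}/r_p|\leq r_p+|h_p|\leq2$. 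Because $\delta$ is subexponential, $\rho=o(t)$, with a rate depending only on $\delta_t$.

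\textbf{Length of the projected part.} For $m\in B$ with $u_{pm}\neq0$, we have $\eta_{pm}^2/r_p^2\geq(\rho_c+\epsilon_t/r_p)^2\geq\rho_c^2+2\rho_c\epsilon_t$, using $r_p\leq1$. Hence
\[
 \|u_{pm}\|^2\leq\frac1{1+c}-2\rho_c\epsilon_t.
\]
The case $u_{pm}=0$ is excluded on $\mathcal T$ for large $t$, because then no crossing is possible. The normalized vector therefore must reach the threshold
\[
 \frac{\sqrt3t-2\rho}{\|u_{pm}\|}\geq\sqrt{3(1+c)}\,(1+\kappa\epsilon_t)\,t-o(t),
\]
where $\kappa>0$ depends only on $c$. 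The parameter $r_{\min}$ only guarantees that $z_p$ and the norms are well defined uniformly, so that every $o(\cdot)$ depends only on $c$, $r_{\min}$ and $\delta_t$.

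\textbf{Union bound and choice of $\epsilon_t$.} Take the union bound over at most $|J_p|\leq\exp(K_ct^2/2)$ indices, each estimated by \eqref{eq:gaussian-chernoff}. This bounds the probability of a crossing on $\mathcal T$ by
\[
 \exp\!\Bigl(\tfrac{K_c}{2}t^2-\tfrac{K_c}{2}(1+\kappa\epsilon_t)^2t^2+o(t^2)\Bigr)
 \leq\exp\bigl(-K_c\kappa\epsilon_tt^2+o(t^2)\bigr),
\]
where the $o(t^2)$ term comes from $\rho t$ and is of order $t\sqrt{\log(1/\delta_t)}$.

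The main obstacle is choosing $\epsilon_t\to0$ slowly enough that this bound is at most $\delta/8$. That requires $\epsilon_tt^2$ to dominate both $\log(1/\delta_t)$ and the $\rho t$ error. I would take
\[
 \epsilon_t=\max\Bigl\{\sqrt{\tfrac{\log(32/\delta_t)}{t^2}},\ t^{-1/2}\Bigr\}^{1/2}.
\]
This tends to zero because $\log(1/\delta_t)=o(t^2)$. It also makes $\epsilon_tt^2\gg t\sqrt{\log(1/\delta_t)}\geq\log(1/\delta_t)$ whenever $\log(1/\delta_t)\leq t^2$. Summing the two contributions gives
\[
 \nu_p(B)\leq\frac\delta8+\frac\delta8=\frac\delta4
\]
for all large $t$.
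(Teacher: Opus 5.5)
Your proof is correct and follows the paper's argument: you bound the bad mass by a crossing probability through the packing property, truncate the two Gaussian coordinates along $v_i$ and $z_p$, and union-bound over at most $\exp(K_ct^2/2)$ normalized projected vectors whose norms the overcorrelation makes short. The only difference is bookkeeping: the paper truncates at the $\epsilon_t$-dependent level $\sigma_0\epsilon_t t$ and packages the threshold gain in an auxiliary function $F(\epsilon)\geq k_0\epsilon$, which gives $\epsilon_t\asymp t^{-1/2}+\sqrt{L}/t$ with $L=\log(32/\delta)$; you instead truncate at the fixed level $\sqrt{2L}$ and absorb the resulting $O(t\sqrt{L})$ cross term by choosing the slower but equally valid $\epsilon_t\asymp (L/t^2)^{1/4}+t^{-1/4}$.
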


\begin{proof}
Choose a constant $\sigma_0$ with $0<\sigma_0<\tau\rho_c(1+c)$ and put
\[
 F(\epsilon)
 =\frac{(\sqrt3-2\sigma_0\epsilon)^2}
       {1-(\rho_c+\epsilon)^2}-K_c .
\]
Since $1-\rho_c^2=1/(1+c)$, we have $F(0)=3(1+c)-K_c=0$ and
\[
 F'(0)=6\rho_c(1+c)^2-4\sqrt3\sigma_0(1+c)>0 .
\]
Hence there are constants $k_0>0$ and $\epsilon_0\in(0,1-\rho_c)$ with
$F(\epsilon)\geq k_0\epsilon$ for $0\leq\epsilon\leq\epsilon_0$.  Put
$L=\log(32/\delta)$ and
\[
 \epsilon_t=t^{-1/2}
 +\left(\frac2{\sigma_0}+\frac2{k_0}\right)\frac{\sqrt L}{t}.
\]
Since $L=o(t^2)$, we have $\epsilon_t\to0$, so $\epsilon_t\leq\epsilon_0$
for all large $t$.

Let $\mathcal B_p=\{m\in J_p:|\eta_{pm}|\geq\rho_cr_p+\epsilon_t\}$.  For
$m\in\mathcal B_p$, \eqref{eq:step-projected-norm} and $r_p\leq1$ give
\[
 \|u_{pm}\|^2=1-\frac{\eta_{pm}^2}{r_p^2}
 \leq1-(\rho_c+\epsilon_t)^2 .
\]
Outside the event
$\{|\langle g,v_i\rangle|>\sigma_0\epsilon_tt\}
 \cup\{|\langle g,z_p\rangle|>\sigma_0\epsilon_tt\}$, whose probability
is at most $4\tail(\sigma_0\epsilon_tt)$, a crossing
$\langle g,v_{pm}\rangle\geq\sqrt3t$ with $m\in\mathcal B_p$ forces, by
\eqref{eq:step-projection} and $d_{pm}\leq2$,
\[
 \left\langle g,\frac{u_{pm}}{\|u_{pm}\|}\right\rangle
 \geq\frac{(\sqrt3-2\sigma_0\epsilon_t)t}
          {\sqrt{1-(\rho_c+\epsilon_t)^2}}
 =:s'_t,
 \qquad
 (s'_t)^2=(K_c+F(\epsilon_t))t^2\geq(K_c+k_0\epsilon_t)t^2 .
\]
The packing property, the union bound over $\mathcal B_p$, and
\eqref{eq:gaussian-chernoff} give
\[
 \nu_p(\mathcal B_p)
 \leq4\tail(\sigma_0\epsilon_tt)+|J_p|\tail(s'_t)
 \leq4\exp\!\left(-\frac{\sigma_0^2\epsilon_t^2t^2}{2}\right)
  +\exp\!\left(\frac{K_c}{2}t^2-\frac{K_c+k_0\epsilon_t}{2}t^2\right).
\]
By the choice of $\epsilon_t$, $\sigma_0^2\epsilon_t^2t^2/2\geq2L$, so the
first term is at most $4e^{-2L}\leq\delta/8$.  For the second term,
$k_0\epsilon_tt^2/2\geq t\sqrt L\geq L$ once $L\leq t^2$, which holds
for all large $t$; hence the second term is at most $e^{-L}\leq\delta/8$.
\end{proof}

At the end of each level, we compare the number of candidate endpoints
with a cutoff padded by the margin.  A set above the cutoff is rounded,
while a set below it has the size bound needed for the next cover step.

\begin{lemma}[Cardinality split]
\label{lem:cardinality-split}
Fix $c>0$, put $K_c=3(1+c)$, fix a margin $\omega>0$, and for
$\mu\geq0$ define the cutoff
\begin{equation}
 N_t(\mu)=
 \left\lceil
   \exp\!\left(\frac{\mu+\omega}{2}t^2\right)
 \right\rceil .
 \label{eq:cutoff}
\end{equation}
Let $S^\circ\subseteq\widehat S\subseteq V$, and suppose that a vector
$\kappa_t$-coloring of $G[\widehat S]$ can be constructed in polynomial
time, where $2\leq\kappa_t\leq K$ for a fixed $K$ and
$1-2/\kappa_t\leq\pi_t+o(1)$ for a bounded sequence $\pi_t\geq0$.  Suppose
also that
\[
 \Delta(G)\leq D,
 \qquad
 \log D=\Theta(t^2),
 \qquad
 \log D\leq\frac{K_c}{2}t^2 .
\]
Then exactly one of the following holds.
\begin{enumerate}
\item $|\widehat S|\geq N_t(\mu)$, and KMS rounding on $G[\widehat S]$
 finds, with high probability, an independent set of size at least
 $D^{(\mu+\omega)/K_c-\pi_t-o(1)}\geq D^{\mu/K_c-\pi_t-o(1)}$.
\item $|\widehat S|<N_t(\mu)$, and hence
 \[
  \log|S^\circ|\leq\log|\widehat S|<\frac{\mu+\omega}{2}t^2 .
 \]
 In particular, $S^\circ$ satisfies the cardinality hypothesis of
 Lemma~\ref{lem:recursive-cover-step} with coefficient $\mu+\omega$.
\end{enumerate}
\end{lemma}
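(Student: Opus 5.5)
The two cases are complementary by definition: either $|\widehat S|\geq N_t(\mu)$ or $|\widehat S|<N_t(\mu)$. So the proof consists of verifying the stated consequence in each case separately.

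\textbf{Case 2} is immediate. From $S^\circ\subseteq\widehat S$ and $|\widehat S|<N_t(\mu)$ we get $|\widehat S|<\exp((\mu+\omega)t^2/2)$, since $|\widehat S|$ is an integer below the ceiling of that number. Taking logarithms gives the displayed bound. This is exactly the hypothesis $\log|S^\circ|\leq\frac{\mu+\omega}{2}t^2+o(t^2)$ of Lemma~\ref{lem:recursive-cover-step}, with coefficient $\mu+\omega$.

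\textbf{Case 1.} Here we construct the vector $\kappa_t$-coloring of $G[\widehat S]$ and apply Theorem~\ref{thm:kms-extraction} with $K$ fixed, $m=|\widehat S|$, and maximum degree at most $D$, since $\Delta(G[\widehat S])\leq\Delta(G)\leq D$. The uniformity in $\kappa\in[2,K]$ is exactly why that theorem was stated in uniform form. Running the rounding polynomially many times, together with a Markov-type argument on the set size, yields an independent set of size
\[
\Omega_K\!\left(|\widehat S|\,D^{-(1-2/\kappa_t)}/\sqrt{\log(2D)}\right)
\]
with high probability. The main step is translating this into a power of $D$:
\begin{itemize}
\item Since $\log D\leq K_ct^2/2$, we have $t^2/2\geq\log D/K_c$, so $|\widehat S|\geq\exp((\mu+\omega)t^2/2)\geq D^{(\mu+\omega)/K_c}$. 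This uses $\mu+\omega\geq0$.
\item The factor $D^{-(1-2/\kappa_t)}\geq D^{-\pi_t-o(1)}$ follows from $1-2/\kappa_t\leq\pi_t+o(1)$.
\item Because $\log D=\Theta(t^2)\to\infty$, both $1/\sqrt{\log(2D)}$ and the constant in $\Omega_K$ are of the form $D^{-o(1)}$.
\end{itemize}
Combining these gives size $D^{(\mu+\omega)/K_c-\pi_t-o(1)}$. The final inequality, $D^{(\mu+\omega)/K_c-\pi_t-o(1)}\geq D^{\mu/K_c-\pi_t-o(1)}$, holds because $\omega>0$ and $D\geq1$.

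The only mildly delicate point is the conversion of the high-probability claim. Theorem~\ref{thm:kms-extraction} gives an expected or constant-probability size. To upgrade it, I would argue that the independent set has size at most $m$, so with probability $\Omega(1/\mathrm{poly})$ it reaches half the expectation, and independent repetitions then boost this to high probability. Boundedness of $\pi_t$ ensures the exponent is meaningful, but it is not otherwise needed.
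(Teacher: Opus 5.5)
Your proof is correct and follows the paper's argument essentially line by line. The two cases are complementary, Case~2 is the ceiling observation, and Case~1 combines $|\widehat S|\geq D^{(\mu+\omega)/K_c}$ (from $\log D\leq K_ct^2/2$) with Theorem~\ref{thm:kms-extraction} and $\sqrt{\log(2D)}=D^{o(1)}$. Your boosting argument for the high-probability claim matches the repetition argument that the paper places at the end of the proof of Proposition~\ref{prop:failed-rounding-progress}.
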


\begin{proof}
The two cases are exhaustive because $|\widehat S|$ and $N_t(\mu)$ are
integers.  In the first case,
$|\widehat S|\geq\exp((\mu+\omega)t^2/2)\geq D^{(\mu+\omega)/K_c}$.
Theorem~\ref{thm:kms-extraction} applied to $G[\widehat S]$, whose maximum
degree is at most $D$, gives an independent set of size
$\Omega(|\widehat S|D^{-(1-2/\kappa_t)}/\sqrt{\log(2D)})$, and
$\sqrt{\log(2D)}=D^{o(1)}$ because $\log D\to\infty$.  In the second case,
the definition of the ceiling gives
$|\widehat S|<\exp((\mu+\omega)t^2/2)$.
\end{proof}

\section{Structural input from a failed rounding}
\label{sec:structural-input}

This section proves the proposition that starts the recursion.  It says
that a failed rounding yields a root whose neighbor directions form a
packing, together with packings of next-edge directions attached to the
retained walks of length one, two, and three, with the correlations along
these walks confined to prescribed ranges.  The proof adapts the method of
\citet*{BHL26} to a graph of maximum degree at most $D$.  It quotes two
packing theorems from that paper and supplies all remaining arguments.

\begin{definition}[Inefficiency and spread]
An $(s,\delta)$-packing $(X,\nu)$ is \emph{at most $a$-inefficient} if
\[
 |X|\leq\tail(s)^{-(1+a)}.
\]
For a set $V_0$ of unit vectors, it is $(\lambda,p,V_0)$-\emph{spread} if,
for every $v\in V_0$,
\[
 \nu\{x\in X:\langle v,x\rangle\geq\lambda\}\leq p.
\]
\end{definition}

Thus an efficient packing has few vectors relative to its threshold, and a
spread packing puts little mass near any direction in $V_0$.

\begin{theorem}[Packing-spread theorem, Theorem~4.11 of \citet*{BHL26}]
\label{thm:bhl-packing-spread}
Fix $a>0$.  Let $(X,\nu)$ be an at most $a$-inefficient
$(s,\delta)$-packing, where $s\to\infty$.  There is a restriction
$X'\subseteq X$ such that
\[
 \nu(X')\geq\nu(X)-\frac1{\log s},
\]
and $(X',\nu|_{X'})$ is $(\lambda_s,p_s,X')$-spread, where
\[
 \lambda_s=\frac{a}{1+a}+O\!\left(\frac1{\log s}\right),
 \qquad
 p_s=\exp[-\Omega(\log^2s)].
\]
The constants in the asymptotic terms depend only on $a$.
\end{theorem}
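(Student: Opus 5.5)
This theorem is quoted from \citet*{BHL26}, so the paper relies on their proof. If I had to reconstruct it, I would take the restriction $X'$ to be what survives an iterative pruning, and the whole argument would be a bound on the mass that the pruning removes. Fix $\lambda=\lambda_s$ and $p=p_s$. Call $x\in X$ \emph{crowded} if $\nu\{y\in X:\langle x,y\rangle\geq\lambda\}>p$. Repeatedly delete crowded vectors, recomputing crowdedness in the current family, until none remain. The survivor $X'$ is then $(\lambda,p,X')$-spread by construction, because restricting $\nu$ only lowers the masses of the caps around vectors of $X'$. It remains to show that the deleted mass is at most $1/\log s$.

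\textbf{Step 1: crowded caps force correlated crossings.} Let $x$ be crowded, and let $N(x)$ be the cap of vectors $y$ with $\langle x,y\rangle\geq\lambda$. By the packing property, some $y\in N(x)$ crosses $s$ with probability greater than $p$. Write $g=\langle g,x\rangle x+g^\perp$. For $y\in N(x)$, the condition $\langle g,y\rangle\geq s$ needs either $\langle g,x\rangle$ of order $\lambda s$ or an orthogonal contribution of order $(1-\lambda^2)^{1/2}s$. Since $p=\exp[-\Omega(\log^2 s)]$ is only quasi-polynomially small, the Borell/Mills estimates of Lemma~\ref{lem:gaussian-estimates} should show the following. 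Conditioned on the cap being hit, the event $\langle g,x\rangle\geq\lambda s-O(s/\log s)$ retains most of the probability.

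\textbf{Step 2: counting against inefficiency.} Sum the events from Step~1 over the deleted vectors. Each deleted $x$ carries mass at most $\Pr[\langle g,x\rangle\geq s]=\tail(s)$. The family has at most $\tail(s)^{-(1+a)}\approx e^{(1+a)s^2/2}$ members. The value $\lambda=a/(1+a)$ is what makes the exponents balance: at this correlation, the number of members that a single crossing event can ``serve'' through correlated neighbors matches the surplus $e^{as^2/2}$ allowed by the inefficiency. I would therefore run a double-counting or charging argument. Each unit of deleted mass is charged to crossing events of its cap neighbours. The total charge is bounded by $\mathbb E[\#\{x:\langle g,x\rangle\geq\lambda' s\}]$ restricted to the crossing event, where $\lambda'$ is slightly above $\lambda$. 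The $O(1/\log s)$ slack in $\lambda_s$ converts this balance into a loss factor $e^{-\Omega(\log^2 s)}$, and the $\log s$ levels of the iteration then lose only $1/\log s$ in total.

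\textbf{Main obstacle.} I expect Step~2 to be the hard part. The crossing events of different caps overlap heavily, so the naive union bound over $X$ at the correlated threshold is far too lossy: $|X|\,\tail(\lambda s)$ is exponentially large. The charging has to use the packing inequality $\nu(A)\leq\Pr[\exists y\in A:\langle g,y\rangle\geq s]$ on well-chosen subfamilies rather than on single vectors. My first attempt would be a greedy selection of crowded vectors whose caps carry disjoint portions of mass, with the packing mass bound $\nu(X)\leq 1$ used to cap the number of such caps. I would then check whether the constants close with $\lambda$ exactly $a/(1+a)$. If they do not, I would fall back on citing Theorem~4.11 of \citet*{BHL26} directly.
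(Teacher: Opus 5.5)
\noindent Your reconstruction does not prove the theorem, and its one concrete step is false. In Step~1 you claim that, conditioned on the cap $N(x)$ being hit, the event $\langle g,x\rangle\ge\lambda s-O(s/\log s)$ keeps most of the probability. Since $\lambda=\lambda_s\to a/(1+a)>0$, that event has unconditional probability at most $\tail\bigl((1-o(1))\lambda s\bigr)\le\exp\bigl(-(1-o(1))\lambda^2s^2/2\bigr)$. The cap, however, is hit with probability greater than $p_s$, which you yourself note is only quasi-polynomially small. So the conditional probability is $o(1)$, not $1-o(1)$.

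The smallness of $p_s$ in fact forces the opposite picture. On most of the hit event, $\langle g,x\rangle=O(\log s)$. The crossing must then come from the components of the cap vectors orthogonal to $x$, and these must cover at the raised threshold $\approx s/\sqrt{1-\lambda^2}$ with probability of order $p_s$.

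The tight example makes this concrete. Take the vectors $y=\lambda_cv+\sqrt{1-\lambda_c^2}\,e_y$, with about $e^{(1+a)s^2/2}$ orthonormal $e_y\perp v$, a center $v\notin X$, and $\lambda_c^2$ slightly below $a/(1+a)$. They form an at most $a$-inefficient packing of mass $1-o(1)$, and their pairwise correlations are $\lambda_c^2$. Hence every member's cap at that correlation carries almost all the mass. Yet for each member $x$, $\langle g,x\rangle=O(1)$ on the hit event.

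Step~2 contains no argument. The charging quantity $|X|\tail(\lambda's)$ is exponentially large, as you note. Nothing bounds the number of pruning rounds, so the ``$\log s$ levels'' are unsupported. And no mechanism singles out the constant $a/(1+a)$. The pruning scheme itself is harmless, since survivors are spread by construction. But it only reduces the theorem to its entire content: that the removed crowded vectors carry total mass at most $1/\log s$. That is never established.

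The paper does not reprove the theorem either. It quotes Theorem~4.11 of \citet*{BHL26}, so your fallback of citing it is the paper's route. However, the paper adds one step that your fallback omits. The published statement bounds $\nu(X')$ from below in terms of the packing parameter $\delta$, not the actual mass. The paper observes that the packing inequality applied to $A=X$ gives $\nu(X)\le1$. So $(X,\nu)$ is also an at most $a$-inefficient $(s,\nu(X))$-packing: inefficiency does not involve $\delta$, and the asymptotic constants depend only on $a$. Applying the published theorem with $\delta=\nu(X)$ then yields the displayed bound $\nu(X')\ge\nu(X)-1/\log s$.

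This form matters. In the proof of Proposition~\ref{prop:structural-input}, the theorem is applied at every vertex of the core, where the packings have varying masses of at least $1/16$. The total deleted weight is bounded by $|V_0|/\log s$ only because each loss is measured against the actual mass rather than against a common lower bound $\delta$.
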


The displayed mass conclusion follows from the published statement by
taking its packing parameter to be the actual total mass $\nu(X)$, which is
legitimate because the first condition in~\eqref{eq:packing-measure},
applied to all of $X$, makes $(X,\nu)$ an $(s,\nu(X))$-packing.

\begin{theorem}[Cross-packing theorem, Corollary~5.4 of \citet*{BHL26}]
\label{thm:bhl-cross-packing}
Fix $a>0$.  For $r=1,2$, let $(X_r,\nu_r)$ be an at most
$a$-inefficient $(s,\delta_r)$-packing, where $s\to\infty$.  There is a
restriction $X'_1\subseteq X_1$ with
\[
 \nu_1(X'_1)\geq\delta_1-O(1/\log s)
\]
such that $(X_2,\nu_2)$ is
$(\lambda_s,p_s,X'_1\cup(-X'_1))$-spread, where $\lambda_s$ and $p_s$ are
as in Theorem~\ref{thm:bhl-packing-spread}.  Again the constants depend
only on $a$.
\end{theorem}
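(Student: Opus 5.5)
The plan is to rerun the mechanism behind Theorem~\ref{thm:bhl-packing-spread}, but with the centers taken from $X_1$ and the measured mass taken from $\nu_2$. Call $v\in X_1$ \emph{bad} if $v$ or $-v$ is a heavy center for $\nu_2$, that is,
\[
 \nu_2\{x\in X_2:\langle \pm v,x\rangle\geq\lambda_s\}>p_s .
\]
Define $X_1'$ to be the set of good vectors. The spread conclusion for $(X_2,\nu_2)$ with respect to $X_1'\cup(-X_1')$ then holds by construction. The whole task is the mass bound $\nu_1(\text{bad})=O(1/\log s)$. Since $g$ and $-g$ have the same law, replacing $X_2$ by $-X_2$ preserves the packing and inefficiency hypotheses. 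So it suffices to treat the centers $+v$, at the cost of a factor $2$.

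To bound the bad mass, fix a bad $v$ and let $C_v$ be its heavy cap in $X_2$. By the packing property~\eqref{eq:packing-measure}, $\Pr[\exists x\in C_v:\langle g,x\rangle\geq s]\geq\nu_2(C_v)>p_s$. Every $x\in C_v$ satisfies $x=\lambda' v+\sqrt{1-\lambda'^2}\,w$ with $\lambda'\geq\lambda_s$ and $w\perp v$.

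Next, condition on the crossing event $\{\langle g,v\rangle\geq s\}$, which has probability $\tail(s)$. On this event, the component of $g$ along $v$ already supplies $\lambda' s$ toward every $x\in C_v$. The family $C_v$ therefore crosses $s$ with probability that is larger than its unconditional value by a factor of the form $\exp(\Omega(\lambda_s s^2))$. This factor is computed with the Gaussian comparison from Section~\ref{sec:two-threshold} applied to the residual family $\{w\}$.

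Summing the resulting lower bounds over the bad $v$, weighted by $\nu_1$, gives a lower bound on $\mathbb E\,\#\{x\in X_2:\langle g,x\rangle\geq s\}$. The opposite, upper bound comes from the union bound $|X_2|\tail(s)\leq\tail(s)^{-a}$, which is the inefficiency hypothesis. The parameter $\lambda_s=a/(1+a)+O(1/\log s)$ is exactly where these two exponents balance. Any positive margin beyond that value makes the excess factor $\exp[\Omega(\log^2 s)]$, and this beats $1/p_s$. The conclusion is that the bad mass is only $O(1/\log s)$.

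The main obstacle is converting the heaviness of many \emph{distinct} caps into a single global contradiction. Different bad centers may share overlapping caps, so summing cap masses naively overcounts. My first attempt would be a greedy disjointification: repeatedly pick a bad center, remove its cap, and charge its $\nu_1$-mass to the $\nu_2$-mass removed. I would then invoke Theorem~\ref{thm:bhl-packing-spread} on $(X_2,\nu_2)$ itself, which shows that once $\nu_2$ is self-spread, a cap around a foreign center can be heavy only for $\nu_1$-few centers. If this charging fails, the fallback is to apply Theorem~\ref{thm:bhl-packing-spread} directly to the combined packing $(X_1\cup(-X_1)\cup X_2,\tfrac12(\nu_1+\nu_2))$. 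That packing is still $O(a)$-inefficient. The remaining work would be to check that its proof certifies spreadness against the \emph{full} $\nu_2$, not only against the retained part of $\nu_2$.
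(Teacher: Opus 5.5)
\paragraph{Gap in the central step.} The paper does not prove this statement; it quotes it from Corollary~5.4 of \citet*{BHL26}, so your argument has to stand on its own. It does not. The one substantive step, $\nu_1(\text{bad})=O(1/\log s)$, rests on an inference that is false.

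Conditioning on $\{\langle g,v\rangle\geq s\}$ describes the \emph{typical} part of that event. The packing property, however, only gives $\nu_1(v)\leq\Pr[\langle g,v\rangle\geq s]$. For an inefficient packing, $\nu_1(v)$ can be a $\tail(s)^{a+o(1)}$-fraction of this probability, paid for by atypical $g$. Weighting conditional crossing counts by $\nu_1(v)$ therefore does not lower-bound $\mathbb E\,\#\{x\in X_2:\langle g,x\rangle\geq s\}$. Weighting by $\Pr[\langle g,v\rangle\geq s]$ instead would produce a mixed second moment, not this first moment.

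\paragraph{A concrete counterexample.} Let $\lambda_0=\sqrt{a/(1+a)}$, and let $W\subseteq e_0^\perp$ be orthonormal with $|W|=\lceil\tail(\sqrt{1+a}\,s)^{-1}\rceil$. Put $X_1=X_2=\{\lambda_0e_0+\sqrt{1-\lambda_0^2}\,w:w\in W\}$ with a uniform measure of constant total mass.
\begin{itemize}
\item This is a valid, at most $a$-inefficient packing.
\item All pairwise correlations equal $a/(1+a)$, so for every fixed $\lambda<a/(1+a)$ every center is bad.
\item Given $\langle g,v\rangle\geq s$, typically $\langle g,e_0\rangle\approx\lambda_0s$. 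Then about $\exp\bigl(\frac{a(2+a)}{1+a}\cdot\frac{s^2}{2}-o(s^2)\bigr)\gg\tail(s)^{-a}$ members of $X_2$ cross.
\item The packing mass is paid for by the covering event. On that event typically $\langle g,e_0\rangle=O(1)$, and only $e^{O(s)}$ members cross.
\end{itemize}
So ``large conditional enhancement, hence small bad mass'' fails. The example also shows that $a/(1+a)$ is sharp. Your claim that the two exponents balance exactly at $a/(1+a)$ is asserted, not computed. Carrying out the comparison you describe (crossings of $C_v$ at $\langle g,v\rangle\approx s$ against the budget $\tail(s)^{-a}$) gives a balance point strictly below $a/(1+a)$ when $a<1$, which the example rules out.

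\paragraph{The fallbacks do not repair this.}
\begin{itemize}
\item Self-spreadness of $(X_2,\nu_2)$ says nothing about foreign centers. In the example, $e_0$ has a cap of full $\nu_2$-mass at every level up to $\sqrt{a/(1+a)}$, which exceeds $\lambda_s$, yet $X_2$ is self-spread above $a/(1+a)$. What excludes such centers is that they must carry $\nu_1$-mass, and that is exactly what your argument never uses correctly.
\item Applying Theorem~\ref{thm:bhl-packing-spread} to the combined packing also fails. As quoted, that theorem only makes $(X',\nu|_{X'})$ spread, with caps measured inside the retained set. The statement you are proving needs caps in the full $(X_2,\nu_2)$. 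The discarded mass, up to $1/\log s$, is far larger than $p_s=\exp[-\Omega(\log^2 s)]$, so it can form a heavy cap around a retained $v\in X_1$.
\item In the same route, you would also need positive mass on $-X_1$ so that $-v$ is retained. ``$O(a)$-inefficient'' is not enough either: $\lambda_s$ depends on $a$ exactly, so you need $(a+o(1))$-inefficiency together with constants uniform in $a$.
\end{itemize}
Closing the gap requires one of two things, and neither is supplied. One is a one-shot form of the packing-spread theorem, with bad centers defined against the full measure and their total mass bounded. The other is a direct argument that controls crossings on the part of $\{\langle g,v\rangle\geq s\}$ that actually pays for $\nu_1(v)$.
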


\begin{proposition}[Structural input from a failed rounding]
\label{prop:structural-input}
Fix $0<\widehat c<c$ and put $r_c=c/(1+c)$.  Let $G$ be an $n$-vertex
$3$-colorable graph equipped with a solution to the three-round relaxation
of Section~\ref{subsec:overview-sdp}.  Suppose $D\to\infty$ and
$\Delta(G)\leq D$, and choose $t$ by
\[
 \tail(t)=D^{-1/[3(1+\widehat c)]}.
\]
If the modified KMS rounding fails, then there exist a root $i$, a set
$J\subseteq N(i)$, sets $J_j\subseteq N(j)$ for $j\in J$, sets
$L_k\subseteq N(k)$ for $k\in W:=\bigcup_{j\in J}J_j$, and sets
$M_\ell\subseteq N(\ell)$ for $\ell\in\bigcup_{k\in W}L_k$, with the
following properties.  Every family below carries a subexponential packing
at threshold $\sqrt3t$; in fact all masses are at least
$(\log\log t)^{-3}$ for all large $t$.
\begin{enumerate}
\item[(a)] The directions $\{v_{ij}:j\in J\}$ form such a packing.
\item[(b)] For every $j\in J$, the directions $\{v_{jk}:k\in J_j\}$ form
 such a packing, and every $k\in J_j$ satisfies
 \[
  -o(1)\leq\langle v_{ji},v_{jk}\rangle\leq r_c+o(1).
 \]
\item[(c)] For every $k\in W$, the directions $\{v_{k\ell}:\ell\in L_k\}$
 form such a packing, and every $\ell\in L_k$ satisfies
 \[
  -\frac{3\sqrt3}{4}r_c-o(1)
  \leq\langle v_i,v_{k\ell}\rangle
  \leq\frac{\sqrt3}{2}r_c+o(1).
 \]
\item[(d)] For every $\ell\in\bigcup_{k\in W}L_k$, the directions
 $\{v_{\ell m}:m\in M_\ell\}$ form such a packing.
\end{enumerate}
Every family has at most $D$ members, and all $o(1)$ terms are uniform
over the retained families.
\end{proposition}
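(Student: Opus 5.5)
The plan follows the BHL pruning scheme, with the maximum-degree bound as the source of efficiency. The first step turns failure into packings. Failure means that for at least half of the vertices $i$, $\Pr[i\text{ deleted}\mid i\text{ selected}]\geq1/2$. A deleted vertex is matched to a selected neighbor $j$, and then $\langle g,v_{ij}\rangle\geq\sqrt3t$. So for such an $i$, put the weight $\nu_i(j)=\Pr[\{i,j\}\text{ matched}]/\tail(t)$ on the direction $v_{ij}$. Matched edges at $i$ are disjoint events, and the event that $\{i,j\}$ is matched implies $\langle g,v_{ij}\rangle\geq\sqrt3t$. Averaging over $\langle g,v_i\rangle\geq t$ and using independence of the orthogonal component then shows that $\nu_i$ is a $(\sqrt3t,\Omega(1))$-packing. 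This is essentially BHL's argument, and I will redo it in enough detail to check that the $1/2$ constant survives.

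Efficiency comes next. Each family lies in a neighborhood, so it has at most $D$ members. Since $D=\tail(t)^{-3(1+\widehat c)}$ and $\tail(\sqrt3t)=\tail(t)^{3+o(1)}$ by~\eqref{eq:tail-log-uniform}, we get $|X|\leq\tail(\sqrt3t)^{-(1+\widehat c)-o(1)}\leq\tail(\sqrt3t)^{-(1+c)}$ for large $t$. Hence every neighbor-direction family is at most $c$-inefficient, which is exactly the hypothesis of Theorems~\ref{thm:bhl-packing-spread} and~\ref{thm:bhl-cross-packing} with $a=c$ and $\lambda_s\to r_c$. I will also use that a vertex adjacent to a failing vertex is not special. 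To get packings at every vertex of the walks, I restrict the root and the successive neighbors to failing vertices, keeping those carrying mass. A counting argument over the half of the vertices that fail, applied level by level and losing constant factors each time, should leave masses of order $\Omega(1)$ minus $O(1/\log t)$ losses. That is comfortably above $(\log\log t)^{-3}$.

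The correlation constraints come from spreading. For (b), apply Theorem~\ref{thm:bhl-packing-spread} to the family at $j$ and remove the directions with $\langle v_{ji},v_{jk}\rangle\geq\lambda_s$. Here $v_{ji}$ is itself a direction in $j$'s family, because $i\in N(j)$, so the spread condition at $v=v_{ji}$ bounds the removed mass by $p_s$. This gives the upper bound $r_c+o(1)$. The lower bound $-o(1)$ is the BHL observation that $k$ and $i$ are both selected-type neighbors crossing together. I expect to derive it from Theorem~\ref{thm:bhl-cross-packing} with $X_1$ the outer family, discarding $k$ with $\langle v_{jk},-v_{ji}\rangle$ large.

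For (c), write $v_i$ in the frame at $k$. By the edge identities, $\langle v_i,v_{k\ell}\rangle$ is a combination of $\langle v_{kj},v_{k\ell}\rangle$ and $\langle v_{ij}$-type vectors$,v_{k\ell}\rangle$. The spread bounds $|\langle\cdot,v_{k\ell}\rangle|\leq r_c+o(1)$, obtained from the packing-spread and cross-packing theorems against $\pm$ the relevant directions, combined with the coefficients $\tau$ and $3/4$, yield the stated interval $[-\tfrac{3\sqrt3}{4}r_c,\tfrac{\sqrt3}{2}r_c]$. Part (d) needs no correlation constraint, only the packing at failing $\ell$.

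The main obstacle is uniformity while restricting. Each removal must cost only $O(1/\log t)$ plus $p_s$ times the family size, the $o(1)$ terms must be uniform, and the surviving sets must still consist of failing vertices with large mass. I would handle this with Lemma~\ref{lem:common-bucket}-style averaging. Whenever an inner family loses more than half its mass, discard that outer index, and charge the loss by Markov's inequality. Three nested levels of Markov losses cost at most a constant each, and the theorems' $1/\log s$ losses stay below $(\log\log t)^{-3}$.
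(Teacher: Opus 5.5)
Your plan has a genuine gap at the one-sided bound $\langle v_{ji},v_{jk}\rangle\geq-o(1)$ in (b). Theorems~\ref{thm:bhl-packing-spread} and~\ref{thm:bhl-cross-packing} only bound the mass of directions whose correlation with a given vector is at least $\lambda_s=r_c+o(1)$. Applied at $v=-v_{ji}$, as you propose, they discard the $k$ with $\langle v_{ji},v_{jk}\rangle\leq-\lambda_s$. What you actually get is therefore $\langle v_{ji},v_{jk}\rangle\geq-r_c-o(1)$, with $r_c$ a fixed positive constant. That gives only $\beta\geq\frac14-\frac34r_c$, not $\beta\geq\frac14-o(1)$.

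The same gap breaks (c). In $\langle v_i,v_{k\ell}\rangle=-\frac{\sqrt3}{4}\langle v_{kj},v_{k\ell}\rangle+\frac{\sqrt3}{2}\langle v_{ji},v_{k\ell}\rangle$, your symmetric bounds $|\langle v_{kj},v_{k\ell}\rangle|,|\langle v_{ji},v_{k\ell}\rangle|\leq r_c+o(1)$ give the upper endpoint $\frac{3\sqrt3}{4}r_c$. The stated endpoint $\frac{\sqrt3}{2}r_c$ needs the one-sided bound $\langle v_{kj},v_{k\ell}\rangle\geq-o(1)$ at the edge $(j,k)$. The heuristic that $i$ and $k$ ``cross together'' does not supply this quantitatively.

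The missing idea is a pruning step with a vanishing cutoff, which no spread theorem can replace. Put $L=\log\log t$. While some oriented edge $(i,j)$ has $\mu_j\{k:\langle v_{ji},v_{jk}\rangle\geq-1/L\}<L^{-2}$ in the current graph, delete every edge $\{j,k\}$ with $k$ in that set; this includes $k=i$. Suppose $i_1,\dots,i_q$ trigger at $j$ in this order. For $a<b$, $i_b$ survived the deletion caused by $i_a$, so $\langle v_{ji_a},v_{ji_b}\rangle<-1/L$. Then $0\leq\bigl\|\sum_a v_{ji_a}\bigr\|^2<q-q(q-1)/L$ forces $q<L+1$. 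Each vertex therefore loses weight at most $(L+1)/L^2=o(1)$. At termination, every surviving oriented edge $(i,j)$ carries mass at least $L^{-2}-p_t$ on $\{k:-1/L\leq\langle v_{ji},v_{jk}\rangle\leq\lambda_t\}$.

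This tradeoff between the cutoff $1/L$ and the mass $L^{-2}$ is why the statement promises only $(\log\log t)^{-3}$. Your expectation of $\Omega(1)$ masses is a symptom of the missing step.

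Two smaller points also need repair. First, the spread conclusion at $v=v_{ji}$ holds only if $v_{ji}$ survives in $j$'s restricted family. The paper arranges this by keeping $\{x,y\}$ only when $v_{xy}\in X_x$ and $v_{yx}\in X_y$. Second, rather than restricting to failing vertices, do the following:
\begin{enumerate}
\item use the symmetric matching weights $w(\{x,y\})$ on all edges;
\item pass to a core of minimum incident weight $1/16$;
\item apply Theorem~\ref{thm:bhl-cross-packing} to the pair $(P_{kj},P_{jk})$ on every oriented edge to control $\langle v_{ji},v_{k\ell}\rangle$;
\item choose the root by averaging the resulting good-triple masses.
\end{enumerate}
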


The recursion of Section~\ref{sec:three-levels} uses only this output.
Part~(a) is the level-one cover, parts~(b) and~(c) supply the next-edge
packings for the transitions to levels two and three, and part~(d)
supplies them for the transition to level four.  The correlation range in
(b) determines the level-two root correlation through
$\langle v_i,v_k\rangle=1/4+3\langle v_{ji},v_{jk}\rangle/4$, and the
range in~(c) is the range of the parameter $\gamma$ at level three.

\begin{proof}
Put $s=\sqrt3t$, the common packing threshold, and
$L_s=\log(1/\tail(s))$.  By~\eqref{eq:tail-log-uniform},
$\log D=3(1+\widehat c)(t^2/2+O(\log t))$ and $L_s=3t^2/2+O(\log t)$, so
$\log D=(1+\widehat c+o(1))L_s$.  Because $c-\widehat c>0$ is fixed, for
all sufficiently large $n$ we have
\begin{equation}
 2D\leq\tail(s)^{-(1+c)}.
 \label{eq:degree-cap-inefficiency}
\end{equation}
Every family of neighbor directions occurring below has at most $D$
members, so every such family, and every sign-symmetrized family
$X\cup(-X)$, is at most $c$-inefficient at threshold $s$.  We apply the
two packing theorems with $a=c$ throughout.  Their asymptotic constants are
then uniform, and we fix constants $C_\lambda,C_\mu,c_q>0$ such that every
application below has correlation cutoff at most
\[
 \lambda_t=r_c+\frac{C_\lambda}{\log s},
\]
exceptional mass at most $p_t=\exp[-c_q\log^2s]$, and cross-packing
restriction loss at most $C_\mu/\log s$.  Increasing a cutoff or an
exceptional-mass bound preserves a spread conclusion, so these common values
may be fixed in advance.

\medskip
\noindent\textbf{Packing measures from the matching.}
For a standard Gaussian vector $g$, let $M(g)$ be the maximal matching
chosen by the fixed tie-breaking rule in the graph induced by the selected
vertices $\{x:\langle g,v_x\rangle\geq t\}$.  For every edge $\{x,y\}$ and
every vertex $x$, set
\begin{equation}
 w(\{x,y\})=
 \frac{\Pr[\{x,y\}\in M(g)]}{\tail(t)},
 \qquad
 \mu_x(A)=\sum_{y\in A}w(\{x,y\})
 \quad(A\subseteq N(x)).
 \label{eq:matching-weight}
\end{equation}
The weights are symmetric.  For fixed $x$, the events $\{x,y\}\in M(g)$,
$y\in N(x)$, are disjoint, and each of them requires that both $x$ and $y$
are selected, which by~\eqref{eq:edge-identity} implies
\[
 \langle g,v_{xy}\rangle
 =\frac2{\sqrt3}\left(
   \langle g,v_y\rangle+\frac12\langle g,v_x\rangle\right)
 \geq s.
\]
The family $\{v_{xy}:y\in A\}$ lies in $v_x^\perp$, so the event that one
of these directions crosses $s$ is independent of the event
$\langle g,v_x\rangle\geq t$.  Hence
\[
 \mu_x(A)
 \leq\frac{\Pr[\langle g,v_x\rangle\geq t\text{ and }
             \exists y\in A:\langle g,v_{xy}\rangle\geq s]}{\tail(t)}
 =\Pr[\exists y\in A:\langle g,v_{xy}\rangle\geq s].
\]
Thus $\mu_x$ is a packing measure at threshold $s$ for the neighbor
directions of $x$, and its total mass $\mu_x(N(x))$ is the conditional
probability that $x$ is matched, and hence deleted, given that it is
selected.  Since the rounding fails, this conditional probability is at
least $1/2$ for at least $n/2$ vertices, and therefore
\begin{equation}
 \sum_{e\in E}w(e)
 =\frac12\sum_{x\in V}\mu_x(N(x))\geq\frac n8.
 \label{eq:initial-weight}
\end{equation}

\medskip
\noindent\textbf{Pruning to a core.}
We use the following standard observation.  If a weighted graph on $N$
vertices has total edge weight at least $rN$, then repeatedly deleting a
vertex of incident weight less than $r$ cannot delete the whole graph,
because the total deleted weight would be less than $rN$.  The remaining
nonempty subgraph has incident weight at least $r$ at every vertex.

Apply this to~\eqref{eq:initial-weight} with $r=1/16$, and call the
resulting graph $H_0=(V_0,E_0)$.  Every vertex of $H_0$ has incident
packing mass at least $1/16$, and if $n-|V_0|$ vertices were deleted, the
weight left in $H_0$ is at least
\[
 \frac n8-\frac{n-|V_0|}{16}=\frac{n+|V_0|}{16}\geq\frac{|V_0|}{8}.
\]
For every $x\in V_0$, apply Theorem~\ref{thm:bhl-packing-spread} to the
packing $\mu_x$ on the neighbor directions of $x$ in $H_0$, and let $X_x$
be the restricted family.  Keep an edge $\{x,y\}$ of $H_0$ only if
$v_{xy}\in X_x$ and $v_{yx}\in X_y$.  The total deleted weight is at most
$|V_0|/\log s$, so the remaining graph $H_1$ has total weight at least
$|V_0|(1/8-o(1))$.

Put $L=\log\log t$.  Starting from $H_1$, perform the following pruning
until no rule applies; here $N(j)$ denotes the neighborhood of $j$ in the
current graph, so the mass in the rule is recomputed after every deletion.
If an oriented edge $(i,j)$ satisfies
\[
 \mu_j\{k\in N(j):\langle v_{ji},v_{jk}\rangle\geq-1/L\}<L^{-2},
\]
delete every edge $\{j,k\}$ with $k$ in this set.  Fix a vertex $j$, and
list as $i_1,\ldots,i_q$ the neighbors that trigger this rule at $j$, in
chronological order.  For $a<b$, the edge $\{j,i_b\}$ survived the rule
triggered by $i_a$, so $\langle v_{ji_a},v_{ji_b}\rangle<-1/L$.  Hence
\[
 0\leq\left\|\sum_{a=1}^qv_{ji_a}\right\|^2
 <q-\frac{q(q-1)}L,
\]
so $q<L+1$.  Each trigger deletes incident weight less than $L^{-2}$, so
this stage deletes total weight $O(|V_0|/L)$.  Denote the final graph by
$G'$.  For all large $t$, it has total weight at least $|V_0|/9$ and is
therefore nonempty.

At termination, every oriented edge $(i,j)$ of $G'$ carries the packing
\begin{equation}
 P_{ij}=\bigl(\{v_{jk}:k\in I_{ij}\},\ \mu_j|_{I_{ij}}\bigr),
 \qquad
 I_{ij}=\{k\in N_{G'}(j):
  -1/L\leq\langle v_{ji},v_{jk}\rangle\leq\lambda_t\}.
 \label{eq:oriented-packing}
\end{equation}
Indeed, since no pruning rule applies, the mass above the lower cutoff
$-1/L$ is at least $L^{-2}$, while the spread property of $X_j$ puts mass
at most $p_t$ above $\lambda_t$.  Thus $P_{ij}$ has mass at least
$L^{-2}-p_t\geq1/(2L^2)$.  We write $v_{jk}\in P_{ij}$ for $k\in I_{ij}$,
and a restriction of $P_{ij}$ means the corresponding restriction of both
its indexed family and its measure.  Note that the condition defining
$I_{ij}$ is symmetric in $i$ and $k$.

\medskip
\noindent\textbf{Making successive packings compatible.}
For every oriented edge $(k,j)$ of $G'$, apply
Theorem~\ref{thm:bhl-cross-packing} with first packing $P_{kj}$ and second
packing $P_{jk}$; thus the theorem is invoked in both orientations of each
edge, and~\eqref{eq:degree-cap-inefficiency} verifies its cardinality
hypothesis.  We obtain a restriction $P'_{kj}\subseteq P_{kj}$ of mass at
least
\[
 \delta_t:=\frac1{3L^2},
\]
for all large $t$, because $1/(2L^2)-C_\mu/\log s\geq1/(3L^2)$, such that
$P_{jk}$ is $(\lambda_t,p_t,P'_{kj}\cup(-P'_{kj}))$-spread.  Say that $k$
is \emph{good for} $(i,j)$ when $v_{ji}\in P'_{kj}$.  For every good triple
define
\[
 U_{ijk}=\{\ell:v_{k\ell}\in P_{jk},\
                  |\langle v_{ji},v_{k\ell}\rangle|\leq\lambda_t\}.
\]
The two signed spread bounds discard mass at most $2p_t$, so $U_{ijk}$ has
packing mass at least $1/(2L^2)-2p_t\geq\delta_t$.  Moreover, whenever $k$
is good for $(i,j)$ and $\ell\in U_{ijk}$,
\begin{align}
 -1/L&\leq\langle v_{ji},v_{jk}\rangle\leq\lambda_t,
 \label{eq:good-first-correlation}\\
 -1/L&\leq\langle v_{kj},v_{k\ell}\rangle\leq\lambda_t,
 \qquad
 |\langle v_{ji},v_{k\ell}\rangle|\leq\lambda_t.
 \label{eq:good-cross-correlations}
\end{align}
The first line holds because $v_{ji}\in P_{kj}$ means $i\in I_{kj}$, which
is the same condition as $k\in I_{ij}$.

\medskip
\noindent\textbf{Choosing the root.}
For an oriented edge $(i,j)$ of $G'$ put
\[
 m_{ij}=\mu_j\{k\in N_{G'}(j):k\text{ is good for }(i,j)\}.
\]
By the symmetry of the weights,
\begin{align*}
 \sum_{i\in V_0}\sum_{j\in N_{G'}(i)}\mu_i(j)\,m_{ij}
 &=\sum_{j\in V_0}\sum_{k\in N_{G'}(j)}\mu_j(k)\,
   \mu_j\{i\in N_{G'}(j):k\text{ is good for }(i,j)\}\\
 &=\sum_{j\in V_0}\sum_{k\in N_{G'}(j)}\mu_j(k)\,\nu(P'_{kj})
 \geq\delta_t\sum_{j\in V_0}\mu_j(N_{G'}(j))
 \geq\frac{2\delta_t}{9}|V_0|,
\end{align*}
where $\nu(P'_{kj})$ denotes the mass of $P'_{kj}$ and the last step uses
the total weight of $G'$.  Hence some vertex $i$ satisfies
$\sum_{j}\mu_i(j)m_{ij}\geq2\delta_t/9$.  Fix this root and let
\[
 J=\{j\in N_{G'}(i):m_{ij}\geq\delta_t/10\},
 \qquad
 J_j=\{k\in N_{G'}(j):k\text{ is good for }(i,j)\}
 \quad(j\in J).
\]
Every $m_{ij}$ is at most one and $\mu_i$ has total mass at most one, so
\[
 \mu_i(J)
 \geq\frac{2\delta_t}{9}-\frac{\delta_t}{10}
 =\frac{11\delta_t}{90}.
\]
This is part~(a), and $\mu_j(J_j)=m_{ij}\geq\delta_t/10$ together with
\eqref{eq:good-first-correlation} is part~(b).

\medskip
\noindent\textbf{The attached packings.}
For each distinct $k\in W=\bigcup_{j\in J}J_j$, choose one $j=j(k)\in J$
with $k\in J_j$, and set $L_k=U_{ij(k)k}$.  Its packing mass is at least
$\delta_t$.  Along the walk $i\to j\to k$, the edge
identity~\eqref{eq:edge-identity} applied twice gives
\[
 v_i=\frac14v_k-\frac{\sqrt3}{4}v_{kj}+\frac{\sqrt3}{2}v_{ji}.
\]
Since $v_{k\ell}\perp v_k$, equations~\eqref{eq:good-cross-correlations}
give
\[
 \langle v_i,v_{k\ell}\rangle
 =-\frac{\sqrt3}{4}\langle v_{kj},v_{k\ell}\rangle
  +\frac{\sqrt3}{2}\langle v_{ji},v_{k\ell}\rangle
 \in\left[-\frac{3\sqrt3}{4}\lambda_t,\
          \frac{\sqrt3}{2}\lambda_t+\frac{\sqrt3}{4L}\right],
\]
which is the range in part~(c).  Finally, every $\ell\in L_k$ satisfies
$v_{k\ell}\in P_{jk}$, so $\{k,\ell\}$ is an edge of $G'$.  For each
distinct $\ell\in\bigcup_kL_k$ choose one such $k=k(\ell)$ and let
$M_\ell=I_{k(\ell)\ell}$, so that $\{v_{\ell m}:m\in M_\ell\}$ carries the
packing $P_{k(\ell)\ell}$ of~\eqref{eq:oriented-packing}, of mass at least
$1/(2L^2)$.  This is part~(d).

All displayed masses are at least $\delta_t/10=1/(30L^2)\geq L^{-3}$ for
all large $t$, every family has at most $D$ members, and all error terms
are $O(1/L)+O(1/\log s)=o(1)$, uniformly, because $c$ is fixed.
\end{proof}

\section{The three levels}
\label{sec:three-levels}

This section runs the recursion of Lemma~\ref{lem:recursive-cover-step}
through its three transitions and proves the bounded-degree guarantee,
Proposition~\ref{prop:failed-rounding-progress}.  We first fix the
constants and write out the recursion formulas at each level, then state
the numerical inequalities that the argument needs, and finally give the
proof.

\subsection{Constants and the recursion at each level}
\label{subsec:constants}

The constants are
\begin{equation}
\begin{aligned}
 c&=\frac{207}{1000},
 &\widehat c&=\frac{2067}{10000},
 &K_c&=3(1+c),\\
 r_c&=\frac{c}{1+c},
 &\rho_c&=\sqrt{\frac{c}{1+c}},
 &\tau&=\frac{\sqrt3}2,\\
 \mu_2^*&=\frac{2291}{2000}
 \left(\frac{2009199}{10^6}\right)^2,
 &A_0&=\frac{3191}{2500}.
\end{aligned}
\label{eq:analytic-parameters}
\end{equation}
For later reference put
\[
 R(x)=\frac{4+3x}{3+3x},
 \qquad
 \Xi(x)=\frac{5+3x}{3+3x}.
\]
The constant $\widehat c<c$ determines the Gaussian threshold and the
final exponent, while the recursion is run with $c$.  The gap between them
is what makes the degree bound $D$ strictly smaller than
$\exp(K_ct^2/2)$, as required by Lemma~\ref{lem:recursive-cover-step}.

\paragraph{The one-step map.}
For a margin $\omega\geq0$, Lemma~\ref{lem:recursive-cover-step} takes the
data $(\Lambda,\mu,h,\eta)$ of the current level and returns the data
$(\Lambda'_\omega,h')$ of the next level through the formulas
\eqref{eq:step-surplus-room-loss}--\eqref{eq:step-recursion}.  We now
write these formulas out at each level.  Every quantity below with a
subscript $\omega$ is defined by exactly the formulas of the lemma; the
quantities without a subscript are their values at $\omega=0$, which are
the ones the certificate in Appendix~\ref{sec:numerics} evaluates.

\paragraph{Level one.}
The root is $i$, and the level-one endpoints are its neighbors $j$.  By
\eqref{eq:edge-identity},
\[
 h_1=-\frac12,
 \qquad
 r_1=\tau,
 \qquad
 z_j=v_{ij},
 \qquad
 \Lambda_1=\sqrt3,
 \qquad
 \mu_1=K_c,
 \qquad
 \Sigma_1=\mu_1-\Lambda_1^2=3c.
\]
The threshold $\Lambda_1t=\sqrt3t$ is the packing threshold of
Proposition~\ref{prop:structural-input}(a), and $\mu_1=K_c$ comes from the
degree bound $|J|\leq D\leq\exp(K_ct^2/2)$.

\paragraph{Transition to level two.}
For a walk $i\to j\to k$, put $\alpha=\langle v_{ji},v_{jk}\rangle$.  Since
$v_i=-v_j/2+\tau v_{ji}$ and $v_{jk}\perp v_j$, the next-edge correlation
is $\eta_2=\langle v_i,v_{jk}\rangle=\tau\alpha$, and taking the inner
product of $v_i$ and $v_k=-v_j/2+\tau v_{jk}$ gives the exact identity
\begin{equation}
 \beta:=\langle v_i,v_k\rangle=\frac14+\frac34\alpha,
 \qquad\text{equivalently}\qquad
 \alpha=\frac{4\beta-1}{3}.
 \label{eq:alpha-beta}
\end{equation}
In particular, $\alpha$ depends only on the endpoint $k$.  By
Proposition~\ref{prop:structural-input}(b), the limiting value of $\alpha$
lies in $[0,r_c]$, so $\beta$ lies in
\[
 B_c=\left[\frac14,\ \frac14+\frac{3r_c}{4}\right].
\]
The formulas of the lemma become
\begin{equation}
\begin{aligned}
 \theta_2(\alpha)&=c-(1+c)\alpha^2,
 &C_2&=\frac{\tau(1-\alpha)}{2},
 &h_2&=\beta,\\
 \chi_{2,\omega}(\alpha)&=\sqrt{\frac{(\theta_2(\alpha)+\omega)(3c+\omega)}{1+c}},
 &\Lambda_{2,\omega}(\beta)
 &=\frac{\frac34(1-\alpha)+\frac32-\tau\chi_{2,\omega}(\alpha)}
        {\sqrt{1-\beta^2}}.
\end{aligned}
\label{eq:level-two-threshold}
\end{equation}
At $\omega=0$, writing $\ell_2(\alpha)=\sqrt{c\,\theta_2(\alpha)/(1+c)}$,
\[
 \Lambda_2(\beta)
 =\frac{9-3\alpha-6\ell_2(\alpha)}{4\sqrt{1-\beta^2}}.
\]

\paragraph{Level two.}
The level-two endpoints have root correlation $h_2=\beta+o(1)$ and cover
threshold $\Lambda_{2,\omega}(\beta)t-o(t)$.  If their candidate set is
below its cutoff, Lemma~\ref{lem:cardinality-split} gives the cardinality
coefficient $\mu_2=\mu_2^*+\omega$, and the surplus is
\begin{equation}
 \Sigma_{2,\omega}(\beta)=\mu_2^*+\omega-\Lambda_{2,\omega}(\beta)^2 .
 \label{eq:sigma-two}
\end{equation}

\paragraph{Transition to level three.}
For a level-two endpoint $k$ and a next edge $\{k,\ell\}$, put
$\gamma=\langle v_i,v_{k\ell}\rangle$.  By
Proposition~\ref{prop:structural-input}(c), the limiting value of $\gamma$
lies in
\[
 \Gamma_c=\left[-\frac{3\sqrt3}{4}r_c,\ \frac{\sqrt3}{2}r_c\right].
\]
With $r_\beta=\sqrt{1-\beta^2}$, the formulas of the lemma become
\begin{equation}
\begin{aligned}
 \theta_3(\beta,\gamma)&=c-(1+c)\frac{\gamma^2}{r_\beta^2},
 &C_3(\beta,\gamma)&=\frac{r_\beta}{2}+\tau\frac{\beta\gamma}{r_\beta},
 &h_3(\beta,\gamma)&=-\frac\beta2+\tau\gamma,\\
 \chi_{3,\omega}&=\sqrt{\frac{(\theta_3+\omega)(\Sigma_{2,\omega}+\omega)}{1+c}},
 &\Lambda_{3,\omega}(\beta,\gamma)
 &=\frac{\Lambda_{2,\omega}C_3+\frac32-\tau\chi_{3,\omega}}
        {\sqrt{1-h_3^2}}.
\end{aligned}
\label{eq:level-three-threshold}
\end{equation}
On the box $B_c\times\Gamma_c$,
\[
 -\frac18-\frac32r_c\leq h_3\leq-\frac18+\frac34r_c,
\]
and since $r_c<1/4$, in particular $h_3>-1/2$.

\paragraph{Level three.}
The level-three endpoints have root correlation $h_3+o(1)$.  When their
candidate set is large, it is rounded with the vector coloring of
Lemma~\ref{lem:bhl-negative-coloring} if $h_3\leq0$ and with the strict
vector $3$-coloring if $h_3\geq0$.  The KMS loss exponent is
\begin{equation}
 \pi_3(h)=
 \begin{cases}
  \dfrac{1+2h}{3-6h},&h\leq0,\\[4pt]
  \dfrac13,&h\geq0,
 \end{cases}
 \label{eq:third-penalty}
\end{equation}
which is continuous.  Rounding gives $D^{A_0-o(1)}$ if the cardinality
coefficient is at least
\[
 \mu_3^*(h)=K_c\bigl(A_0+\pi_3(h)\bigr).
\]
The cutoff coefficient at level three is
\[
 a_{3,\omega}(\beta,\gamma)
 =\max\left\{\mu_3^*(h_3(\beta,\gamma)),\ \Lambda_{3,\omega}(\beta,\gamma)^2\right\},
\]
where the second term ensures that the surplus in the small case is
nonnegative.  In that case the cardinality coefficient is
$\mu_3=a_{3,\omega}+\omega$, and the surplus is
\begin{equation}
 \Sigma_{3,\omega}(\beta,\gamma)
 =a_{3,\omega}+\omega-\Lambda_{3,\omega}^2
 =\bigl(\mu_3^*(h_3)-\Lambda_{3,\omega}^2\bigr)_++\omega .
 \label{eq:sigma-three}
\end{equation}

\paragraph{Transition to level four.}
For a level-three endpoint $\ell$ and a next edge $\{\ell,m\}$, the
correlation $\eta_4=\langle v_i,v_{\ell m}\rangle$ has no prescribed range
in Proposition~\ref{prop:structural-input}.  Lemma~\ref{lem:discard-overcorrelated}
will let us assume $|\eta_4|\leq\rho_cr_{h_3}+o(1)$, where
$r_{h}=\sqrt{1-h^2}$, and we parametrize
\[
 \eta_4=\xi\rho_cr_{h_3},
 \qquad -1\leq\xi\leq1 .
\]
Since $(1+c)\rho_c^2=c$, the formulas of the lemma become
\begin{equation}
\begin{aligned}
 \theta_4(\xi)&=c(1-\xi^2),
 &C_4(h_3,\xi)&=\frac{r_{h_3}}{2}+\tau h_3\xi\rho_c,
 &h_4(h_3,\xi)&=-\frac{h_3}2+\tau\xi\rho_cr_{h_3},\\
 \chi_{4,\omega}&=\sqrt{\frac{(\theta_4+\omega)(\Sigma_{3,\omega}+\omega)}{1+c}},
 &\Lambda_{4,\omega}(\beta,\gamma,\xi)
 &=\frac{\Lambda_{3,\omega}C_4+\frac32-\tau\chi_{4,\omega}}
        {\sqrt{1-h_4^2}}.
\end{aligned}
\label{eq:level-four-threshold}
\end{equation}
The fourth level is used only for counting: a cover at threshold
$\Lambda_{4,\omega}t-o(t)$ needs at least
$\exp(\Lambda_{4,\omega}^2t^2/2-o(t^2))$ distinct endpoints, and this will
exceed $n$.

\paragraph{Summary.}
The following table lists the data at $\omega=0$.  The type box for the
last two transitions is
\begin{equation}
 \mathcal B_c=B_c\times\Gamma_c\times[-1,1].
 \label{eq:certificate-box}
\end{equation}
\begin{center}
\small
\renewcommand{\arraystretch}{1.25}
\begin{tabular}{c|c|c|c|c|c}
Level $k$ & $h_k$ & $\Lambda_k$ & $\mu_k$ (small case) & $\eta_{k+1}$ &
$\theta_{k+1}$\\
\hline
1 & $-\tfrac12$ & $\sqrt3$ & $K_c$ & $\tau\alpha$ &
 $c-(1+c)\alpha^2$\\
2 & $\beta=\tfrac14+\tfrac34\alpha$ & $\Lambda_2(\beta)$ & $\mu_2^*$ &
 $\gamma$ & $c-(1+c)\gamma^2/r_\beta^2$\\
3 & $h_3=-\tfrac\beta2+\tau\gamma$ & $\Lambda_3(\beta,\gamma)$ &
 $\max\{\mu_3^*(h_3),\Lambda_3^2\}$ & $\xi\rho_cr_{h_3}$ &
 $c(1-\xi^2)$\\
4 & $h_4$ & $\Lambda_4(\beta,\gamma,\xi)$ & & &
\end{tabular}
\end{center}

\subsection{The numerical inequalities}
\label{subsec:margin}

The argument needs the side conditions of
Lemma~\ref{lem:recursive-cover-step} at each transition, two exact
comparisons, and the fourth-level counting inequality.  At $\omega=0$
these are verified by the computer-assisted certificate of
Appendix~\ref{sec:numerics}, summarized in
Proposition~\ref{prop:certificate}.  The next lemma transfers them to a
fixed positive margin.

\begin{lemma}[Numerical inequalities with margin]
\label{lem:margin}
There is a margin $\omega_0\in(0,1/100]$ such that, for every
$\omega\in[0,\omega_0]$ and every $(\beta,\gamma,\xi)\in\mathcal B_c$, the
following hold.
\begin{enumerate}
\item[(i)] The side conditions of Lemma~\ref{lem:recursive-cover-step}
 hold at all three transitions: for the transition to level two,
 $\theta_2\geq0$, $\Sigma_1\geq0$, $C_2\geq0$, $|\beta|<1$, and
 $\Lambda_1C_2-\tau\chi_{2,\omega}>0$; for the transition to level three,
 $\theta_3>0$, $\Sigma_{2,\omega}>0$, $C_3>0$, $|h_3|<1$, and
 $\Lambda_{2,\omega}C_3-\tau\chi_{3,\omega}>0$; and for the transition to
 level four, $\theta_4\geq0$, $\Sigma_{3,\omega}\geq0$, $C_4>0$,
 $|h_4|<1$, and $\Lambda_{3,\omega}C_4-\tau\chi_{4,\omega}>0$.
\item[(ii)] $\mu_2^*/K_c>A_0>R(\widehat c)$.
\item[(iii)] $\Lambda_{4,\omega}(\beta,\gamma,\xi)^2/K_c>\Xi(\widehat c)$,
 with a positive gap independent of $(\beta,\gamma,\xi)$ and $\omega$.
\end{enumerate}
\end{lemma}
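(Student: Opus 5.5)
The plan is a compactness and continuity argument. It transfers the strict inequalities verified at $\omega=0$ by the certificate (Proposition~\ref{prop:certificate}) to all $\omega\in[0,\omega_0]$, uniformly over $\mathcal B_c$.

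Part (ii) does not involve $\omega$ or the box at all. It is a pair of comparisons between explicit rationals and the algebraic number $R(\widehat c)$, so it follows directly from the certificate or from exact arithmetic.

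For (i) and (iii), regard every quantity as a function of $(\beta,\gamma,\xi,\omega)$ on the compact set $\mathcal B_c\times[0,1/100]$. I would first check joint continuity. Each quantity is built from $\beta,\gamma,\xi,\omega$ by sums, products, quotients with denominators bounded away from zero, square roots of nonnegative arguments, and the maps $x\mapsto x_+$ and $\max$. The specific building blocks are:
\begin{itemize}
\item $r_\beta$ and $\sqrt{1-h_3^2}$: continuous because $|\beta|\le 1/4+3r_c/4<1$ and $|h_3|\le 1/8+3r_c/2<1$ on the box, using the explicit bounds on $h_3$.
\item $\chi_{2,\omega}$ and $\chi_{4,\omega}$: the radicands are products of nonnegative factors, since $\theta_2\ge 0$ and $\theta_4=c(1-\xi^2)\ge0$ on the box, and $\Sigma_{3,\omega}\ge\omega\ge0$ by \eqref{eq:sigma-three}.
\item $\chi_{3,\omega}$: continuous provided $\theta_3\ge 0$ and $\Sigma_{2,\omega}\ge 0$. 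At $\omega=0$ the certificate gives these strictly, so they persist for small $\omega$.
\item $\pi_3$: continuous by \eqref{eq:third-penalty}, hence so are $\mu_3^*$ and $a_{3,\omega}$.
\item $\sqrt{1-h_4^2}$: continuous once $|h_4|<1$. This holds at $\omega=0$ by the certificate, and $h_4$ does not depend on $\omega$ at all.
\end{itemize}

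Next, split the conditions in (i) into two kinds. Some do not depend on $\omega$: the signs of $\theta_k$, $C_k$, $|h_k|<1$, and $\Sigma_1=3c$. These hold on the box directly by the certificate, or by hand at level one. The remaining conditions are strict inequalities at $\omega=0$:
\[
\Sigma_{2,0}>0,\qquad \Lambda_{k-1,0}C_k-\tau\chi_{k,0}>0,\qquad \Lambda_{4,0}^2/K_c-\Xi(\widehat c)>0 .
\]
Each is a continuous function on the compact box, so it attains a positive minimum $2\epsilon$.

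Uniform continuity on $\mathcal B_c\times[0,1/100]$ then gives an $\omega_0$ such that each function moves by less than $\epsilon$ when $\omega$ changes from $0$ to any value in $[0,\omega_0]$. Hence every inequality persists with margin $\epsilon$. For (iii) this is exactly the required positive gap, independent of $(\beta,\gamma,\xi)$ and $\omega$. $\Sigma_{3,\omega}\ge0$ holds automatically.

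One subtlety affects the conditions with weak inequalities, $\theta_2\ge0$ and $\theta_4\ge0$. At points where $\theta=0$, the loss $\chi_{k,\omega}$ is not Lipschitz in $\omega$, but it is still continuous, which is all the argument needs.

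The main obstacle is that the argument relies entirely on the certificate delivering strict inequalities on the whole closed box at $\omega=0$. This matters at the boundary: at $\xi=\pm1$, at the edges of $\Gamma_c$, and at $\beta=1/4$. There, quantities such as $\Lambda_{k-1}C_k-\tau\chi_k$ or the level-four counting gap must still be strictly positive, not merely nonnegative. If the certificate only gives them on a grid, I would get strictness on the closed box from the interval-arithmetic enclosures, which bound these functions from below on closed subboxes that cover $\mathcal B_c$.
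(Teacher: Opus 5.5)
Your argument is essentially the paper's: (ii) is exact rational arithmetic, and (i) and (iii) follow by transferring the certificate's strict inequalities at $\omega=0$ to $\mathcal B_c\times[0,\omega_0]$ via compactness and uniform continuity, with $\theta_4\ge0$ and $\Sigma_{3,\omega}\ge0$ holding automatically.

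The one input you leave unsupported is the level-two condition $\Lambda_1C_2-\tau\chi_{2,0}>0$. Proposition~\ref{prop:certificate} does not cover it, and your ``by hand at level one'' remark addresses only the $\omega$-independent conditions. The paper fills this by checking the level-two transition directly for every $\omega\le 1/100$:
\[
\tau\chi_{2,\omega}\le\tau\sqrt{\frac{(c+\frac1{100})(3c+\frac1{100})}{1+c}}<0.30<0.62<\tfrac34(1-r_c)\le\Lambda_1C_2 ,
\]
so no continuity argument is needed at that level.
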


\begin{proof}
The conditions for the first transition are checked by hand.  On
$0\leq\alpha\leq r_c<1/5$ we have $\theta_2\geq c-(1+c)r_c^2>0$,
$\Sigma_1=3c>0$, $C_2>0$, and $\beta<2/5$.  For $\omega\leq1/100$,
\[
 \tau\chi_{2,\omega}
 \leq\tau\sqrt{\frac{(c+\tfrac1{100})(3c+\tfrac1{100})}{1+c}}
 <0.30
 <0.62
 <\frac34(1-r_c)
 \leq\Lambda_1C_2 ,
\]
using $\Lambda_1C_2=\frac34(1-\alpha)$.

Part~(ii) is an exact rational comparison and does not involve $\omega$;
it is verified in Appendix~\ref{sec:numerics}.

For the remaining conditions, observe that $\theta_3$, $C_3$, $h_3$,
$\theta_4$, $C_4$, and $h_4$ do not depend on $\omega$, and that
$\Lambda_{2,\omega}$, $\Sigma_{2,\omega}$, $\Lambda_{3,\omega}$,
$\Sigma_{3,\omega}$, and $\Lambda_{4,\omega}$ are built from them by sums,
products, quotients with denominators bounded away from zero on
$\mathcal B_c$, positive parts, and square roots of nonnegative
quantities.  Hence all quantities in (i) and (iii) are continuous
functions of $(\beta,\gamma,\xi,\omega)$ on the compact set
$\mathcal B_c\times[0,1/100]$.  At $\omega=0$,
Proposition~\ref{prop:certificate} states that every strict inequality in
(i) and (iii) holds on all of $\mathcal B_c$; by compactness, each holds
with a positive minimum margin.  By uniform continuity, there is
$\omega_0\in(0,1/100]$ such that, for $\omega\leq\omega_0$, every one of
these finitely many functions changes by less than half of its margin.
The nonstrict conditions $\theta_4\geq0$ and $\Sigma_{3,\omega}\geq0$
hold by definition.
\end{proof}

From now on, $\omega_0$ denotes the margin of Lemma~\ref{lem:margin}, and
the cutoff $N_t(\mu)$ of~\eqref{eq:cutoff} is defined with
$\omega=\omega_0$.

\subsection{The bounded-degree guarantee}

\begin{proposition}[Progress after a failed rounding]
\label{prop:failed-rounding-progress}
Let $G$ be an $n$-vertex $3$-colorable graph equipped with a solution to
the three-round relaxation of Section~\ref{subsec:overview-sdp}.  Suppose
$\Delta(G)\leq D$, and choose $t$ by
\[
 \tail(t)=D^{-1/[3(1+\widehat c)]},
 \qquad
 D=n^{(3+3\widehat c)/(5+3\widehat c)} .
\]
If the modified KMS rounding fails, then the algorithm described at the
end of this section finds, in randomized polynomial time and with high
probability, an independent set of size $D^{A_0-o(1)}$.
\end{proposition}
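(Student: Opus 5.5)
The plan is to argue by contradiction through the round-or-continue scheme.

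\textbf{Algorithm.} For every root $i$, every level-two bucket, and every ordered pair of level-three buckets, the algorithm lists the candidate sets of walk endpoints. It then builds the coloring of Lemma~\ref{lem:bhl-positive-coloring} with $\vartheta$ the lower end of the level-two bucket, clipped to at least $1/16$, since $\beta\ge1/4-o(1)$. At level three it uses Lemma~\ref{lem:bhl-negative-coloring} when the $h_3$-bucket lies in $h\le0$, and the strict vector $3$-coloring otherwise. Finally it runs Theorem~\ref{thm:kms-extraction} on each candidate set and outputs the largest independent set found. It suffices to show that at least one of these polynomially many calls succeeds for the root and buckets hidden in the proof.

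\textbf{Start and first transition.} Apply Proposition~\ref{prop:structural-input} with the constants $c,\widehat c$ of \eqref{eq:analytic-parameters}. This gives a root $i$, the level-one packing (a), and attached packings (b)--(d), all with masses at least $(\log\log t)^{-3}$, hence subexponential. Since $\log D\le(1+\widehat c)\tfrac32t^2+O(\log t)$, we have $D\le\exp(K_ct^2/2)$. Every $J_p$ therefore satisfies the cardinality hypothesis of Lemma~\ref{lem:recursive-cover-step}, and $|J|\le D$ gives $\mu_1=K_c$.

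Use Lemma~\ref{lem:common-bucket} to fix buckets for $\alpha$ (equivalently $\beta$), losing only $Q^{-O(1)}$ in mass. Then apply Lemma~\ref{lem:recursive-cover-step} with margin $\omega_0$; its side conditions hold by Lemma~\ref{lem:margin}(i). This yields a level-two cover at threshold $\Lambda_{2,\omega_0}(\beta)t-o(t)$ by the good endpoints $S_2^\circ$, which lie inside the candidate set $\widehat S_2$ for this root and bucket.

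\textbf{Level two.} Apply Lemma~\ref{lem:cardinality-split} with $\mu=\mu_2^*$. The coloring has $\kappa=(4+8\vartheta)/(1+8\vartheta)$ with $\vartheta\ge1/4-o(1)$, so $\pi_t=o(1)$. If the candidate set is large, rounding gives $D^{\mu_2^*/K_c-o(1)}\ge D^{A_0-o(1)}$ by Lemma~\ref{lem:margin}(ii). Otherwise $\mu_2=\mu_2^*+\omega_0$.

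\textbf{Level three.} Choose buckets for $\gamma$ again by Lemma~\ref{lem:common-bucket}, using the packings (c) attached to level-two endpoints. The outer packing here is the cover itself, converted into a packing. Apply the step lemma to obtain the level-three cover. Then apply Lemma~\ref{lem:cardinality-split} with $\mu=a_{3,\omega_0}$, which is at least $\mu_3^*(h_3)$. In the large case the loss is $\pi_3(h_3)+o(1)$, using the bucket endpoint for $h_3$; this needs continuity of $\pi_3$ and care with buckets straddling $0$. The yield is then $D^{A_0-o(1)}$.

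\textbf{Level four.} In the small case, invoke Lemma~\ref{lem:discard-overcorrelated} on each packing (d). This removes mass at most $\delta/4$ and lets us assume $|\xi|\le1+o(1)$. Bucket $\xi$ and apply the step lemma a third time. The resulting cover at threshold $\Lambda_{4,\omega_0}t-o(t)$ with subexponential probability has, by the union bound, at least $\exp(\Lambda_{4,\omega_0}^2t^2/2-o(t^2))$ distinct members. By Lemma~\ref{lem:margin}(iii) this is at least $D^{\Xi(\widehat c)+\epsilon}$, up to the relation $\log D=(K_c\widehat{\,}/K_c)\dots$. More precisely, $\log D=(1+\widehat c)\tfrac32t^2(1+o(1))$, so $\exp(\Lambda_4^2t^2/2)=D^{\Lambda_4^2/(3(1+\widehat c))}$, and I must check that the intended comparison $\Lambda_4^2/K_c>\Xi(\widehat c)$ still suffices after converting $K_c$ to $3(1+\widehat c)$. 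Since $c>\widehat c$, this only helps. Because $D^{\Xi(\widehat c)}=n$, we get more than $n$ vertices, a contradiction.

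\textbf{Main obstacle.} The hard part is the bookkeeping rather than any single estimate. One must check that the bucketed hypotheses hold uniformly, in particular that $\eta=\tau\alpha$ at level two uses $r_1=\tau$ correctly. The level-three parameters must lie in the certified box $\mathcal B_c$ up to $o(1)$, which requires extending Lemma~\ref{lem:margin} slightly outside the box by continuity. Finally, the rounding penalty computed from the bucket must match $\pi_3(h_3)$ up to $o(1)$.
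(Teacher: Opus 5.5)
Your proposal is correct and follows essentially the same route as the paper. Both arguments use the structural input, then bucketed applications of Lemma~\ref{lem:recursive-cover-step} to reach levels two and three with a cardinality split at each, and finally a fourth-level union-bound count exceeding $n=D^{\Xi(\widehat c)}$; your $c>\widehat c$ conversion is exactly the paper's use of $\log D\le K_ct^2/2$.

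Two small fixes are needed. First, $\vartheta$ must also be capped at $1/4$, as in the paper's $\vartheta=\min\{1/4,\min_k\langle v_i,v_k\rangle\}$; candidates whose bucket lies below $1/16$ should be skipped rather than clipped up. Second, no extension of Lemma~\ref{lem:margin} outside $\mathcal B_c$ is needed. The paper takes the representatives $\beta,\gamma,\xi$ to be the nearest points of $B_c$, $\Gamma_c$, $[-1,1]$, and the step lemma only asks for agreement up to $o(1)$. Extending in $\xi$ would in any case violate the step lemma's hypothesis $\theta\ge0$.
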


Here is the outline.  Proposition~\ref{prop:structural-input} provides the
level-one packing, the next-edge packings for the first two transitions,
and a next-edge packing for every level-three endpoint.  Before each
transition, Lemma~\ref{lem:common-bucket} restricts the relevant packings
to common buckets.  Lemma~\ref{lem:recursive-cover-step} propagates the
cover to level two and then to level three.  At each of these levels, the
good endpoints are contained in a candidate set that the algorithm tries,
and Lemma~\ref{lem:cardinality-split} gives two cases: either the
candidate set is large and KMS rounding finds the desired independent set,
or the good set is small and the recursion continues.  If both levels fall
into the small case, one last application of the recursion produces a
fourth-level cover, which by Lemma~\ref{lem:margin}(iii) would require more
than $n$ distinct endpoints.  Hence one of the two rounding steps succeeds.

\begin{proof}
Throughout, $t=\Theta(\sqrt{\log n})\to\infty$.  By
\eqref{eq:tail-log-uniform},
\begin{equation}
 \log D=3(1+\widehat c)\log\frac1{\tail(t)}
 =3(1+\widehat c)\left(\frac{t^2}2+O(\log t)\right),
 \label{eq:degree-threshold-relation}
\end{equation}
so $\log D=\Theta(t^2)$, and since $\widehat c<c$,
$\log D\leq K_ct^2/2$ for all large $n$.  These are the degree hypotheses
of Lemmas~\ref{lem:recursive-cover-step}, \ref{lem:discard-overcorrelated},
and~\ref{lem:cardinality-split}.  All representatives chosen below lie in
the compact box $\mathcal B_c$, on which every denominator in the
recursion is bounded away from zero, so the $o(\cdot)$ terms in the
conclusions of these lemmas are uniform in the representatives.

\medskip
\noindent\textbf{From level one to level two.}
Apply Proposition~\ref{prop:structural-input} and fix the root $i$ and the
sets $J$, $J_j$, $L_k$, $M_\ell$ that it supplies.  By part~(a), the
vectors $z_j=v_{ij}$, $j\in J$, carry a subexponential packing at
threshold $\Lambda_1t=\sqrt3t$, and $|J|\leq D\leq\exp(K_ct^2/2)$, so
$\mu_1=K_c$.  By part~(b), every $j\in J$ carries the packing of next-edge
directions $\{v_{jk}:k\in J_j\}$ at threshold $\sqrt3t$ with
$|J_j|\leq D$.

Label every $k\in J_j$ by the bucket $B\in\mathcal P_t$ containing
$\beta_k=\langle v_i,v_k\rangle$, which by~\eqref{eq:alpha-beta} depends
only on $k$.  Apply Lemma~\ref{lem:common-bucket} with a trivial outer
label and these inner labels.  It selects one bucket $B$ and restricts the
outer packing to a subset $J^B\subseteq J$ and each inner packing to
$J_j^B=\{k\in J_j:\beta_k\in B\}$, with all masses reduced by a factor at
most $Q$; they remain subexponential.  Part~(b) of the proposition and
\eqref{eq:alpha-beta} give $\beta_k\in[1/4-o(1),\,1/4+3r_c/4+o(1)]$, so
$B$ meets a $o(1)$-neighborhood of $B_c$.  Let $\beta$ be the point of
$B_c$ nearest to $B$, and put $\alpha=(4\beta-1)/3\in[0,r_c]$.  Then
$\beta_k=\beta+o(1)$ and $\langle v_i,v_{jk}\rangle=\tau\alpha+o(1)$
uniformly over the retained pairs, because the buckets have width
$O(1/\log t)$.

Apply Lemma~\ref{lem:recursive-cover-step} with margin $\omega_0$, with
$S^\circ=J^B$, $h=-1/2$, $\Lambda=\sqrt3$, $\mu=K_c$, and
$\eta=\tau\alpha$.  Its side conditions are Lemma~\ref{lem:margin}(i), and
its formulas are~\eqref{eq:level-two-threshold}.  It produces the set of
good level-two endpoints
\[
 S_2^\circ=\bigcup_{j\in J^B}J_j^B,
\]
whose vectors $z_k$ form a subexponential cover at threshold
$\Lambda_{2,\omega_0}(\beta)t-o(t)$, with $h_k=\beta_k\in B$ for every
$k\in S_2^\circ$.

For every root $i$ and bucket $B$, define the candidate set
\[
 \widehat S_2(i,B)=
 \{k\ne i:\text{there is a walk }i\to j\to k\text{ in }G
       \text{ with }\langle v_i,v_k\rangle\in B\},
\]
where repeated endpoints are listed once.  For the selected $B$,
$S_2^\circ\subseteq\widehat S_2(i,B)$; the root is excluded because
$h_i=1\notin B$ for all large $t$.

\medskip
\noindent\textbf{The level-two split.}
Put $\vartheta=\min\{1/4,\min_{k\in\widehat S_2(i,B)}\langle v_i,v_k\rangle\}$.
Every vertex of the candidate set has root correlation in $B$, so
$\vartheta=1/4-o(1)$, and eventually $1/16\leq\vartheta\leq1/4$.
Lemma~\ref{lem:bhl-positive-coloring} gives a vector $\kappa$-coloring of
$G[\widehat S_2(i,B)]$ with
\[
 \kappa=\frac{4+8\vartheta}{1+8\vartheta}=2+o(1),
 \qquad
 1-\frac2\kappa=o(1).
\]
Apply Lemma~\ref{lem:cardinality-split} with $\widehat S=\widehat S_2(i,B)$,
$S^\circ=S_2^\circ$, $\mu=\mu_2^*$, $\pi_t=0$, and $K=3$.  In its first
case, KMS rounding finds an independent set of size
$D^{\mu_2^*/K_c-o(1)}\geq D^{A_0}$ for large $n$, because
$\mu_2^*/K_c>A_0$ by Lemma~\ref{lem:margin}(ii), and we are done.  We may
therefore assume its second case:
\[
 \log|S_2^\circ|<\frac{\mu_2^*+\omega_0}{2}t^2 .
\]

\medskip
\noindent\textbf{From level two to level three.}
By the remark after the definition of packing measures, the cover on
$S_2^\circ$ carries a packing measure of the same subexponential mass.
By Proposition~\ref{prop:structural-input}(c), every $k\in S_2^\circ\subseteq W$
carries the packing $\{v_{k\ell}:\ell\in L_k\}$ at threshold $\sqrt3t$
with $|L_k|\leq D$, and $\gamma_{k\ell}=\langle v_i,v_{k\ell}\rangle$ lies
within $o(1)$ of $\Gamma_c$.  Label each $\ell\in L_k$ by the bucket
$B'\in\mathcal P_t$ containing $\gamma_{k\ell}$ and apply
Lemma~\ref{lem:common-bucket}.  It selects one bucket $B'$, restricts the
outer packing to $S_2^{\circ,B'}\subseteq S_2^\circ$ and each inner packing
to $L_k^{B'}=\{\ell\in L_k:\gamma_{k\ell}\in B'\}$, and keeps all masses
subexponential.  Let $\gamma$ be the point of $\Gamma_c$ nearest to $B'$;
then $\gamma_{k\ell}=\gamma+o(1)$ uniformly.

Apply Lemma~\ref{lem:recursive-cover-step} with margin $\omega_0$, with
$S^\circ=S_2^{\circ,B'}$, $h=\beta$, $\Lambda=\Lambda_{2,\omega_0}(\beta)$,
$\mu=\mu_2^*+\omega_0$, and $\eta=\gamma$.  The cardinality hypothesis is
the second case above, the side conditions are Lemma~\ref{lem:margin}(i),
and the formulas are~\eqref{eq:sigma-two}
and~\eqref{eq:level-three-threshold}.  It produces the good level-three
endpoints
\[
 S_3^\circ=\bigcup_{k\in S_2^{\circ,B'}}L_k^{B'},
\]
whose vectors $z_\ell$ form a subexponential cover at threshold
$\Lambda_{3,\omega_0}(\beta,\gamma)t-o(t)$, with
$h_\ell=h_3(\beta,\gamma)+o(1)$.

For every root $i$ and buckets $B,B'$, define the candidate set
\[
 \widehat S_3(i,B,B')=
 \{\ell\ne i:\text{there is a walk }i\to j\to k\to\ell\text{ in }G
   \text{ with }\langle v_i,v_k\rangle\in B,\
   \langle v_i,v_{k\ell}\rangle\in B'\},
\]
with repeated endpoints listed once.  For the selected buckets,
$S_3^\circ\subseteq\widehat S_3(i,B,B')$.  Moreover, the exact identity
$\langle v_i,v_\ell\rangle=-\langle v_i,v_k\rangle/2+\tau\langle v_i,v_{k\ell}\rangle$
from~\eqref{eq:step-endpoint-correlation} shows that every vertex of the
candidate set has root correlation $h_3(\beta,\gamma)+O(1/\log t)$,
whichever walk reached it.

\medskip
\noindent\textbf{The level-three split.}
Put $H=\max_{\ell\in\widehat S_3(i,B,B')}\langle v_i,v_\ell\rangle
=h_3(\beta,\gamma)+o(1)$.  Since $h_3>-1/2$ on the box, $H>-1/2$ for all
large $t$.  If $H\leq0$, Lemma~\ref{lem:bhl-negative-coloring} with $h=H$
gives a vector $\kappa$-coloring of $G[\widehat S_3(i,B,B')]$ with
$\kappa=(3-6H)/(1-4H)\in[2,3]$ and $1-2/\kappa=(1+2H)/(3-6H)=\pi_3(H)$.
If $H>0$, the strict vector $3$-coloring is a vector $3$-coloring with
$1-2/\kappa=1/3=\pi_3(H)$.  In both cases $1-2/\kappa=\pi_3(H)
=\pi_3(h_3(\beta,\gamma))+o(1)$, because $\pi_3$ is continuous.  Apply
Lemma~\ref{lem:cardinality-split} with $\widehat S=\widehat S_3(i,B,B')$,
$S^\circ=S_3^\circ$, $\mu=a_{3,\omega_0}(\beta,\gamma)$,
$\pi_t=\pi_3(h_3(\beta,\gamma))$, and $K=3$.  In its first case, KMS
rounding finds an independent set of size at least
\[
 D^{a_{3,\omega_0}/K_c-\pi_3(h_3)-o(1)}
 \geq D^{\mu_3^*(h_3)/K_c-\pi_3(h_3)-o(1)}
 =D^{A_0-o(1)},
\]
and we are done.  We may therefore assume its second case:
\[
 \log|S_3^\circ|<\frac{a_{3,\omega_0}(\beta,\gamma)+\omega_0}{2}t^2,
\]
so the cardinality coefficient for the last transition is
$a_{3,\omega_0}+\omega_0$ and the surplus is
$\Sigma_{3,\omega_0}(\beta,\gamma)$ from~\eqref{eq:sigma-three}.

\medskip
\noindent\textbf{From level three to level four.}
By Proposition~\ref{prop:structural-input}(d), every $\ell\in S_3^\circ$
carries the packing $\{v_{\ell m}:m\in M_\ell\}$ at threshold $\sqrt3t$
with $|M_\ell|\leq D$ and subexponential mass $\delta$, the same for all
$\ell$.  Since $|h_\ell|\leq|h_3|+o(1)$ is bounded away from one on the
box, $r_\ell\geq r_{\min}$ for a fixed $r_{\min}>0$.
Lemma~\ref{lem:discard-overcorrelated} gives $\epsilon_t\to0$ such that
removing from every $M_\ell$ the indices $m$ with
$|\eta_{\ell m}|\geq\rho_cr_\ell+\epsilon_t$ leaves mass at least
$3\delta/4$.  For the retained pairs put
$\xi_{\ell m}=\eta_{\ell m}/(\rho_cr_\ell)$, so that
$|\xi_{\ell m}|\leq1+\epsilon_t/(\rho_cr_{\min})=1+o(1)$.  Label each
retained $m$ by the bucket of a fixed partition of $[-2,2]$ into $Q$
intervals containing $\xi_{\ell m}$, and apply
Lemma~\ref{lem:common-bucket}.  It selects one bucket, restricts the outer
packing on $S_3^\circ$ and the inner packings, and keeps all masses
subexponential.  Let $\xi\in[-1,1]$ be the point nearest to that bucket.
Then $\xi_{\ell m}=\xi+o(1)$, and since $r_\ell=r_{h_3}+o(1)$,
\[
 \eta_{\ell m}=\xi\rho_cr_{h_3}+o(1)
\]
uniformly over the retained pairs.

Apply Lemma~\ref{lem:recursive-cover-step} with margin $\omega_0$, with
$S^\circ$ the restricted level-three set, $h=h_3(\beta,\gamma)$,
$\Lambda=\Lambda_{3,\omega_0}(\beta,\gamma)$,
$\mu=a_{3,\omega_0}(\beta,\gamma)+\omega_0$, and
$\eta=\xi\rho_cr_{h_3}$.  The side conditions are
Lemma~\ref{lem:margin}(i), and the formulas are
\eqref{eq:level-four-threshold}.  It produces a set $M^\circ$ of
fourth-level endpoints whose vectors $z_m$ form a cover at threshold
$\Lambda_{4,\omega_0}(\beta,\gamma,\xi)t-o(t)$ with subexponential
probability $\delta_4$.

\medskip
\noindent\textbf{Counting the fourth level.}
The union bound and~\eqref{eq:tail-log-uniform} give
\[
 \delta_4
 \leq|M^\circ|\,\tail\bigl(\Lambda_{4,\omega_0}t-o(t)\bigr),
 \qquad\text{hence}\qquad
 \log|M^\circ|
 \geq\frac{\Lambda_{4,\omega_0}^2}{2}t^2-o(t^2)
 \geq\frac{\Lambda_{4,\omega_0}^2}{K_c}\log D-o(\log D),
\]
using $\log D\leq K_ct^2/2$ and $\log D=\Theta(t^2)$.  By
Lemma~\ref{lem:margin}(iii), $\Lambda_{4,\omega_0}^2/K_c\geq\Xi(\widehat c)+g$
for a fixed $g>0$, so for all large $n$
\[
 \log|M^\circ|>\Xi(\widehat c)\log D=\log n,
\]
because $D=n^{(3+3\widehat c)/(5+3\widehat c)}$ means $n=D^{\Xi(\widehat c)}$.
But $M^\circ$ is a set of distinct vertices of $G$, so $|M^\circ|\leq n$.
This contradiction shows that the second case cannot occur at both levels,
so one of the two rounding steps finds an independent set of size
$D^{A_0-o(1)}$.

\medskip
\noindent\textbf{The algorithm.}
The case split above is used only in the proof.  The algorithm does not
test whether the rounding fails, and it does not construct the packing
measures, the pruned subgraph, the buckets selected in the proof, or the
good endpoint sets.  It performs the following operations.  Whenever a
displayed vector must be normalized, the algorithm checks that its squared
norm is positive and otherwise skips that candidate; the candidates
selected in the proof have nonzero vectors by
Lemmas~\ref{lem:bhl-negative-coloring} and~\ref{lem:bhl-positive-coloring}.
\begin{enumerate}
\item Solve the three-round relaxation.  Repeat the modified KMS rounding
on the whole graph at threshold $t$, retaining every independent set
produced.

\item For every root $i$ and every bucket $B\in\mathcal P_t$, enumerate
all walks $i\to j\to k$, merge repeated endpoints, and construct
$\widehat S_2(i,B)$.  Skip an empty set.  Otherwise put
$\vartheta_{i,B}=\min\{1/4,\min_{k}\langle v_i,v_k\rangle\}$ over
$k\in\widehat S_2(i,B)$; if $\vartheta_{i,B}<1/16$, skip this candidate.
Otherwise normalize the vectors
\[
 y_{\{(i,\mathsf R),(k,\mathsf R)\}}
 -y_{\{(i,\mathsf R),(k,\mathsf G)\}}
 -y_{\{(i,\mathsf R),(k,\mathsf B)\}},
 \qquad k\in\widehat S_2(i,B),
\]
of Lemma~\ref{lem:bhl-positive-coloring}, and run KMS rounding on
$G[\widehat S_2(i,B)]$ with this vector coloring.

\item For every root $i$ and every ordered pair of buckets
$(B,B')\in\mathcal P_t^2$, enumerate all walks $i\to j\to k\to\ell$, merge
repeated endpoints, and construct $\widehat S_3(i,B,B')$.  Skip an empty
set.  Otherwise let $H=\max_\ell\langle v_i,v_\ell\rangle$ over
$\ell\in\widehat S_3(i,B,B')$; if $H<-1/2$, skip this candidate.  If
$-1/2\leq H\leq0$, normalize the vectors
\[
 y_{\{(i,\mathsf R),(\ell,\mathsf G)\}}
 -y_{\{(i,\mathsf R),(\ell,\mathsf B)\}},
 \qquad \ell\in\widehat S_3(i,B,B'),
\]
of Lemma~\ref{lem:bhl-negative-coloring}, which form a vector
$(3-6H)/(1-4H)$-coloring; if $H>0$, use the strict vector $3$-coloring.
Run KMS rounding on $G[\widehat S_3(i,B,B')]$ with the chosen vector
coloring.

\item Return the largest independent set found.
\end{enumerate}
All calls are made regardless of which case occurs in the analysis.  If
the rounding succeeds, the first step finds a large independent set.  If
it fails, the analysis above shows that one candidate from the second or
third step has the required expected size.  There are at most
$1+nQ+nQ^2$ candidate families, each built by enumerating walks of length
at most three, so the total running time is polynomial.

Finally, suppose that one KMS call has expected output size at least
$M\geq1$; its output always has size at most $n$, so
$\Pr[|I|\geq M/2]\geq M/(2n)\geq1/(2n)$.  Repeating every call $O(n\log n)$
times with a sufficiently large constant makes the probability that any
qualifying call misses half of its expected size smaller than any
prescribed inverse power of $n$, and a union bound over the
$1+nQ+nQ^2$ families gives the high-probability guarantee.
\end{proof}

\section{Proof of the coloring theorem}
\label{sec:coloring-proof}

We first state the two known results with which we combine the
bounded-degree guarantee.  Following \citet*{BHL26}, an algorithm
\emph{makes progress toward an $f(n)$-coloring} if it performs at least
one of the following operations: it finds an independent set of size
$\Omega(n/f(n))$; it finds an independent set $S$ with $|N(S)|=O(f(n)|S|)$;
or it identifies two distinct vertices that must receive the same color in
every proper $3$-coloring.  The combination theorem below shows that any
one of these operations is enough for the coloring algorithm.

The first result is due to \citet*{KTY24} and is restated as
Theorem~1.2 of \citet*{BHL26}.  We use $d$ for the minimum-degree bound
to avoid conflict with our notation $\Delta(G)$ for the maximum degree.

\begin{mainresultbox}
\begin{theorem}[Dense-graph progress theorem \citep{KTY24}]
\label{thm:bhl-dense-progress}
Let $G$ be a $3$-colorable graph on $n$ vertices with minimum degree
$d>n^{1/2}$.  In polynomial time, one can make progress toward a
$k$-coloring for some
\[
 k=n^{o(1)}\sqrt{\frac nd}.
\]
\end{theorem}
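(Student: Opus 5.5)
This theorem is quoted from \citet*{KTY24}, restated as Theorem~1.2 of \citet*{BHL26}, and the paper uses it as a black box. The plan is therefore to reconstruct the standard argument behind it rather than derive it from anything proved here. The underlying idea is the one of \citet*{BK97} and \citet*{KT17}: in a $3$-colorable graph of minimum degree $d$, every neighborhood $N(v)$ is $2$-colorable, so it induces a bipartite graph.

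\textbf{Step one: coloring inside neighborhoods.} For each vertex $v$, two-color $G[N(v)]$ exactly; connected components of a bipartite graph have a unique bipartition up to swapping sides. If some component has two vertices forced into the same class, those two vertices must receive the same color in every proper $3$-coloring. Whenever such a pair can be exhibited, the third progress operation applies. Otherwise, the neighborhoods decompose into many small components, and each side of each component is a candidate independent set.

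\textbf{Step two: finding a large set or a small neighborhood.} Next use the SDP, together with the large minimum degree, to find either an independent set of size $\Omega(n/k)$ or an independent set $S$ with $|N(S)|=O(k|S|)$. The heuristic is that $n$ vertices each having $d$ neighbors, all neighborhoods being bipartite, forces structure at scale $\sqrt{n/d}$. Concretely, one side of a neighborhood has size at least $d/2$. If a random independent set extracted from a pair of neighborhoods has a neighborhood much larger than $k$ times its size, one iterates. The $n^{o(1)}$ factor absorbs the losses from the iteration depth, which is why $k=n^{o(1)}\sqrt{n/d}$. The condition $d>n^{1/2}$ guarantees $k<d$, so the target is nontrivial.

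\textbf{Main obstacle.} The hard part is the KTY24 analysis that achieves $\sqrt{n/d}$ rather than the weaker bound of \citet*{BK97}. I would not re-prove it; I would simply cite it, exactly as the paper does, and check only that their notion of ``progress'' coincides with the three operations defined above.
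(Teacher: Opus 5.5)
Your final step is exactly what the paper does: you cite \citet*{KTY24}, restated as Theorem~1.2 of \citet*{BHL26}, and check that its notion of progress is the same three-operation notion. The paper gives no proof of this theorem and uses it only as a black box in the dense case of Theorem~\ref{thm:final-coloring}. Your reconstruction sketch should be labeled as motivation only. Step one, identifying two vertices on the same side of a connected component of $G[N(v)]$, is a valid Blum-type observation. Step two, however, is heuristic and does not establish the $\sqrt{n/d}$ bound, and it also misdescribes the source: the \citet*{KTY24} routine is, to my knowledge, a combinatorial algorithm rather than an SDP argument.
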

\end{mainresultbox}

The second result uses the dense--sparse framework of \citet*{BK97}; the
form below is Proposition~7.1 of \citet*{KT17}, restated as Theorem~2.6 of
\citet*{BHL26}.

\begin{mainresultbox}
\begin{theorem}[Dense--sparse combination theorem \citep{KT17,BHL26}]
\label{thm:bhl-dense-sparse}
Let $\alpha,\beta\in(0,1)$.  Suppose that a polynomial-time algorithm can
make progress toward an $\widetilde O(n^\alpha)$-coloring of every
$3$-colorable $n$-vertex graph $G$ whenever either
$\Delta(G)\leq n^\beta$ or the minimum degree of $G$ is at least $n^\beta$.
Then a proper $\widetilde O(n^\alpha)$-coloring can be found in polynomial
time.
\end{theorem}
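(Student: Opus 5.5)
The statement is quoted from \citet*{KT17} (Proposition~7.1, restated as Theorem~2.6 of \citet*{BHL26}), so in the paper it can simply be cited. If I had to reprove it, I would split it into two layers: a generic ``progress implies coloring'' lemma in the style of \citet*{Blum94}, and a dense--sparse reduction showing that every $3$-colorable graph admits progress. The reduction uses the hypothesis on a bounded-degree piece or on a min-degree piece.

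\textbf{Step 1: progress suffices.} First I would show that if every $3$-colorable graph on $n'\leq n$ vertices admits polynomial-time progress toward an $\widetilde O(n^\alpha)$-coloring, then $G$ can be colored with $\widetilde O(n^\alpha)$ colors. Each of the three operations has a simple effect:
\begin{enumerate}
\item A large independent set gets a fresh color and is deleted. Such sets have size $\Omega(n'/f)$, so this happens $O(f\log n)$ times per halving of the vertex count.
\item Two vertices forced to share a color are merged, reducing $n'$ by one while preserving $3$-colorability.
\item An independent set $S$ with $|N(S)|=O(f|S|)$ is handled by Blum's amortization. We recurse on the graph with $S$ contracted or removed, charging the colors used on $N(S)$ to $S$. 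The standard potential argument shows the total color count stays $\widetilde O(f)$.
\end{enumerate}
Since $f=n^\alpha$ is monotone and each operation loses at most a polylogarithmic factor in the accounting, the result is $\widetilde O(n^\alpha)$ colors.

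\textbf{Step 2: making progress on an arbitrary graph.} Given $G$, I would peel vertices of degree less than $n^\beta$. Let $L$ be the set of such vertices, and let $H=G-L$ after iterating until the remainder has minimum degree at least $n^\beta$.
\begin{enumerate}
\item If the core $H$ has at least $n/2$ vertices, apply the dense hypothesis to $H$. Its minimum degree is at least $n^\beta\geq|H|^\beta$, and its progress is also progress for $G$ up to constants: independent sets of size $\Omega(|H|/f)$ are $\Omega(n/f)$, and merges are valid in $G$. The neighborhood condition needs care, since $N_G(S)$ may exceed $N_H(S)$.
\item Otherwise at least half the vertices have small degree relative to the residual graph. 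I would then restrict to a bounded-degree subgraph on $\Omega(n)$ vertices, for example by orienting edges along the peeling order and selecting a large subset of maximum degree $O(n^\beta)$, and apply the sparse hypothesis there.
\end{enumerate}

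\textbf{Main obstacle.} I expect the hard step to be transferring progress from an induced subgraph back to $G$, specifically the condition $|N(S)|=O(f|S|)$. In the sparse case, vertices of $S$ may have many neighbors outside the chosen subgraph. My first attempt would be to choose the subgraph so that all vertices of $S$ have $G$-degree $O(n^\beta)$ and to charge the outside neighbors to $S$. If that fails, I would follow \citet*{KT17} and \citet*{BK97}: run the recursion on the low-degree part while deferring high-degree vertices to a separate dense phase, so that the neighborhood condition is only ever needed inside the piece where progress was made.
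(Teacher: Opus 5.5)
Citing the theorem is exactly what the paper does. It is quoted from Proposition~7.1 of \citet*{KT17} (Theorem~2.6 of \citet*{BHL26}) and is not reproved.

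The reproof you sketch, however, does not close, and the gap is the one you flag yourself: type-2 progress found in an induced piece is not type-2 progress in $G$.
\begin{itemize}
\item In the core $H$, every peeled vertex may have its few core neighbors inside $S$. Then $N_G(S)$ can contain all of $V\setminus H$ while $|N_H(S)|=O(n^\alpha|S|)$.
\item In the low-degree piece, charging the outside neighbors to $S$ only gives $|N_G(S)|=O(n^\beta|S|)$. This is useless for $\beta>\alpha$ (here $\widehat\beta\approx0.64$ and $\widehat\alpha\approx0.18$).
\end{itemize}
Your fallback points the right way but supplies no mechanism.

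Step~1 also contains a wrong move: an independent set may not be contracted, because it need not be monochromatic in any $3$-coloring. For example, take a triangle $xyz$, a vertex $a$ adjacent to $y,z$, and a vertex $b$ adjacent to $x,z$; contracting $\{a,b\}$ produces a $K_4$. Blum's lemma instead deletes $S\cup N(S)$ without coloring $N(S)$. Every later $S$ avoids the earlier $N(S)$, so the deleted $S$'s together form one independent set of size at least $|R|/(O(f)+1)$, where $R$ is the set of all deleted vertices. All its neighbors lie in $R$, so it can be joined with any later type-1 set from $G-R$.

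The missing idea is to run this accumulation inside the pieces and to export only types~1 and~3. Those two types do transfer from induced subgraphs. Keep $R$ and $I\subseteq R$ as above, and in each round recompute $C$, the $n^\beta$-core of $G-R$ (what remains after repeatedly deleting vertices of degree below $n^\beta$), together with $P=(V\setminus R)\setminus C$.
\begin{itemize}
\item If $|R|\geq n/2$, output $I$.
\item If $|P|\geq n/4$, the peeling order gives $|E(G[P])|<n^\beta|P|$, so at least half of $P$ has degree below $4n^\beta$ in $G[P]$. Run the sparse hypothesis, with the accumulation, on the subgraph induced by these vertices. First apply a constant-factor random subsampling, so that the bound $\Delta\leq m^\beta$ for the current vertex count $m$ keeps holding.
\item Otherwise $|C|\geq n/4$, and the dense hypothesis applies to $G[C]$. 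Types~1 and~3 found there are output directly. On type~2, add $S$ to $I$ and $S\cup N_C(S)$ to $R$.
\end{itemize}
The type-2 step is safe for the following reason. Since $S\subseteq C$ is independent, $N_G(S)\subseteq R\cup P\cup N_C(S)$. Cores only shrink under deletion, so every later core is contained in $C\setminus(S\cup N_C(S))$, which is disjoint from $R\cup P\cup N_C(S)$. Hence no later $S$ meets $N_G(S)$, and $I$ stays independent in $G$ with $|I|\geq|R|/(O(n^\alpha)+1)$.

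Each type-2 round enlarges $R$, so there are at most $n$ rounds. This gives type-1 or type-3 progress on every $3$-colorable graph. With only those two types, the coloring lemma is immediate.
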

\end{mainresultbox}

Proposition~\ref{prop:failed-rounding-progress}, together with direct
Gaussian rounding, will satisfy the low-maximum-degree hypothesis, and
Theorem~\ref{thm:bhl-dense-progress} handles the high-minimum-degree
hypothesis.  It remains to check that the two guarantees have matching
exponents.

\begin{proof}[Proof of Theorem~\ref{thm:final-coloring}]
Recall $\widehat c=2067/10000$ from~\eqref{eq:analytic-parameters}, and
set
\[
 \widehat\alpha=\frac1{5+3\widehat c},
 \qquad
 \widehat\beta=\frac{3+3\widehat c}{5+3\widehat c},
 \qquad
 D=n^{\widehat\beta}.
\]
These definitions were chosen so that
\begin{equation}
 \frac{1-\widehat\beta}{2}
 =\frac{\widehat\beta}{3(1+\widehat c)}
 =\widehat\alpha,
 \qquad
 \widehat\beta\,R(\widehat c)=1-\widehat\alpha,
 \label{eq:parameter-identities}
\end{equation}
with $R(x)=(4+3x)/(3+3x)$ as in Section~\ref{subsec:constants}.  The first
identity balances the dense and sparse cases and sets the Gaussian
threshold below.  The second says that an independent set of size
$D^{R(\widehat c)-o(1)}$ has the target size $n^{1-\widehat\alpha-o(1)}$.

There is one minor exponent issue to handle before applying
Theorem~\ref{thm:bhl-dense-sparse}.  Its hypothesis allows a
polylogarithmic loss, whereas the routines below have an $n^{o(1)}$ loss.
Set
\[
 \alpha_0=0.17794,
 \qquad
 \alpha_*=\frac{\widehat\alpha+\alpha_0}{2}.
\]
Since $\widehat\alpha=10000/56201=0.1779327\ldots<\alpha_0$, we have
$\widehat\alpha<\alpha_*<\alpha_0$ with a fixed positive gap on each side.
We apply the combination theorem with $\alpha=\alpha_*$ and
$\beta=\widehat\beta$.  It suffices to make progress toward an
$O(n^{\alpha_*})$-coloring on graphs with maximum degree at most $D$ and
on graphs with minimum degree at least $D$.

\paragraph{The dense case.}
If the minimum degree is $d\geq D$, then $D>n^{1/2}$, and
Theorem~\ref{thm:bhl-dense-progress} makes progress toward a $k$-coloring
with
\[
 k=n^{o(1)}\sqrt{\frac nd}
  \leq n^{o(1)}\sqrt{\frac nD}
  =n^{\widehat\alpha+o(1)}
  \leq n^{\alpha_*}
\]
for all large $n$, using $(1-\widehat\beta)/2=\widehat\alpha$ and
$\widehat\alpha<\alpha_*$.

\paragraph{The sparse case.}
Suppose $\Delta(G)\leq D$.  Solve the three-round relaxation of
Section~\ref{subsec:overview-sdp}, and choose the Gaussian threshold $t$ by
\[
 \tail(t)=D^{-1/[3(1+\widehat c)]}=n^{-\widehat\alpha},
\]
where the second equality is the first identity
in~\eqref{eq:parameter-identities}.  In particular
$t=\Theta(\sqrt{\log n})\to\infty$.

If the modified KMS rounding succeeds, the argument of
Section~\ref{subsec:overview-rounding} finds, after polynomially many
repetitions, an independent set of size
$\Omega(n\tail(t))=\Omega(n^{1-\widehat\alpha})$, which is at least the
$\Omega(n^{1-\alpha_*})$ required for progress toward an
$O(n^{\alpha_*})$-coloring.

If the rounding fails, Proposition~\ref{prop:failed-rounding-progress}
finds, with high probability, an independent set of size
$D^{A_0-o(1)}$.  Since $A_0>R(\widehat c)$ by Lemma~\ref{lem:margin}(ii),
the second identity in~\eqref{eq:parameter-identities} gives
\[
 D^{A_0-o(1)}
 \geq D^{R(\widehat c)-o(1)}
 =n^{1-\widehat\alpha-o(1)}
 \geq n^{1-\alpha_*}
\]
for all large $n$.  We have therefore made the required progress in every
case, and Theorem~\ref{thm:bhl-dense-sparse} gives a proper
$\widetilde O(n^{\alpha_*})$-coloring.

\paragraph{The final exponent.}
Since $\alpha_*<\alpha_0=0.17794$ by a fixed positive amount, the
polylogarithmic factor in $\widetilde O(n^{\alpha_*})$ is at most
$n^{\alpha_0-\alpha_*}$ for all large $n$, and increasing the implicit
constant handles the finitely many smaller values of $n$.  The number of
colors is therefore $O(n^{0.17794})$.  Amplifying each randomized progress
routine to inverse-polynomial failure probability and taking a union bound
over the polynomially many calls gives the claimed randomized algorithm.
\end{proof}

\appendix

\section{The numerical certificate}
\label{sec:numerics}

This appendix describes the finite computer-assisted verification used by
Lemma~\ref{lem:margin}.  It verifies, at margin $\omega=0$, the side
conditions of the recursion for the transitions to levels three and four,
the two exact comparisons, and the fourth-level counting inequality.  The
statement is the following.

\begin{proposition}[Certified inequalities]
\label{prop:certificate}
At the parameters~\eqref{eq:analytic-parameters}, the following hold for
every $(\beta,\gamma,\xi)$ in the box $\mathcal B_c$
of~\eqref{eq:certificate-box}, with the quantities of
Section~\ref{subsec:constants} evaluated at $\omega=0$.
\begin{enumerate}
\item[(i)] $\theta_3>0$, $\Sigma_2>0$, $C_3>0$, $|h_3|<1$, $\Lambda_3>0$,
 and $\Lambda_2C_3-\tau\sqrt{\theta_3\Sigma_2/(1+c)}>0$; and
 $C_4>0$, $|h_4|<1$, and $\Lambda_3C_4-\tau\sqrt{\theta_4\Sigma_3/(1+c)}>0$.
\item[(ii)] $\mu_2^*/K_c>A_0>R(\widehat c)$.
\item[(iii)] $\Lambda_4(\beta,\gamma,\xi)^2/K_c>\Xi(\widehat c)$.
\end{enumerate}
\end{proposition}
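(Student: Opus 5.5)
We plan to establish Proposition~\ref{prop:certificate} by rigorous interval arithmetic over a finite subdivision of the compact box $\mathcal B_c$, after disposing of the parts that need no computation.

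\textbf{Exact parts.} Part~(ii) is a comparison of explicit rationals, apart from the square $(2009199/10^6)^2$, which is itself rational. Thus $\mu_2^*$, $K_c=3(1+c)=3621/1000$, $A_0=3191/2500$ and $R(\widehat c)=(4+3\widehat c)/(3+3\widehat c)$ are all rational, and we verify $\mu_2^*>K_cA_0$ and $A_0(3+3\widehat c)>4+3\widehat c$ by clearing denominators in exact integer arithmetic. Several conditions in~(i) we also check by hand.
\begin{itemize}
\item $|h_3|<1$ follows from the range $-\tfrac18-\tfrac32r_c\le h_3\le-\tfrac18+\tfrac34r_c$ already recorded, using $r_c<1/4$.
\item $\theta_3>0$ follows since $\gamma^2/r_\beta^2\le(\tfrac{3\sqrt3}{4}r_c)^2/(1-\beta_{\max}^2)$. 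This bound is small compared with $c/(1+c)=r_c$ because $r_c\approx0.17$.
\item $C_3>0$ follows from $r_\beta/2\ge0.45$ against $|\tau\beta\gamma/r_\beta|\le0.1$.
\item $C_4>0$ and $|h_4|<1$ follow from $|\xi\rho_c|\le\rho_c<0.42$ and $|h_3|<0.4$.
\end{itemize}
These facts also guarantee that every square root and denominator in $\Lambda_2,\Lambda_3,\Lambda_4$ is well defined on the box.

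\textbf{Interval certificate.} The remaining conditions are $\Sigma_2>0$, $\Lambda_3>0$, the two loss conditions, and~(iii). For these we partition $B_c\times\Gamma_c\times[-1,1]$ into rational subboxes. On each subbox we evaluate $\Lambda_2$, $\Sigma_2$, $\Lambda_3$, $\mu_3^*(h_3)$, $\Sigma_3=(\mu_3^*-\Lambda_3^2)_+$ and $\Lambda_4$ with outward-rounded interval arithmetic, and we check that the lower endpoint of each target exceeds zero, or exceeds $\Xi(\widehat c)$ in~(iii). Several features keep this manageable.
\begin{itemize}
\item $\Lambda_2$ depends only on $\beta$, so $\Sigma_2>0$ is a one-dimensional check.
\item $\Lambda_3$ and the level-three loss condition depend only on $(\beta,\gamma)$.
\item Only~(iii) and the last loss condition require the full three-dimensional box.
\item The positive part and the two-branch $\pi_3$ are handled by interval-monotone case splits: evaluate both branches and take the hull when a subbox straddles $h_3=0$.
\item We refine a subbox adaptively by bisection whenever an enclosure is inconclusive.
\end{itemize}

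\textbf{Main obstacle.} The hardest step will be~(iii), where the margin is presumably thin. The worst case is likely near $\xi=\pm1$ or at extreme $\gamma$, where $\theta_4$ or $\Sigma_3$ interacts with the $\max$ in $a_3$. Interval dependency, meaning overestimation from repeated occurrences of $\beta,\gamma,\xi$, may force very fine subdivisions there. We would first reduce dependency analytically. For example, $\Lambda_4$ is increasing in $\Lambda_3$ for fixed geometry when $C_4>0$, except through $\chi_4$ via $\Sigma_3$, so it may be possible to bound $\Lambda_4$ using monotone enclosures in $\Lambda_3$. If that fails, we would use mean-value (centered) forms with derivative enclosures on the small boxes near the minimizer, and plain bisection elsewhere.
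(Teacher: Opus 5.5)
Your plan matches the paper's proof: exact rational arithmetic for (ii), and an outward-rounded, adaptively bisected interval computation over all of $\mathcal B_c$ for (i) and (iii), with mean-value (centered) forms near the minimizer and a hull across $h_3=0$ for the two branches of $\pi_3$. The differences are minor. The paper checks all of (i) inside the computation rather than partly by hand. It uses the mean-value form only on boxes away from the nonsmooth loci $\Sigma_3=0$ and $\xi=\pm1$, and the certified minimum is interior rather than at $\xi=\pm1$. The margin it reports above $\Xi(\widehat c)$ is only about $1.1\times10^{-6}$.
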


\begin{proof}
Part~(ii) consists of the exact rational comparisons
\eqref{eq:certificate-level-two-progress}
and~\eqref{eq:certificate-progress-gap} below.  Parts~(i) and~(iii) are
proved by the interval computation described in the rest of this
appendix, whose certified lower bound~\eqref{eq:certificate-primary-bound}
exceeds $\Xi(\widehat c)$.
\end{proof}

\paragraph{What the computation does.}
The verification uses only the primitive rational parameters
\begin{equation}
\begin{aligned}
 c&=\frac{207}{1000},
 &\widehat c&=\frac{2067}{10000},\\
 \mu_2^*&=\frac{9248493504087891}{2000000000000000},
 &A_0&=\frac{3191}{2500}.
\end{aligned}
\label{eq:certificate-parameters}
\end{equation}
The role of the computation is narrow: it checks that the recursion
formulas remain well defined and satisfy the required strict inequalities
for every point of the box.  The proof does not rely on a sampled grid or
on the location suggested by a numerical optimizer.  Instead, interval
arithmetic with outward rounding covers the entire box: each lower endpoint
is rounded downward and each upper endpoint upward, so every computed
interval provably contains the exact range of the expression.  This is an
offline verification of fixed constants, not a step of the coloring
algorithm.

The verification uses $70$-digit decimal arithmetic with directed rounding.
The ancillary source file
\texttt{combined\_interval\_certificate.py}, distributed with this
manuscript, is self-contained and uses only the Python standard library.
From the directory containing that file, run
\begin{verbatim}
python3 combined_interval_certificate.py
\end{verbatim}
The run reported below used CPython~3.12.3 under WSL2.  The SHA--256 digest
of the source file is
\begin{center}
\ttfamily e9ac70c7b421b00aa482a7bcfa13ad9471aa5d869e1f0afa574c1c085952fa6a
\end{center}
The companion file
\texttt{combined\_interval\_certificate\_output.txt} contains the command,
exit status, interpreter and platform details, and the complete terminal
output from that run.  Its SHA--256 digest is
\begin{center}
\ttfamily e20cbf330da572d95bf21349e58a710abc44a82c3c488c01f6cb7f90de74ccb7
\end{center}
A successful run terminates only after every asserted enclosure and exact
rational comparison has passed; it prints the subdivision count, the
maximum depth, the certified lower endpoint, and the rational comparison
gaps.

The source file predates the notation of this paper in two places: it
calls the level-two coefficient $\Lambda_2(\beta)$ by the name
\texttt{eta}, and it calls the fourth-level parameter $\xi$ by the name
\texttt{z}.  The computation represents $\mu_2^*$ as
\[
 \frac{2291}{2000}\,\zeta^2,
 \qquad
 2.0091988\leq\zeta\leq2.0091990.
\]
The resulting interval bounds hold for every $\zeta$ in this closed
interval, so in particular for the rational endpoint
$\zeta=2009199/10^6$ used in~\eqref{eq:analytic-parameters}.  The
separate minimization that originally suggested this interval is not needed
by the proof.  The implementation also uses the relative variables
\[
 \frac{\Sigma_2(\beta)}{\Lambda_2(\beta)^2},
 \qquad
 \frac{\Sigma_3(\beta,\gamma)}{\Lambda_3(\beta,\gamma)^2},
\]
which is a reparametrization of the surplus formulas
\eqref{eq:sigma-two} and~\eqref{eq:sigma-three} at $\omega=0$; no
inequality is changed by it.

For every accepted subbox, the computation verifies
\[
\begin{aligned}
 0&<\frac{\Sigma_2}{\Lambda_2^2}<\frac15,
 &\theta_3&>0,\\
 0&\leq\frac{\Sigma_3}{\Lambda_3^2}<\frac1{10},
\end{aligned}
\]
together with positivity of the combined coefficients
\begin{align*}
 \Lambda_2C_3-\tau\sqrt{\frac{\theta_3\Sigma_2}{1+c}}&>0,
 \\[-2mm]
 \Lambda_3C_4-\tau\sqrt{\frac{\theta_4\Sigma_3}{1+c}}&>0,
\end{align*}
positivity of $\Lambda_3$, and the normalizations $|h_3|<1$ and
$|h_4|<1$.  Because $\Lambda_2>0$ and $\Lambda_3>0$, the two strict
combined inequalities also imply $C_3>0$ and $C_4>0$.  These are the
conditions in part~(i) of the proposition.  The two bounds $1/5$ and
$1/10$ are auxiliary range checks on the relative variables and are not
used by the proof.

\paragraph{The fourth-level bound.}
Most subboxes are settled by substituting the interval for each variable
directly into the formulas, rounding every lower endpoint down and every
upper endpoint up; we call this natural interval evaluation.  Near the
interior minimum, this direct bound can be too wide, so the verification
uses the mean-value inclusion
\[
 f(X)\in f(m)+\nabla f(X)\cdot(X-m),
 \qquad
 f(\beta,\gamma,\xi)=\frac{\Lambda_4(\beta,\gamma,\xi)^2}{K_c},
\]
where $X$ is a subdivision box, $m$ is its midpoint, and $\nabla f(X)$ is
an outward-rounded interval enclosure of the gradient on $X$.  This
mean-value step is used only on smooth interior subboxes.  At $h_3=0$, the
two formulas for the penalty $\pi_3$ agree and their one-sided derivatives
are finite, so the interval hull of those derivatives gives a valid
mean-value enclosure across that boundary.

The boundary $\Sigma_3=0$ is handled differently.  On every box where the
interval for $\mu_3^*(h_3)-\Lambda_3^2$ can be nonpositive, the verifier
does not differentiate the square root of its positive part.  It uses
natural interval evaluation and accepts the box only if the resulting lower
endpoint already proves the strict bound; otherwise it subdivides the box.
The same natural-only rule is used for boxes touching $\xi=\pm1$, so the
square root in $\theta_4$ is never differentiated there.  Assertions at
every accepted leaf check that these boxes followed the natural-only path,
and separate counters for a zero surplus, a box crossing zero, and a
positive surplus make this treatment visible in the terminal summary.

Here is the correspondence between the proposition and the checks in the
source file.
\begin{itemize}
\item The function \texttt{main} proves strict convexity of
$\alpha\mapsto\Lambda_2(1/4+3\alpha/4)$, isolates its critical point on
$0\leq\alpha\leq r_c$, and encloses its minimum.  This auxiliary check
documents how the interval for $\zeta$ was found; the proof itself uses
only the displayed enclosure for $\zeta$.
\item The functions \texttt{partition}, \texttt{audit\_beta\_slab}, and
\texttt{certify\_box} begin with the whole box $\mathcal B_c$ and replace
any unsettled box by two children whose union is the parent.  A successful
run therefore leaves no part of $\mathcal B_c$ unchecked.
\item The functions \texttt{third\_type}, \texttt{calibrated\_cutoff}, and
\texttt{fourth\_raw} evaluate the formulas
\eqref{eq:level-three-threshold}, \eqref{eq:sigma-three}, and
\eqref{eq:level-four-threshold} at $\omega=0$.  The Boolean value
\texttt{valid}, together with the guarded square-root and division
operations, checks every condition in part~(i).
\item The cutoff classification and the acceptance assertions enforce the
natural-only rule at $\Sigma_3=0$.  The comparison with
\texttt{RAW\_ENDPOINT\_CAP} proves the fourth-level bound on every accepted
leaf, and a final assertion checks the minimum over all leaves.
\item Exact \texttt{Fraction} assertions prove the rational comparisons in
\eqref{eq:certificate-level-two-progress} and
\eqref{eq:certificate-progress-gap}, as well as the final coloring exponent.
\end{itemize}
The interval computation visits $969{,}772$ boxes, reaches maximum depth
$17$, and proves
\begin{equation}
 \inf_{(\beta,\gamma,\xi)\in\mathcal B_c}
 \frac{\Lambda_4(\beta,\gamma,\xi)^2}{K_c}
 \geq1.5524720442257730964483.
 \label{eq:certificate-primary-bound}
\end{equation}
This bound is strictly greater than
\[
 \Xi(\widehat c)
 =\frac{5+3\widehat c}{3+3\widehat c}
 =\frac{56201}{36201}
 =1.5524709262175\ldots,
\]
so the certified separation is at least $1.1180082\times10^{-6}$.  This is
part~(iii).

\paragraph{The exact comparisons.}
The remaining comparisons are exact rational arithmetic.  First,
\begin{align}
 \frac{\mu_2^*}{K_c}
 &=\frac{3082831168029297}{2414000000000000},
 \notag\\
 \frac{\mu_2^*}{K_c}-A_0
 &=\frac{1601568029297}{2414000000000000}>0.
 \label{eq:certificate-level-two-progress}
\end{align}
Second,
\begin{align}
 R(\widehat c)
 &=\frac{4+3\widehat c}{3+3\widehat c}
 =\frac{46201}{36201},
 \notag\\
 A_0-R(\widehat c)
 &=\frac{14891}{90502500}>0.
 \label{eq:certificate-progress-gap}
\end{align}
This is part~(ii).

For reference, a compact transcription of the final terminal summary is
\begin{small}
\begin{verbatim}
certified raw fourth lower endpoint:
  1.552472044225773096448329261981444417504422522963453492483245775267468
boxes: 969772; accepted leaves: 485910; maximum depth: 17
h-branch leaves: negative 485653; positive 0; crossing 257
cutoff leaves: zero 32; crossing 1013; positive 484865
evaluation: natural 477090; centered 8820
mu2/K_c: 3082831168029297/2414000000000000
mu2/K_c - A0: 1601568029297/2414000000000000
A0 - R(c_hat): 14891/90502500
final exponent: 10000/56201
\end{verbatim}
\end{small}

\section{Proof of the KMS rounding theorem}
\label{sec:kms-proof}

We prove Theorem~\ref{thm:kms-extraction}.  The argument is the standard
one from \citet*{KMS98}; we include it to make explicit that the bound is
uniform in $\kappa$ near $2$.

\begin{proof}[Proof of Theorem~\ref{thm:kms-extraction}]
Set $\pi=1-2/\kappa\in[0,1]$.  If $\kappa=2$, the vectors at the endpoints
of every edge are antipodal, so thresholding a random Gaussian projection
at zero partitions the vertices into two independent sets almost surely;
the larger has size at least $m/2$.  We may also replace $D$ by
$\max\{1,\Delta(H)\}$, since doing so only strengthens the claimed bound.

Suppose that $\kappa>2$, and set
\[
 u=\sqrt{2\pi\log(2D)},
 \qquad
 a=\sqrt{\frac{2(\kappa-1)}{\kappa-2}}.
\]
For a standard Gaussian vector $g$, select a vertex $i$ when
$\langle g,v_i\rangle\geq u$.  Each vertex is selected with probability
$\tail(u)$.  If adjacent vertices $i,j$ are both selected, then the event
is impossible when $v_i+v_j=0$; otherwise
$\langle v_i,v_j\rangle\leq-1/(\kappa-1)$ gives
$\|v_i+v_j\|\leq\sqrt{2-2/(\kappa-1)}=2/a$, and hence
$\langle g,(v_i+v_j)/\|v_i+v_j\|\rangle\geq au$.  Thus that edge is
selected with probability at most $\tail(au)$.

For $y\geq x\geq0$, the Mills-ratio bound~\eqref{eq:gaussian-mills} gives
$\phi(x)/\tail(x)\geq x$, and integrating this hazard-rate bound yields
\[
 \frac{\tail(y)}{\tail(x)}
 \leq \exp\!\left(-\frac{y^2-x^2}{2}\right).
\]
Since $(a^2-1)u^2/2=\log(2D)$, it follows that
$\tail(au)\leq\tail(u)/(2D)$.  The expected number of selected vertices
minus selected edges is therefore at least
$m\tail(u)-mD\tail(u)/(2D)\geq m\tail(u)/2$.  Deleting one endpoint of
every selected edge leaves an independent set at least this large in
expectation.  For $u\leq1$ use $\tail(u)\geq\tail(1)$, while for $u>1$ the
lower bound in~\eqref{eq:gaussian-mills} gives
$\tail(u)\geq u\phi(u)/(1+u^2)$.  Thus, uniformly in $\pi\in[0,1]$,
\[
 \tail(u)
 =\Omega\!\left(
   \frac{D^{-\pi}}{\sqrt{\log(2D)}}
 \right).
\]
Because $D\leq m$ and $K$ is fixed, this expectation is an inverse-polynomial
fraction of $m$.  Polynomially many repetitions and selection of the largest
outcome therefore give the stated randomized guarantee.
\end{proof}

\bibliographystyle{plainnat}
\bibliography{references}

\end{document}